\documentclass[a4paper]{article}
\usepackage[a4paper, total={5.5in, 9in}]{geometry}
\usepackage{authblk}
\usepackage{bm}
\usepackage{float}
\usepackage[T1]{fontenc}
\usepackage{changepage}
\usepackage[style=default]{caption}
\usepackage[noEnd=true]{algpseudocodex}
\usepackage{parskip}
\usepackage{booktabs}
\usepackage[dvipsnames]{xcolor}
\usepackage[ruled]{algorithm}
\usepackage{graphicx}
\usepackage[shortlabels]{enumitem}
\usepackage{fontawesome}
\usepackage{multirow}
\usepackage{microtype}
\usepackage{wrapfig}
\usepackage{hyperref}
\usepackage{stmaryrd}
\hypersetup{colorlinks=true, citecolor=teal, linkcolor=teal, urlcolor=teal}
\usepackage[osf,sc]{mathpazo}
\usepackage[scaled=0.85]{helvet}
\usepackage{array}
\usepackage{tikzscale}
\usepackage{tikz, tikz-cd}
\usetikzlibrary{positioning}
\usetikzlibrary{chains}
\usetikzlibrary{decorations.pathreplacing}
\usetikzlibrary{calc}
\usepackage[kw]{pseudo}
\pseudoset{
    compact,
    label=\scriptsize\sffamily\color{gray}\arabic*,
    ctfont=\sffamily\color{gray},
    ct-left=\ctfont{\# },
    ct-right=,
    line-height=1.2,
    indent-length=3ex,
    left-margin=0pt,
}

\setlist[description]{parsep=0.1px}

\usepackage[shortlabels]{enumitem}

\usepackage{amsmath,amsfonts,amssymb,amsthm}
\usepackage{cleveref}

\newtheorem{definition}{Definition}
\newtheorem{theorem}{Theorem}

\newtheorem{corollary}{Corollary}
\newtheorem{proposition}{Proposition}
\newtheorem{lemma}{Lemma}
\newtheorem{notation}{Notation}
\theoremstyle{remark}
\newtheorem{remark}{Remark}
\newtheorem{example}{Example}

\usepackage[
style=alphabetic,
backend=biber,
isbn=false,
alldates=year,
sortcites=true,
doi=true,
eprint=false
]{biblatex}
\AtEveryBibitem{
    \clearfield{urlyear}
    \clearfield{urlmonth}
}
\bibliography{./library.bib}

\renewbibmacro*{doi+url}{%
  \iftoggle{bbx:doi}
  {\printfield{doi}%
    \iffieldundef{doi}{}{}}
  {}%
  \newunit\newblock
  \iftoggle{bbx:eprint}
  {\usebibmacro{eprint}%
    \iffieldundef{eprint}{}{}}
  {}%
  \newunit\newblock
  \iftoggle{bbx:url}
  {\usebibmacro{url+urldate}%
    \iffieldundef{url}{}{}}
  {}}

\AtEveryBibitem{\clearfield{month}}
\AtEveryBibitem{\clearfield{day}}
\DeclareNameAlias{sortname}{first-last}



\newcommand{\set}[1]{\left\{#1\right\}}
\newcommand{\setdes}[2]{\left\{#1\;\middle|\;#2\right\}}

\def\Var{\mathbf{V}}

\def\NN{\mathbb{N}}
\def\RR{\mathbb{R}}
\def\CC{\mathbb{C}}
\def\ZZ{\mathbb{Z}}
\def\FF{\mathbb{F}}
\def\DN{\mathbb{D}}
\def\PP{\mathbb{P}}

\newcommand{\irng}[1]{\llbracket #1 \rrbracket}

\def\xx{\mathbf{x}}
\def\ww{\mathbf{w}}

\def\corn{\delta}
\def\tvar{\mathcal{T}}

\DeclareMathOperator{\lt}{lt}

\DeclareMathOperator{\supp}{supp}

\DeclareMathOperator{\trop}{trop}
\DeclareMathOperator{\mv}{MV}
\DeclareMathOperator{\vol}{vol}
\DeclareMathOperator{\svol}{svol}
\DeclareMathOperator{\Conv}{conv}

\DeclareMathOperator{\codim}{codim}

\DeclareMathOperator{\mult}{mult}
\DeclareMathOperator{\canc}{canc}

\newcommand{\sgn}{\mathop{\rm sgn}\nolimits}
\def\melim{m_{\text{elim}}}

\usepackage{xcolor}

\title{Engineered Complete Intersections:\\ Algorithmic Aspects}

\author{Alexander Esterov \thanks{\href{aes@lims.ac.uk}{aes@lims.ac.uk}, \href{alexander.esterov@gmail.com}{alexander.esterov@gmail.com}, London Institute for Mathematical Sciences, London, UK}
  \and Rafael Mohr  \thanks{\href{mailto:rafaeldavid.mohr@kuleuven.be}{rafaeldavid.mohr@kuleuven.be}, Department of Computer Science, KU Leuven, Belgium}
  \and Yulia Mukhina \thanks{\href{mailto:yulia.mukhina@lix.polytechnique.fr}{yulia.mukhina@lix.polytechnique.fr}, LIX, CNRS, École Polytechnique, Palaiseau, France}}

\begin{document}

\maketitle

\begin{abstract}
  \textit{Engineered Complete Intersections} (ECI's) are a class of
  sparse polynomial systems frequently arising in a number of
  contexts, both in pure mathematics (e.g. enumerative geometry) and
  applications (e.g. chemical reaction networks).

  Based on the theoretical results given in
  \cite{esterovEngineeredCompleteIntersections2024,
    esterovEngineeredCompleteIntersections2025}, we give several
  contributions. First we give a new effective technique to
  tropicalize such systems by generalizing the classical notion of
  mixed subdivisions \cite{huber1995} to ECI's with the particular
  goal to efficiently count solutions of square systems of equations
  in ECI form. We further design a \textit{tropical homotopy
    continuation} algorithm for computing such mixed subdivisions,
  inspired by \cite{jensenTropicalHomotopyContinuation2016,
    malajovichComputingMixedVolume2017a,
    daiseyFrameworkGeneralizedTropical2024}. Our techniques can be
  used to numerically solve such systems by coupling them with the
  algorithms introduced in
  \cite{helminckTropicalMethodSolving2024}. Finally, we give an
  algorithm to compute Newton polytopes of eliminants of ECI's. This
  gives a new way to compute, for example, Newton polytopes of
  so-called $A$-discriminants. Coupled with evaluation-interpolation
  paradigms our algorithm gives an efficient approach to compute such
  eliminants.

  We implemented our algorithms in the form of a software package
  which we use to demonstrate their practical feasibility on a range of
  examples.
\end{abstract}

\section{Introduction}
\label{sec:intro}

\paragraph*{Engineered Complete Intersections.}

An \textit{engineered complete intersection} (ECI) is a class of
polynomial systems constructed as follows: Let $A \subset \ZZ^n$ be a finite
multiset and let $V \subset \CC^{k\times A}$ be a matrix with
$k \leq n$ whose columns are indexed by $A$ and whose rank is equal to
$k$. Then the associated ECI is the parametric Laurent
polynomial system
\[f_i:=\sum_{a\in A}c_av_{i,a}\xx^a = 0;\;i=1,\hdots,k,\] where the
$c_a$, $a\in A$, are symbolic coefficients and $\xx$ is a set of $n$ variables.

In words, such an ECI arises by pluging the monomial functions with
indeterminate coefficients $c_a\xx^a$ into the system of linear
equations with matrix $V$. In more invariant terms, it is the
intersection of a general coset of a given subtorus in the algebraic
torus $(\CC^*)^A$ with a given vector subspace of $\CC^A$.

ECI's encode parametric polynomial systems where each equation has
coefficients that linearly depend on the remaining equations in a
fixed way for generic choices of coefficients. For this reason,
0-dimensional ECI's are often referred to as {\em vertically
  parametrized systems} in the literature.

Prominent examples of ECI's are given by systems of critical or
singular points: For example, fixing a set $A\subset \ZZ^n$ and $n$
variables $\xx := \set{x_1,\ldots, x_n}$ as well as
\[f_A := \sum_{a\in A} c_a\xx^a,\] the parametric system whose solutions are the
critical points of the projection of the vanishing locus of $f_A$ in
the torus $(\CC^{*})^n$ to the $x_1$-axis is an ECI, as this class of
systems is given by
\[f_A = x_2\partial_{x_2}f_{A} = \ldots = x_n\partial_{x_n}f_A = 0.\] Notably, ECI's also appear
as polynomial systems of equations encoding steady states of dynamical
systems coming from so-called {\em chemical reaction networks}, see
e.g. \cite{conradiIdentifyingParameterRegions2017,
  dickensteinAlgebraicGeometryTools2020}.

There are many other examples of algebraic varieties and systems of
equations of special interest arising from an ECI, see
e.g. \cite{esterovEngineeredCompleteIntersections2024,
  esterovEngineeredCompleteIntersections2025,
  selyaninNewtonNumbersVanishing2025}.

\paragraph*{Contributions and Related Work.}

Engineered complete intersections are studied from a theoretical point
of view by the first author of this paper in
\cite{esterovEngineeredCompleteIntersections2024,
  esterovEngineeredCompleteIntersections2025}. More precisely,
\cite{esterovEngineeredCompleteIntersections2024,
  esterovEngineeredCompleteIntersections2025} studied ECI's from the
point of view of \textit{tropical geometry} (see
e.g. \cite{maclaganIntroductionTropicalGeometry2021,
  mikhalkinTropicalGeometryIts2006} for an introduction to this
subject). Our contribution is essentially a set of algorithms that can
be used to compute tropical data associated to ECI's.

While tropicalizations of polynomial ideals are in general difficult
to compute (see e.g. \cite{bogartComputingTropicalVarieties2007}),
there are more favorable cases: When $F$ is a \textit{generic} system
of Laurent polynomials with fixed Newton polytopes, its
tropicalization depends only on those Newton polytopes, see
e.g. \cite{sturmfelsEliminationTheoryTropical,
  fultonIntersectionTheoryToric1997}. When $F$ has as many equations
as variables, and thus finitely many solutions, and its
  coefficients are power series of a parameter (or specialized to any
  other valued field) its tropicalization is given by a certain
subdivision of the involved Newton polytopes, called a \textit{mixed
  subdivision}, and encodes in particular the number of solutions of
$F$, given as the {\em mixed volume} of the involved Newton polytopes
by the BKK theorem \cite{bernshtein1975, kouchnirenko1976,
  khovanskii1995}. These subdivisions, together with an algorithm to
compute them, were introduced in the celebrated work \cite{huber1995}
and can be used in particular to numerically
solve such a system $F$ using \textit{homotopy methods}
\cite{sommese2005} in an optimized way. Note that such
systems $F$ are particular instances of ECI's.

In \cite{esterovEngineeredCompleteIntersections2024}, it was shown
that the polyhedral framework established by \cite{bernshtein1975,
  kouchnirenko1976, khovanskii1995} could be generalized to ECI's by
replacing the implied Newton polytopes with \textit{conewise linear
  functions} (depending only on $A$ and $V$, and actually moreover
only on the matroid defined by the columns of $V$). The
tropicalization of a generic instance of the ECI given by $V$ and $A$
then depends only on these conewise linear functions and can be
constructed using the theory established in
\cite{esterovTropicalVarietiesPolynomial2012}. This generalizes the
setting above in the sense that the \textit{support function} of a
polytope is a conewise linear function and, for a generic Laurent
polynomial system with fixed Newton polytopes, the results of
\cite{esterovEngineeredCompleteIntersections2024} then specialize to
the results mentioned above.

When the ECI given by $V$ and $A\subset \ZZ^n$ is zero-dimensional (in the
sense that $V$ consists of $n$ linearly independent rows) then the
root count of a generic instantiation of this ECI coincides with the
{\em mixed volume} of the associated conewise linear functions, a
quantity introduced in \cite{esterovTropicalVarietiesPolynomial2012}.

Our first contribution is to transport the notion of a mixed
subdivision established in \cite{huber1995} for collections of
polytopes to ECI's. We furthermore give an algorithm to compute such
mixed subdivisions of ECI's. Importantly, these mixed subdivisions can
again be used to numerically solve generic instances of
$0$-dimensional ECI's in an optimized way, using the techniques
introduced in \cite{helminckTropicalMethodSolving2024}, and the sum
of the volumes of all the polyhedra contained in this subdivision
coincides with mixed volume of the underlying ECI.

Our algorithm is a tropical version of the homotopy techniques used in
numerical algebraic geometry mentioned above. Such \textit{tropical
  homotopy} algorithms were proposed in
\cite{jensenTropicalHomotopyContinuation2016,
  malajovichComputingMixedVolume2017a} to compute mixed subdivisions
of polytopes, our algorithm can therefore be seen as a generalization
of these techniques to ECI's. As the algorithms given in
\cite{jensenTropicalHomotopyContinuation2016,
  malajovichComputingMixedVolume2017a}, our algorithm works
essentially by tracking a mixed subdivision along a linear path through
an ambient space of dimension equal to the size of $A$. Each time
this path crosses a facet of a so-called {\em mixed cell cone}, the
mixed subdivision is modified computationally.

In \cite{helminckTropicalMethodSolving2024}, another effective
technique to compute the tropicalization of an ECI was proposed: The
authors showed that it is given by the projection of the
\textit{stable intersection} of a tropical linear space with a
collection of tropicalizations of hypersurfaces defined by
binomials. The authors note that a direct implementation of the
computation of the tropicalization of an ECI may be challenging, as
the tropical linear space to be computed may be larger than the final
result. Another result based on tropicalizing linear spaces was
given recently in \cite{feliuRootBoundsVertical2026} which in
particular yields an algorithm to compute root counts of generic
instances of ECI's. This work also gives various bounds on the number
of real solutions of instances of ECI's.

In order to circumvent the direct computation of a tropical linear
space, a tropical homotopy algorithm is given in
\cite{daiseyFrameworkGeneralizedTropical2024,
  daiseyTropicalHomotopyContinuation2025} which is able to treat
stable intersections of tropical linear spaces and hypersurfaces and
thus extends beyond the case of ECI's. This homotopy algorithm relies
theoretically on subdivisions of so-called {\em matroid polytopes}
which become rapidly infeasible to handle computationally. As
mentioned in \cite{daiseyTropicalHomotopyContinuation2025}, this has
the consequence that the mixed cell cones mentioned above cannot be
computed with directly in the context of this technique and that
instead one has to work computationally with {\em chains of flats} of
matroids which provide a certain encoding of tropical linear spaces
\cite[see e.g. Theorem 4.2.6
in][]{maclaganIntroductionTropicalGeometry2021}. We note that no
experimental evaluation of the resulting homotopy algorithm is given
in \cite{daiseyFrameworkGeneralizedTropical2024,
  daiseyTropicalHomotopyContinuation2025}.

As mentioned, our algorithm instead relies on the conewise linear
functions associated to an ECI constructed in
\cite{esterovEngineeredCompleteIntersections2024,
  esterovEngineeredCompleteIntersections2025}. Each constituent
equation $f_i$ of the given ECI gives one such conewise linear
function and is constructed essentially by cancelling terms in $f_i$
using \textit{scalar} multiples of the equations
$f_1,\ldots, f_{i-1}$. This methodological core results in an algorithm
which is able to manipulate the involved mixed cell cones directly and
thus gives a relatively simple procedure to change the considered
mixed subdivision when a facet of a mixed cell cone is crossed, close
to the one given in \cite{jensenTropicalHomotopyContinuation2016}. We
evaluate our algorithm experimentally, showing that it is able to
handle many ECI's of practical interest with relative ease, for
example ECI's coming from the theory of chemical reaction networks.

Furthermore, a much more general class of polynomial systems
(called \textit{nondegenerate upon cancellation}, see Section 3 in
\cite{esterovEngineeredCompleteIntersections2025}) can be tropicalized
similarly by cancelling terms in $f_i$ using \textit{monomial}
multiples of the equations $f_1,\ldots, f_{i-1}$. Our algorithmic techniques
thus potentially generalize to this setting as well.

Our algorithm also allows us to make effective the {\em real
  patchworking} techniques for ECI's introduced in
\cite{esterovEngineeredCompleteIntersections2025} which extend the
classical tropical patchworking techniques given in
\cite{sturmfelsVirosTheoremComplete1994, viro1982gluing}. Combined
with out algorithm, these techniques allow one to algorithmically find
the number of real solutions of an ECI depending on a single parameter
for a sufficiently large parameter value. This combination requires
one to count the number of real solutions of certain polynomial
systems whose supports are vertices of mutually transversal
simplices. We show how to do this algorithmically by reducing to
solving a linear system over the finite field with two elements. As
an application, we construct algorithmically a degree 4 hypersurface
in 3 variables such that all cusp singularities of its discriminant curve
are real, see \Cref{fig:cusps}.

\begin{figure}[h]
  \label{fig:cusps}
\caption{This is the discriminant curve of the projection of a torus into the plane ${\mathbb R}_{y,z}^2$, with cusp and self-intersection singular points in blue and red. If the torus is given by a general polynomial equation $f(x,y,z)=0$, the cusps are defined by the ECI $f=f_x=f_{xx}=0$.}
\centering
\includegraphics[width=5cm]{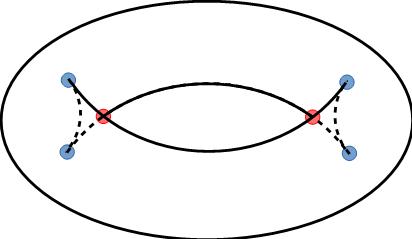}
\end{figure}

Our second contribution concerns the following problem: Given an ECI
$(V, A)$ with $A\subset \ZZ^n\times \ZZ^{k}$ for integers $k$ and $n$ and
$V\in \CC^{(n+1) \times A}$ of rank $n + 1$, the projection of the algebraic
set defined by a generic instance of $(V, A)$ to $(\CC^{*})^k$ is a
hypersurface $\Var(g)$ which one may wish to compute. Examples of
particular cases of this problem are given by the computation of
certain discriminants, in particular so-called {\em $A$-discriminants}
\cite{gelfand1994}, the computation of critical values of projection
maps \cite[see e.g. the introduction
of][]{esterovEngineeredCompleteIntersections2025} and the computation
of {\em minimal polynomials} of dynamical systems with polynomial
right hand sides \cite{mukhina2025}.

The Newton polytope of $g$ is independent of the chosen generic
specialization of $(V, A)$, depending only on the tropicalization of
$(V, A)$, and describing this Newton polytope in terms of the conewise
linear functions associated to $(V, A)$ was done from a theoretical
point of view in \cite{esterovEngineeredCompleteIntersections2025}.
We give here an algorithm to compute the {\em Newton polytope} of $g$
from the data $(V, A)$, based on this theoretical description. Knowing
the Newton polytope of $g$ can be used to compute $g$ itself, for example
with an evaluation-interpolation approach, or by avoiding a direct elimination
computation using Gröbner bases with an approach similar to the so-called
FGLM algorithm \cite{faugere1993a}.

When the tropicalization of an algebraic set is known, and its
eliminant is a hypersurface, general techniques exist to compute the
Newton polytope of this hypersurface from the given tropicalization,
see e.g. \cite{sturmfelsEliminationTheoryTropical,
  roseTropicalImplicitizationRevisited2025}. In the particular case of
polynomial systems with fixed Newton polytopes and generic
coefficients, the sought Newton polytope is the so-called {\em mixed
  fiber polytope} of the input Newton polytopes \cite{esterov2008,
  mcmullenMixedFibrePolytopes2004}. A dedicated algorithm for
computing mixed fiber polytopes was given in
\cite{mohrComputationNewtonPolytopes2025} and therein shown to be much
faster than the more general purpose tropical elimination techniques
mentioned above in the case considered. This algorithm worked by
giving an effective formula which, from a certain mixed subdivision,
extracts a vertex for the sought Newton polytope. Using this formula
one can avoid the potentially costly computation of the complete
tropicalization of the input system. The algorithm we give works along
a similar formula which we prove for the more general case of ECI's.
We evaluate our algorithm experimentally, showing that it is able to
compute Newton polytopes of $A$-discriminants in times similar or
faster to the techniques dedicated to this case described in
\cite{roseTropicalImplicitizationRevisited2025}.

Both of our algorithms are implemented in a software package
accompanying this paper, written in the programming language
\texttt{julia} \cite{bezanson2017}, and available at
\begin{center}
  \url{https://github.com/RafaelDavidMohr/MCISubdivisions.jl}.
\end{center}
We finally mention that all of our algorithmic techniques actually
work for so-called {\em matroid complete intersections} (MCI), a
notion introduced in \cite{esterovEngineeredCompleteIntersections2024,
  esterovEngineeredCompleteIntersections2025} to combinatorially
abstract ECI's in the same way as matroids combinatorially abstract
complements of hyperplane arrangements.

\paragraph*{Outline.}

\Cref{sec:prelim} introduces the necessary theoretical preliminaries
from \cite{esterovEngineeredCompleteIntersections2024,
  esterovEngineeredCompleteIntersections2025} in order to state our
algorithms. We therein also prove that the multiplicities associated
to the tropicalization of a positive dimensional MCI are given as
mixed volumes of zero-dimensional MCI's, this generalizes a result
mentioned above given in \cite{sturmfelsEliminationTheoryTropical}.

\Cref{sec:msd} introduces mixed subdivisions of MCI's and gives the
aforementioned tropical homotopy continuation algorithm to compute
them. We therein also recall the real patchworking results for ECI's
from \cite{esterovEngineeredCompleteIntersections2025} and show how
obtain algorithms from these results.

\Cref{sec:dte} concerns the elimination problem for ECI's described
above, again in the more general context of MCI's, and gives an
effective formula for vertices of the sought Newton polytope in terms
of the mixed subdivisions introduced in \Cref{sec:msd}.  This formula
can be combined with the technique introduced in
\cite{hugginsIB4eSoftwareFramework2006} in order to compute the full
Newton polytope of the sought eliminant.

Finally, \Cref{sec:examples} shows how our algorithms can be used in
the form of our software package on several classes of examples, in
particular for polynomial systems coming from chemical reaction
networks and for the computation of $A$-discriminants. As an
application of real patchworking for ECI's, and our algorithmic
version of this technique, we also prove in \Cref{sec:examples} that
there exists a degree 4 hypersurface in 3 variables such that all
singularities of its discriminant curve are real.

{\footnotesize \paragraph*{Acknowledgements.}
\label{par:ack}

Rafael Mohr was supported by the FWO grants G0F5921N (Odysseus) and
G023721N, and by the KU Leuven grant iBOF/23/064 and would like to
gratefully acknowledge this support. Yulia Mukhina was supported by
the French ANR-22-CE48-0008 OCCAM and ANR-22-CE48-0016 NODE projects
and would like to gratefully acknowledge this support. The authors
wish to thank Maximilian Wiesmann for helpful discussions.}

\section{Theoretical Preliminaries and Conventions}
\label{sec:prelim}

\subsection{Engineered and Matroid Complete Intersections}
\label{sec:ecis}

This section provides the tropical geometry background required for our work
and reviews the necessary definitions and results
from~\cite{esterovEngineeredCompleteIntersections2025,
  esterovEngineeredCompleteIntersections2024} in order to introduce
the central objects of this work --- mixed subdivisions of
Engineered Complete Intersections or ECI's (see ~\Cref{sec:msd}).
We assume some knowledge of tropical geometry on the part of the
reader and refer e.g. \cite{mikhalkinTropicalGeometryIts2006} for an
introduction to the subject and to
\cite{maclaganIntroductionTropicalGeometry2021} for a more
comprehensive treatment.

\begin{notation}
  Throughout this section, let $W$ be a finite-dimensional real
  vector space equipped with a distinguished lattice $L$ 
  (a discrete free subgroup in $W$ of rank $\dim(W)$).
\end{notation}

Later in the paper, $W$ and $L$ will be the dual spaces
$\RR^n$ and $\ZZ^n$, respectively.

\begin{definition}[Tropical Variety]
In the context of this paper, a {\em tropical variety} is a
pure-dimensional polyhedral fan in $W$ that is rational with respect
to $L$ and comes with a multiplicity attached to each maximal cone in
this fan that makes this fan balanced in the sense of Definition 3.3.1
in \cite{maclaganIntroductionTropicalGeometry2021}.
\end{definition}

\begin{notation}
  For a tropical variety $\tvar$ we use the notation $C\in \tvar$ 
  to refer to a top-dimensional cone in the fan underlying $\tvar$.
  We write $\mult(\tvar, C)$ for the multiplicity attached to $C$.

  For tropical varieties $\tvar_1$ and $\tvar_2$ we denote by
  $\tvar_1 \cdot \tvar_2$ the {\em stable intersection} of
  $\tvar_1, \tvar_2$ (for a definition of this construction we refer
  to \cite[Section~3.6]{maclaganIntroductionTropicalGeometry2021} or
  \cite[Section~4.3]{mikhalkinTropicalGeometryIts2006}).
\end{notation}

\begin{definition}[Refinement, Linear Span]
  \begin{enumerate}
    \item[]
    \item By a {\em refinement} of a tropical variety we mean a refinement of the
    underlying polyhedral fan with each maximal cone inheriting the
    multiplicity of the maximal cone of the original fan containing it. 
    \item By the {\em linear span} of
    a subset $S \subset W$ we mean the smallest linear subspace of $W$
    containing $S$ up to an affine shift.
  \end{enumerate}
\end{definition}

The tropicalizations that we will be interested in can be modeled by
{\em corner loci} of {\em conewise linear functions}:
\begin{definition}[Conewise Linear Function, Corner Locus]
  \label{def:cornerlocus}
  Let $\tvar$ be a tropical variety.
  \begin{enumerate}
  \item A {\em conewise linear function} $m: \tvar\rightarrow \RR$ is a
    continuous function such that there is a refinement $\tvar'$ of
    $\tvar$ such that $m$ is linear on each maximal cone of $\tvar'$
    and such that $m$ takes integral values on elements in
    $\tvar \cap L$.
  \item If $\tvar'$ is such a refinement then the {\em corner locus}
    of $m$ on $\tvar$, written $\corn(m\cdot \tvar)$, is constructed as
    follows: To each codimension one cone $C$ in $\tvar'$ we associate
    the multiplicty
  \[\mult(\corn(m\cdot \tvar), C) := \sum_{C\subset C'\in \tvar}m(v_{C'/C})\mult(\tvar,C')\]
  where $v_{C'/C}$ denotes a primitive generator of the image of $C'$
  modulo the linear span of $C$. The corner locus
  $\corn(m\cdot \tvar)$ is then the tropical variety given by those
  codimension one cones $C\in \tvar'$ with $\mult(\corn(m\cdot \tvar), C)\neq 0$.
  \end{enumerate}
\end{definition}
By a conewise linear function $m:W\rightarrow \RR$, we mean a conewise linear
function on $W$ considered as a tropical variety with the trivial
tropical structure (i.e. consisting only of $W$ itself with
multiplicity $1$). In this case we will write
$\corn(m) := \corn(m\cdot W)$.
\begin{remark}
  If $m :W\rightarrow \RR$ is a conewise linear function and
  $\tvar\subset W$ is a tropical variety then
  $\corn(m\cdot \tvar) = \corn(m) \cdot \tvar$ as shown in
  \cite{esterovTropicalVarietiesPolynomial2012}.
\end{remark}

\begin{definition}[Mixed Volume]
  \label{def:mv}
  Let $m_1,\hdots,m_k$ be conewise linear functions on $W$ and let
  $\tvar:=\corn(m_1)\cdot\hdots \cdot \corn(m_k)$. If $k\geq n$, the tropical variety
  $\tvar$ is either empty or consists just of the origin $0\in W$. The
  {\em mixed volume} of $m_1,\hdots,m_k$, written $\mv(m_1,\hdots,m_k)$,
  is defined as zero if $\tvar = \emptyset$ and as
  $\mult(\tvar, \set{0})$ otherwise.
\end{definition}
\begin{remark}
  Note that for a convex polytope $\Delta\subset \RR^n$, its {\em support
    function} $m_{\Delta}:(\RR^n)^{*}\rightarrow \RR$,
  $\omega\mapsto \max\setdes{\omega(p)}{p\in \Delta}$, is a conewise linear function on
  $(\RR^n)^{*}$. The nomenclature {\em mixed volume} in \Cref{def:mv}
  is justified by the fact that for $n$ convex polytopes
  $\Delta_1,\hdots,\Delta_n\subset \RR^n$ we have
  \[\mv(m_{\Delta_1},\hdots,m_{\Delta_n}) = \mv(\Delta_1,\hdots,\Delta_n)\]
  as was shown in \cite{esterovTropicalVarietiesPolynomial2012}.
\end{remark}

Let us recall how tropicalizations of ideals generated by Laurent
polynomials are constructed. We denote by $\CC[\xx^{\pm}]$ the
$n$-variate Laurent polynomial ring in the variables
$\xx:=\set{x_1,\hdots,x_n}$. For a Laurent polynomial
$f=\sum_{a\in A}c_a\xx^a\in \CC[\xx^{\pm}]$, where $A\subset \ZZ^n$ is finite, and
$\omega\in (\RR^n)^{*}$ we denote
$\deg_{\omega}(f) := \max\setdes{\omega(a)}{a\in A}$ and
\[\lt_{\omega}(f) := \sum_{\substack{a\in A\\ \omega(a) = \deg_{\omega}(f)}}c_aa.\]
For an ideal $I\subset \CC[\xx^{\pm}]$ we write $\lt_{\omega}(I):=\langle \lt_{\omega}(f)\;|\;f\in I\rangle$.
\begin{definition}[Tropicalization]
  The {\em tropicalization} of $I$, written $\trop(I)$, is constructed as follows:
  The underlying fan consists of the euclidean closures of the sets
  \[\setdes{\eta\in (\RR^n)^{*}}{\lt_{\omega}(I) = \lt_{\eta}(I)}\]
  where $\omega\in (\RR^n)^{*}$ is such that $\lt_{\omega}(I)$ does not contain a
  monomial. The multiplicities associated to these cones are constructed
  as in \cite[Section 3.4]{maclaganIntroductionTropicalGeometry2021}. 
\end{definition}

Finally we define engineered complete intersections, the class of polynomial systems
of our interest:
\begin{definition}[Engineered Complete Intersection]
  An {\em engineered complete intersection} (ECI) in $\CC[\xx^{\pm}]$
  consists of the following data:
  \begin{enumerate}
  \item A finite multiset $A\subset \ZZ^n$.
  \item A matrix $V\subset \CC^{k\times A}$, $k\leq n$, with columns indexed by
    $A$ and of maximal rank.
  \end{enumerate}
  We will write $v_i$ for the $i$th row of $V$. To the data $(V,A)$ we
  associate the polynomials
  \[f_i:=\sum_{a\in A}c_av_{i,a}\xx^a;\;i=1,\hdots,k,\] where the
  $c_a$, $a\in A$, are symbolic parameters. By a {\em specialization} of
  the ECI $(V,A)$ at $p\in \CC^A$ we mean the polynomials $f_i$ with the
  symbolic parameters $c_a$, $a\in A$, evaluated at $p$.
\end{definition}

\begin{remark}
  The name \textit{engineered complete intersection} is justified by
  the fact that, if one specializes an ECI at a generic $p\in \CC^A$,
  then one obtains a regular complete intersection \cite[Proposition
  4.3]{esterovEngineeredCompleteIntersections2024}. Note also that
  ECI's are often referred to as {\em vertically parametrized systems}
  in the literature.
\end{remark}

The theoretical basis of our algorithmic work was established in
\cite{esterovEngineeredCompleteIntersections2025,
  esterovEngineeredCompleteIntersections2024}: To an ECI $(V,A)$ in
$\CC[\xx^{\pm}]$ one can associate $k$ conewise linear functions such
that the tropicalization of a generic specialization of $(V,A)$
consists of the stable intersection of the corner loci of these $k$
conewise linear functions. Let us give the details of this
construction.

Given an ECI $(V,A)$, let $r:2^A\rightarrow \NN$ be the matroid on the columns
of $V$, mapping a subset $S\subset A$ to the rank of the submatrix of
$V$ indexed by the columns corresponding to $S$. The data $(r,A)$ was
called a matroid complete intersection in
\cite{esterovEngineeredCompleteIntersections2025,
  esterovEngineeredCompleteIntersections2024}:
\begin{definition}[Matroid Complete Intersection]
  A {\em matroid complete intersection} (MCI) $(r,A)$ consists of a
  finite multiset $A\subset L$ and a matroid $r:2^A\rightarrow \NN$. We refer to the
  quantity $r(A)$ as the {\em codimension} and to $n - r(A)$ as the
  {\em dimension} of $(r, A)$.
\end{definition}

To an MCI $(r,A)$ of codimension $k$ we can now associate $k$ conewise
linear functions on the vector space $W^{*}$ with distinguished
lattice $L^{*}$: Letting $m_1$ be the support function of the convex
hull of $A$, we construct $m_2,\hdots,m_k$ inductively. Having defined
$m_1,\hdots,m_{i-1}$ define
\[m_i(\omega) := \max\setdes{m}{r(\setdes{a\in A}{\omega(a)\geq m})\geq i}\] for
$\omega\in L^{*}$. Having defined $m_1,\hdots,m_k$ on $L^{*}$, we extend them
to $W^{*}$ by scalar extension.

Having constructed $m_1,\dots,m_k$, we now define tropicalizations
of MCI's:

\begin{definition}[Tropicalization of an MCI]
  \label{def:tropmci}
  \begin{enumerate}
  \item The {\em tropicalization of an MCI} $(r,A)$ with associated conewise
  linear functions $m_1,\dots,m_k$ is defined as
  \[\trop(r,A) := \corn(m_1)\cdot \hdots \cdot \corn(m_k).\]
  \item \label{defitem:mv1} If $k \geq n$, the dimension of $W$, then we
  define the {\em mixed volume} of $(r,A)$ as
  \[\mv(r,A):= \mv(m_1,\dots,m_k).\]
\item \label{defitem:mv2} More generally, if the dimension of the linear span $U$ of $A$
  is smaller or equal to $k$ then $\trop(r,A)$ is empty, in which
  case we define $\mv(r,A) = 0$, or consists of the orthogonal
  complement $U^{\bot}\subset W^{*}$ with a certain multiplicity which we also call
  the mixed volume $\mv(r,A)$ of $(r,A)$.
\item \label{defitem:tropmciseq} Yet more generally, consider several MCI's $(r_i,A_i)$,
  $i=1,\dots,\ell$, such that all $A_i$ can be shifted to the same vector
  subspace $U$ of $W$ with dimension less or equal to
  $\sum_ir_i(A_i)$. Then the stable intersection
  $\trop(r_1,A_1)\cdot \hdots \cdot \trop(r_{\ell},A_{\ell})$ is either empty, in which
  case we define $\mv(r_{\bullet},A_{\bullet}) = 0$, or consists of the orthogonal
  complement $U^{\bot}\subset W^{*}$ with a certain multiplicity which we call
  the mixed volume $\mv(r_{\bullet},A_{\bullet})$ of the sequence $(r_{\bullet},A_{\bullet})$.
  \end{enumerate}
\end{definition}

\begin{remark}
  \label{rem:mvremarks}
  Note that $\mv(r,A) = 0$ if $r(A) > \dim(W)$, including this case
  will be convenient later. Moreover, given a sequence of MCI's
  $(r_{\bullet},A_{\bullet})$ as in \Cref{defitem:tropmciseq} of \Cref{def:tropmci},
  the mixed volume $\mv(r_{\bullet},A_{\bullet})$ can still be computed as the
  mixed volume of a single MCI as follows: To see this, shift each
  $A_i\ni a_i$ to $\tilde A_i:=A_i-a_i\ni 0$ with rank functions
  $\tilde r_i(a):=r_i(a+a_i)$. Then
  $\mv(\sum_i\tilde r_i, \bigsqcup_i\tilde A_i)$ does not depend on the
  choice of shifts $a_i$, and equals the mixed volume of
  the sequence $r_{\bullet},A_{\bullet}$ as in \Cref{def:tropmci}.
\end{remark}

\begin{example}
  \label{ex:hexagon}
  Suppose $A$ consists of the vertices of the hexagon in $\ZZ^2$:
  
  \begin{center}
    \begin{tikzpicture}
      \draw[line width = 1pt][-, Black] (0,0) -- (1,0) -- (2,1) -- (2,2) -- (1,2) -- (0,1) -- cycle;

      \draw[line width = 1pt][-, Red] (0,0) -- (0,1);
      \draw[line width = 1pt][-, Blue] (1,2) -- (2,2);
      \draw[line width = 1pt][-, Green] (1,0) -- (2,1);

      \node[circle, draw=Green, fill=Green, inner sep=0pt, minimum size=0.4em] at (1,0) { };
      \node[below] at (1,0) {$a_2$};
      \node[circle, draw=Green, fill=Green, inner sep=0pt, minimum size=0.4em] at (2,1) { };
      \node[right] at (2,1) {$a_3$};
      \node[circle, draw=Blue, fill=Blue, inner sep=0pt, minimum size=0.4em] at (2,2) { };
      \node[above] at (2,2) {$a_4$};
      \node[circle, draw=Blue, fill=Blue, inner sep=0pt, minimum size=0.4em] at (1,2) { };
      \node[above] at (1,2) {$a_5$};
      \node[circle, draw=Red, fill=Red, inner sep=0pt, minimum size=0.4em] at (0,1) { };
      \node[left] at (0,1) {$a_6$};
      \node[circle, draw=Red, fill=Red, inner sep=0pt, minimum size=0.4em] at (0,0) { };
      \node[below] at (0,0) {$a_1$};
    \end{tikzpicture}
  \end{center}

  We take $V$ to consist of two generic linear combinations of three
  fixed generic binomials supported on the edges marked in red, green
  and blue. The first conewise linear function $m_1$ associated to the
  resulting MCI $(r,A)$ coincides with the support function of the
  convex hull of $A$.  Its corner locus $\corn(m_1)$ thus consists of
  the outer normal vectors of the edges of the hexagon, each with
  multiplicity $1$. The second support function $m_2$ evaluates to
  zero on those rays of $\corn(m_1)$ corresponding to the colored
  edges of the hexagon and to $1$ on primitive generators of the
  remaining rays. Thus, following \Cref{def:cornerlocus}, we find
  $\mv(r,A) = 3$. Note that the normalized volume of the convex hull
  of $A$ is $6$, this quantity coincides with the number of joint
  solutions in $(\CC^{*})^2$ of two generic polynomials with support
  $A$ by Koushnirenko's theorem \cite{kouchnirenko1976}.
\end{example}

The relevance of \Cref{def:tropmci} is the following theorem:
\begin{theorem}[Section 4 in \cite{esterovEngineeredCompleteIntersections2024}]
  \label{thm:tropvps}
  Let $I_p$ be the ideal generated by the specialization of an ECI
  $(V,A)$ at $p\in \CC^A$. Let $(r,A)$ be the MCI constructed from
  $(V,A)$ as above. If $p$ is sufficiently generic then
  \[\trop(I_p) = \trop(r,A).\]
\end{theorem}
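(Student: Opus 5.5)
We prove the equality by reducing the tropicalization of $I_p$ to a stable intersection of tropical hypersurfaces, one for each row of a suitably triangularized form of $V$. The first step is a change of basis in the row space of $V$: scalar row operations do not change $I_p$. We choose the rows $v_1,\dots,v_k$ adapted to a given weight $\omega\in L^{*}$ so that the number of "top" terms surviving in $f_i$ realizes $m_i(\omega)$. Concretely, we sort $A$ by decreasing $\omega(a)$ and perform Gaussian elimination on $V$ in that order. After elimination, the $i$th row has its leading $\omega$-degree equal to the largest $m$ with $r(\{a:\omega(a)\ge m\})\ge i$, which is exactly $m_i(\omega)$. For generic $p$ (and generic $V$ within the matroid, though only the matroid of $V$ matters), this identifies $\deg_\omega$ of the $i$th eliminated equation with $m_i(\omega)$. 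This explains why the conewise linear functions $m_i$ play the role that the support functions $m_{\Delta_i}$ play in the BKK setting.

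Next we use the torus/linear-space description. The variety of $I_p$ is the preimage, under the monomial map $\xx\mapsto (p_a\xx^a)_{a\in A}$, of the linear subspace $\ker V\subset\CC^A$. Hence $\trop(I_p)$ is the preimage, under the linear map $\omega\mapsto(\omega(a))_a$, of the tropicalization of the linear space shifted by $\log|p|$. For generic $p$ this preimage is transverse, and it can be computed as a stable intersection. This gives the set-theoretic statement: $\omega\in\trop(I_p)$ if and only if the initial system has no monomial, which by the triangularized form is equivalent to $\omega$ lying in the stable intersection of the corner loci $\corn(m_i)$. The key point to verify here is that $\lt_\omega$ of the eliminated $f_i$ generate $\lt_\omega(I_p)$ for generic $p$, i.e.\ that they form a tropical basis generically. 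We argue this via the regular complete intersection property \cite[Proposition 4.3]{esterovEngineeredCompleteIntersections2024} applied to the initial forms, which are themselves specializations of smaller ECI's given by restricted matroids.

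The final step, and the main obstacle, is matching multiplicities. We plan to argue by induction on $k$, using the remark that $\corn(m\cdot\tvar)=\corn(m)\cdot\tvar$. Adding the $i$th equation to the variety of $f_1,\dots,f_{i-1}$, which by induction has tropicalization $\corn(m_1)\cdots\corn(m_{i-1})$, cuts it by a function whose valuation on that variety is the conewise linear function $m_i$. The corner locus of a rational function's tropicalization restricted to a tropical variety computes the tropicalization of its divisor, with multiplicities. The delicate part is showing that, for generic $p$, the tropicalization of $f_i$ restricted to the previous variety is exactly $m_i$ and not something smaller due to cancellation. This is where the elimination via scalar multiples of $f_1,\dots,f_{i-1}$ is used, together with genericity of the $c_a$ to rule out accidental cancellation among the remaining terms. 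It also requires that no components of the divisor are lost at infinity, which follows from the complete intersection property.
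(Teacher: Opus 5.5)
The paper does not prove this theorem; it imports it from Section~4 of \cite{esterovEngineeredCompleteIntersections2024}. Your proposal therefore has to stand on its own, and as written it asserts, rather than proves, the steps that carry the content of the theorem.

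\textbf{Second step.} The equivalence between ``the $\omega$-leading forms of the echelonized system have a common zero in the torus'' and ``$\omega\in\corn(m_1)\cdots\corn(m_k)$'' is circular. For $\omega=0$, or any $\omega$ constant on $A$, the cancellation $\canc_\omega(r,A)$ has a single block and $\lt_\omega$ returns the system itself. The equivalence you need is then exactly $\Var(I_p)\neq\emptyset\Leftrightarrow\trop(r,A)\neq\emptyset$, which is the theorem; in the square case it says the root count is nonzero iff $\mv(r,A)\neq 0$. For $\omega\neq 0$ the triangular form only splits the initial system into the blocks of $\canc_\omega(r,A)$, and deciding when these have a common zero is the same problem again. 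Nothing in the proposal sets up an induction that terminates.

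The linear-space description does not close this either. It exhibits $\trop(I_p)$ as the pull-back along $\omega\mapsto(\omega(a))_a$ of the stable intersection of $\trop(\ker V)$ with a linear subspace. (With the trivial valuation there is no shift by $\log|p|$; genericity of $p$ is what makes the generic-translate formula for stable intersections applicable.) To identify this with $\prod_i\corn(m_i)$ you would need $\trop(\ker V)=\prod_i\corn(M_i)$, where $M_i(\eta)=\max\{m: r(\{a:\eta_a\ge m\})\ge i\}$ on $\RR^A$. That identity is the special case of the theorem where $A$ is the standard basis of $\ZZ^A$. You would also need compatibility of pull-backs with corner loci. You establish neither.

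\textbf{Third step.} The decisive claim is that the tropical function of $f_i$ on $X_{i-1}=\Var(f_1,\dots,f_{i-1})$ equals $m_i$. This cannot come from ``genericity of the $c_a$'': $f_i$ carries the same coefficients $c_a$ as $f_1,\dots,f_{i-1}$. A drop in $\omega$-degree modulo $I(X_{i-1})$ can come from any element of that ideal, not only from scalar combinations.

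What is needed is that $\lt_\omega(\tilde f_i)$ is a nonzerodivisor modulo $\lt_\omega\langle f_1,\dots,f_{i-1}\rangle$ for every $\omega\in\trop(X_{i-1})$. This does follow from \cite[Proposition~4.3]{esterovEngineeredCompleteIntersections2024}, applied to the block-diagonal ECI formed by the $\omega$-leading forms of the first $i$ echelon rows. You invoke that proposition in step two, but not here, where it is actually needed.

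For this to work, you must fix $f_1,\dots,f_k$ once and for all as generic combinations of the rows of $V$. Then $X_{i-1}$ is the ECI of the rank-$(i-1)$ truncation of $r$, whose functions are $m_1,\dots,m_{i-1}$, and the $\omega$-echelonization preserves the spans of the first $i$ rows. Rows obtained by $\omega$-dependent elimination do not define a single variety to induct on.

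Finally, ``the corner locus of the restricted tropical function computes the tropicalization of the divisor'' is false as a general principle. For $X=\{x+y+1=0\}$ and $f=x+y+2$ one has $f|_X=1$, so $f$ has no zeros on $X$. Yet the restriction of $\max(\omega_1,\omega_2,0)$ to $\trop(X)$ has corner locus $\{0\}$ with multiplicity~$1$. The principle does hold for the refined function under the nonzerodivisor hypothesis above, but that is a theorem you must prove or cite. One route is to reduce via stars to curves, where it is the degree-zero property of principal divisors on the compactified curve. Without these ingredients, the multiplicities, and in particular the zero-dimensional root count, are not established.
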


In particular, if $(V,A)$ is an ECI with $V$ consisting of $n$ rows
then the root count of a generic specialization of $(V,A)$ matches the
mixed volume of the associated MCI $(r,A)$. Computing this quantity as
a sum of volumes of convex polytopes will be the subject of
\Cref{sec:msd}.

\subsection{Tropical Multiplicities of Matroid Complete Intersections}
\label{sec:weights}

Again, let $W$ be a real vector space of dimension $n$ together with a
distinguished lattice $L\subset W$. Given an MCI $(r,A)$ with
$A\subset L$ and of positive dimension, we aim now to describe the
multiplicities of the maximal cones in $\trop(r,A)$. It turns out that
these multiplicities are themselves mixed volumes of certain
zero-dimensional MCI's derived from $(r,A)$, defined as follows:

\begin{definition}[Cancellation]
  \label{def:canc}
  For $\omega\in L^{*}$, define the {\em $\omega$-cancellation}
  $\canc_{\omega}(r,A)$ of $(r,A)$ as the following recursively constructed
  sequence of MCI's $(r^{\omega}_i,A_i^{\omega})$, $i = 1,\dots,\ell$:
  \begin{enumerate}
  \item $A^{\omega}_1\subset A$ is the multiset of all $a\in A$ at which
    $\omega$ attains its maximum on $A$;
  \item $r^{\omega}_1$ is the restriction of $r$ from $2^A$ to $2^{A^{\omega}_1}$;
  \item Let $q$ be the matroid quotient of $r$ by $A_1^{\omega}$ and let
    $B:=\setdes{a\in A}{q(a)> 0}$.  We set $\ell = 1$ if
    $B= \emptyset$ and otherwise let $(r^{\omega}_i,A_i^{\omega})$,
    $i = 2,\dots,\ell$, be the sequence of $\omega$-cancellations of the MCI
    $(q,B)$.
  \end{enumerate}
\end{definition}

Recall that the \textit{matroid quotient} $q$ of $r$ by a subset $B\subset A$
is defined as $q : 2^A\rightarrow \NN$, $q(S) = r(B\cup S) - r(B)$.

The nomenclature {\em $\omega$-cancellation} in \Cref{def:canc} is justified
as follows: Suppose that $(r,A)$ arises as the MCI associated to an
ECI $(V,A)$ with $V\in \CC^{k\times A}$ and $A\subset \ZZ^n$. Let
$\omega\in (\RR^n)^{*}$ and let $\tilde{V}\in \CC^{k\times A}$ be the matrix
obtained by first taking as its rows $k$ generic linear combinations
of the rows of $V$ and then by reordering the columns of the resulting
matrix descendingly by $\omega$-degree. Let $\tilde{V}_{\text{red}}$ be a
row echelon form of $\tilde{V}$ computed without swapping rows or
columns. Each row of $\tilde{V}_{\text{red}}$ corresponds to a Laurent
polynomial with support in $A$ and $m_i(\omega)$ coincides with the value
of the support function of the Newton polytope of the $i$th row of
$\tilde{V}_{\text{red}}$ at $\omega$. The rows of
$\tilde{V}_{\text{red}}$ can be grouped according to $\ell$ different
$\omega$-degrees $d_1,\dots,d_{\ell}$ and the $j$th MCI
$(r^{\omega}_j,A_j^{\omega})$ is the MCI associated to the ECI given by the
$\omega$-leading terms of those rows of $\tilde{V}_{\text{red}}$ with
$\omega$-degree $d_j$.

\begin{example}
  \label{ex:matroidcurve}
  If $(r,A),\,A\subset\ZZ^n$, is a 1-dimensional MCI, then
  the mixed volume of the cancellations $(A^\omega_{\bullet},r^\omega_{\bullet})$ is defined for
  every $\omega\in(\ZZ^n)^*$, and equals non-zero numbers $m_j>0$ for at most
  finitely many primitive covectors $\omega_j$.
\end{example}

We can now describe the multiplicities of maximal cones in the
tropicalization of an MCI as mixed volumes of cancellations:

\begin{proposition}
  \label{prop:cancel}
  \begin{enumerate}
  \item \label{ione:cancel} For a generic $\omega$ in a cone
    $C\subset (W)^*$ (i.e. avoiding finitely many proper linear subspaces),
    each support set $A^{\omega}_i$ of the $\omega$-cancellation
    $\canc_{\omega}(r,A)$ is contained in the orthogonal complement
    $C^\bot\subset W$ up to an affine shift.
  \item \label{itwo:cancel} In a small neighborhood of every such
    $\omega$, $\trop(r,A)$ coincides with the stable intersection of the
    tropicalizations $\trop(r_i^{\omega},A_i^{\omega})$.
  \item If $\omega$ is a general covector in a full-dimensional cone
    $C\in \trop(r,A)$, then the mixed volume of
    $(A^\omega_{\bullet},r^\omega_{\bullet})$ is well defined by \cref{ione:cancel} and equals
    $\mult(C,\trop(r,A))$ by \cref{itwo:cancel}.
  \end{enumerate}
\end{proposition}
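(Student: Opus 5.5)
We would prove the three items in order, working throughout with the conewise linear functions $m_1,\dots,m_k$ and with the echelon-form interpretation of $\omega$-cancellation given after \Cref{def:canc}. The basic tool is the \emph{local linearity} of $m_i$ on $C$. Since each $m_i$ is conewise linear, after refining we may assume every $m_i$ is linear on a cone $C'$ containing a generic $\omega\in C$ in its relative interior. The key observation is that the recursive data of $\canc_\omega(r,A)$ are locally constant on the relative interior of $C'$: the face $A^\omega_1$, the quotient $q$, the set $B$, and recursively all $A^\omega_i$ and $r^\omega_i$. This is because each of them is defined by which elements of $A$ attain a maximum of $\omega$ under a rank condition, and those are exactly the data determining the linear pieces of the $m_i$.

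For \cref{ione:cancel}, we take $\eta$ in the linear span of $C'$, so that $\omega+\epsilon\eta$ is still generic in $C'$ for small $\epsilon$. Local constancy then says $A^{\omega+\epsilon\eta}_i=A^\omega_i$. By construction, $A^\omega_i$ is the set where $\omega$ attains its maximum on the current support $B$. Hence $\omega+\epsilon\eta$ must also be constant on $A^\omega_i$, and therefore $\eta$ is constant on $A^\omega_i$. This shows $A^\omega_i$ lies in an affine shift of $C'^\perp$. Because $C'$ and $C$ have the same span, that complement equals $C^\perp$.

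For \cref{itwo:cancel}, we would show that near $\omega$ each $m_j$ agrees, up to adding a linear function, with the functions of the cancellation. Concretely, suppose $m_j$ corresponds to a row of $\tilde V_{\text{red}}$ in degree group $i$. Then for $\eta$ close to $\omega$ we expect
\[m_j(\eta)=\langle\eta,a_i\rangle+m^{(i)}_{j'}(\eta),\]
where $m^{(i)}_{j'}$ is the $j'$-th conewise linear function of $(r^\omega_i,A^\omega_i)$ and $a_i$ is a point of $A^\omega_i$. This follows from the definition $m_j(\eta)=\max\{m : r(\{a:\eta(a)\ge m\})\ge j\}$. For $\eta$ near $\omega$, the superlevel sets first exhaust the $\omega$-maximal elements and then pass to the quotient. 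This is exactly the recursion in \Cref{def:canc}, combined with the identity $q(S)=r(B\cup S)-r(B)$. Adding a linear function does not change a corner locus. So near $\omega$ the product $\corn(m_1)\cdots\corn(m_k)$ regroups as the product over $i$ of the products of the corner loci of the $m^{(i)}_{j'}$, that is, as $\prod_i\trop(r^\omega_i,A^\omega_i)$. Here we use associativity and commutativity of stable intersection, and the fact that stable intersection is local, so it can be computed in a neighborhood of $\omega$.

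The third item then follows directly. By \cref{ione:cancel}, all the $A^\omega_i$ shift into $U=C^\perp$, and by \cref{itwo:cancel} their stable intersection coincides with $\trop(r,A)$ near $\omega$. Near a generic point of a maximal cone $C$, $\trop(r,A)$ is the linear span of $C$ with multiplicity $\mult(C,\trop(r,A))$. This span equals $U^\perp$, so $\dim U\le\sum_i r^\omega_i(A^\omega_i)$ and \Cref{def:tropmci}(\ref{defitem:tropmciseq}) applies. Comparing multiplicities gives the claim.

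The main obstacle is the local identity for the $m_j$ in the second step. It requires verifying carefully that the matroid-rank threshold definition of $m_j$ near $\omega$ splits according to the flats generated by successive $\omega$-faces. We would first prove this by induction on the number $\ell$ of cancellation steps, treating the base step with $A^\omega_1$ and then applying induction to the quotient MCI $(q,B)$.
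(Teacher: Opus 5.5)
Your proposal is correct and follows the paper's proof: item 1 comes from the constancy of $\canc_\omega(r,A)$ on a piece of $C$ of full dimension (the paper partitions $C$ directly into the finitely many polyhedral loci where the cancellation is constant, so it does not need to derive this constancy from the linearity domains of the $m_i$), item 2 from identifying $m_1,\dots,m_k$ near $\omega$ with the concatenated functions of the $(r^\omega_i,A^\omega_i)$, and item 3 from combining the two. The only slip is that this identification holds exactly, without your extra term $\langle\eta,a_i\rangle$ (for instance, near $\omega$ one has $m_j(\eta)=\max_{a\in A^\omega_i}\eta(a)$ for $j=r(A^\omega_1\cup\dots\cup A^\omega_{i-1})+1$, which is already the first function of the unshifted $(r^\omega_i,A^\omega_i)$), but since linear summands do not change corner loci, this does not affect your argument.
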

\begin{proof}
  For \cref{ione:cancel}, note that $(r,A)$ has only finitely many
  different cancellations $(q_{\bullet},B_{\bullet})$, and so $C$ is subdivided
  into finitely many polyhedral sets
  $C_{(q_{\bullet},B_{\bullet})}:=\setdes{\omega\in C}{\canc_{\omega}(r,A) =
    (q_{\bullet},B_{\bullet})}$. Hence a sufficient condition of general position for
  $\omega$ is to belong to one of those
  $C_{(q_{\bullet},B_{\bullet})}\subset C$ not contained in a hyperplane in
  $C$: then the respective $A^\omega_i$ shift to
  $C_{(q_{\bullet},B_{\bullet})}^\perp=C^\perp$ by definition. This is indeed a condition of general
  position, because the remaining
  $C_{(q_{\bullet},B_{\bullet})}\subset C$ belong to a finite union of hyperplanes in
  $C$.

  \Cref{itwo:cancel} follows from the definition of $\trop(r,A)$ and
  its cancellations $(r_{\bullet}^{\omega},A^\omega_{\bullet})$ as stable intersections
  $\corn(m_1)\cdot \hdots \cdot \corn(m_k)$ and
  $\corn(m_{1,i})\cdot \hdots \cdot \corn(m_{k_i,i})$ for suitable conewise
  linear functions $m_1,\hdots,m_k$ and $m_{1,i},\dots,m_{k_i,i}$
  respectively: By construction, the sequence of conewise linear
  functions $m_1,\hdots,m_k$, restricted to a small neighborhood of
  $\omega$, coincides with the concatenation of the sequences
  $m_{1,i},\ldots,m_{k_i,i}$ for $i = 1,\hdots, \ell$.
\end{proof}

\begin{example}
  \label{ex:matroidcurve1}
  \begin{enumerate}
  \item (\Cref{ex:matroidcurve} continued). The tropicalization of a
    1-dimensional MCI $(r,A),\,A\subset\ZZ^n$, consists of rays
    $\RR_+\cdot\omega_j$ with mixed volume multiplicities $m_j$, defined in
    \Cref{ex:matroidcurve}. The condition $m_j > 0$ enforces that for
    the corresponding cancellation
    $(r_{\bullet}^{\omega_j},A_{\bullet}^{\omega_j})$ we have
    $r_i^{\omega_j}(A_i^{\omega_j}) < | A_i^{\omega_j}|$ and that
    $\sum_ir_i^{\omega_j}(A_i^{\omega_j}) = n - 1$.
  \item If the tropicalization of an MCI $(r,A)$ is a plane, the
    multiplicity of this plane can naturally be called the {\em
      (generalized) mixed volume} of the MCI, even though $(r,A)$ may
    not satisfy the assumption of \cref{defitem:mv2} in
    \Cref{def:mv}. \Cref{prop:cancel} however represents it as the
    mixed volume (in the sense of \cref{defitem:tropmciseq} in
    \Cref{def:mv}) of the cancellations of $(r,A)$ in a general
    direction, so it can still be computed by the Algorithm that we
    will present in \Cref{sec:tropcont}.
  \end{enumerate}
\end{example}

\section{Mixed Subdivisions of Matroid Complete Intersections}
\label{sec:msd}

With the application of \Cref{thm:tropvps} in mind, we now introduce,
in \Cref{sec:mcims}, mixed subdivisions of MCI's by which their mixed
volumes can be expressed in terms of sums of volumes of convex
polytopes. We then give, in \Cref{sec:tropcont}, a tropical homotopy
continuation algorithm for computing such mixed subdivisions.

Throughout this section we again fix a real vector space $W$ of
dimension $n$ and a distinguished lattice $L$ in $W$.

We also fix a zero-dimensional MCI $(r,A)$ on $W$ such that the linear
span of $A$ is $W$.

\subsection{Mixed Cells and Mixed Subdivisions of MCI's}
\label{sec:mcims}

We now transport the notion of a {\em mixed cell} in \cite{huber1995}
to our context. 

\begin{definition}[Mixed Cell]
  \label{def:pmixedcell}
  A {\em partial mixed cell} of $(r,A)$ is a sequence of pairwise
  disjoint submultisets $M := (S_1,\hdots,S_k)$ of $A$ satisfying the
  following conditions:
  \begin{enumerate}
  \item The $S_i$ are mutually affinely independent, i.e. each $S_i$
    is affinely independent modulo the linear span of
    $S_1+ \hdots + S_{i-1} + S_{i+1}+\hdots + S_k$.
  \item $S_1$ is a circuit of the matroid $r$.
  \item $(S_2,\dots,S_k)$ is a partial mixed cell of the MCI $(q,B)$
    where $q$ is the matroid quotient of $r$ by $S_1$ and
    $B:=\setdes{a\in A}{q(a) > 0}$.
  \end{enumerate}
  We write
  \[\codim(M) := r(S_1\cup \hdots \cup S_k) = |S_1| + \hdots + |S_k| - k.\]
  If $\codim(M) = n$ then $M$ is simply called a mixed cell.
\end{definition}

\begin{remark}
  Note that the affine independence condition in \Cref{def:pmixedcell}
  enforces the $S_i$ to be subsets of $A$ and not submultisets,
  i.e. no $S_i$ contains multiple copies of the same element of the
  underlying set of $A$.
\end{remark}

\begin{example}
  Continuing with \Cref{ex:hexagon}, the tuple
  \[M := (\set{a_1,a_6}, \set{a_2,a_3})\] is a mixed cell of the
  associated MCI $(r, A)$. Note that e.g. $\set{a_1,a_2}$ does not give
  a partial mixed cell of $(r, A)$ as $\set{a_1,a_2}$ is not a circuit of
  $r$.
\end{example}

\begin{notation}
  For a function $d:A\rightarrow \RR$ and any subset $S\subset A$ write
  $S_d\subset W\times \RR$ for the graph of $d$ restricted to $S$. We denote by
  $(r_d,A_d)$ the MCI given by $A_d$ and with $r_d$ given by the
  composition of $r$ with the projection of $A_d$ to $A$.
\end{notation}

Occasionally, slightly abusing notation, we will consider a function
$d : A \rightarrow \RR$ as a vector $d\in \RR^A$ indexed by $A$ instead. We will
variously refer to $d$ as a \textit{height function} or \textit{height
  vector}.

\begin{notation}
  Throughout this paper, we denote for two integers $i,j\in \ZZ$
  \[\irng{i,j} = \setdes{\ell \in \ZZ}{i\leq \ell \leq j}.\]
\end{notation}

We now introduce the notion of a mixed subdivision of $(r, A)$. These
are associated to a height vector $d\in \RR^A$ in a similar way as regular
subdivisions of $A$, in a polyhedral sense.

\begin{definition}[Tropical Root, Dual Tropical Root, Mixed Subdivision]
  \label{def:tropicalroot}
  The mixed cell $M = (S_1,\hdots,S_k)$ is a {\em dual tropical root}
  of $(r,A)$ at the function $d:A\rightarrow \RR$ if only if there is a {\em
    tropical root} $\omega\in W^{*}$ such that
  $S_i=A_{d,i}^{(\omega,i)}$ for all $i\in \irng{1,k}$ where
  $\canc_{(\omega,1)}(r,A_d) =: (r_{d,i}^{(\omega,i)},A_{d,i}^{(\omega,i)})_i$.  The set
  of all dual tropical roots of $(r,A)$ at $d$ is called a {\em mixed
    subdivision} of $(r,A)$.
\end{definition}

\begin{notation}
  \label{not:volume}
  Fix a volume form $\mu$ on $W$ which evaluates to one on some basis of
  $L$. Let $M = (S_1,\hdots,S_k)$ be a mixed cell of $(r,A)$ with each
  $S_i$ of cardinality $s_i$. We denote
\[\vol(M):= (s_1-1)!\cdot\hdots \cdot (s_{k}-1)!\cdot \vol_{\mu}\left(\operatorname{conv}(S_1)+\hdots + \operatorname{conv}(S_k)\right)\]
where $\operatorname{conv}(S_i)$ is the convex hull of $S_i$.
\end{notation}

Note that, with this definition, $\vol(M)$ coincides with the mixed
volume
\[\mv(\underbrace{\operatorname{conv}(S_1),\hdots,\operatorname{conv}(S_1)}_{\text{$s_1-1$ times}},\hdots,\underbrace{\operatorname{conv}(S_k),\hdots,\operatorname{conv}(S_k)}_{\text{$s_k-1$ times}}),\]
see e.g. \cite{huber1995}.

We will now show that the
quantity $\mv(r,A)$ can be expressed in terms of volumes of dual
tropical roots at a generic height vector $d\in \RR^A$. We divide the
proof into two propositions.

Let us first describe $\trop(r_d,A_d)$. Note that $\trop(r_d,A_d)$
consists of finitely many rays, see \Cref{ex:matroidcurve}. If $\omega\in W^{*}$ is such that
$(\omega,1)$ is a generator of such a ray then, by \Cref{prop:cancel}, this
ray has multiplicity $\mv(\canc_{(\omega,1)}(r_d,A_d))$. Moreover,
\begin{proposition}
  \label{prop:liftroots}
  For any $d \in \RR^A$, the mixed volume of $(r,A)$ equals the sum of
  the mixed volumes $\mv(\canc_{(\omega,1)}(r_d,A_d))$ where
  $\omega\in W^{*}$ is such that $(\omega,1)$ generates a ray of $\trop(r,A_d)$.
\end{proposition}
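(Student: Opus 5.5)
Our plan is to read the sum on the right as a stable intersection of the one-dimensional tropical variety $\trop(r_d,A_d)\subset W^{*}\times\RR$ with the hyperplane $H:=\set{\eta=1}$, where $\eta$ is the last coordinate. We then compare this with the intersection of the same tropical curve with $\set{\eta=0}$; balancing makes the two counts equal. (In the statement $\trop(r,A_d)$ should be read as $\trop(r_d,A_d)$.)

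\textbf{Step 1: the lifted MCI is a curve.} Note that $(r_d,A_d)$ lives in $W\times\RR$, which has dimension $n+1$. Its codimension is $r(A)=n$, since $(r,A)$ is zero-dimensional. Hence $\trop(r_d,A_d)$ is a balanced one-dimensional fan, i.e. finitely many rays with multiplicities. By \Cref{prop:cancel}, applied to $(r_d,A_d)$, the ray through $(\omega,1)$ has multiplicity $\mv(\canc_{(\omega,1)}(r_d,A_d))$. The same holds for rays in $\set{\eta=0}$ and $\set{\eta<0}$, with generators $(\omega,0)$ and $(\omega,-1)$.

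\textbf{Step 2: the horizontal part is a lift of $\trop(r,A)$.} For covectors $(\omega,0)$, the function $\eta\cdot d$ does not enter the $\omega$-degree. Hence $\canc_{(\omega,0)}(r_d,A_d)$ is the graph-lift of $\canc_\omega(r,A)$. Each conewise linear function $m_i^{d}$ of $(r_d,A_d)$ restricted to $W^{*}\times\set{0}$ equals $m_i$ of $(r,A)$. Since $\trop(r,A)=\set{0}$ with multiplicity $\mv(r,A)$, we get two facts: no ray of $\trop(r_d,A_d)$ lies in $\set{\eta=0}$, and the projection of $\trop(r_d,A_d)$ along the $\eta$-axis, pushed forward to $W^{*}$, is the origin counted with multiplicity $\mv(r,A)$.

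More concretely, we use the fan displacement rule for stable intersection with the hyperplane $\set{\eta=0}$. Translate $\set{\eta=0}$ to $\set{\eta=\epsilon}$ for small $\epsilon>0$. The stable intersection $\trop(r_d,A_d)\cdot\corn(\max(\eta,0))$ then equals $\sum_{\omega}\mult(\RR_+(\omega,1))\cdot|\text{lattice index}|$. The index is $1$ because the generator $(\omega,1)$ has last coordinate $1$; here we note $\omega$ may be rational, so we should use the primitive generator and track the index, and the formula as stated implicitly takes this into account. On the other hand, this stable intersection can be computed via the remark $\corn(m\cdot\tvar)=\corn(m)\cdot\tvar$. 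It equals $\corn(m_1^d)\cdots\corn(m_n^d)\cdot\corn(\max(\eta,0))$. Its degree is the mixed volume of $m_1^d,\dots,m_n^d,\max(\eta,0)$ on $W^{*}\times\RR$. Because $\max(\eta,0)$ is the support function of the segment $[0,e]$ in the $\eta$-direction, this mixed volume reduces to the mixed volume of the restrictions to the slice $\eta=0$, namely $\mv(m_1,\dots,m_n)=\mv(r,A)$.

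\textbf{Main obstacle.} The hardest part is justifying the last reduction: the mixed volume of $(m^d_\bullet,\max(\eta,0))$ equals that of the restrictions to $\eta=0$. We would prove it via balancing of the curve $\trop(r_d,A_d)$. Pairing with the linear function $\eta$, the sum of (multiplicity $\times$ $\eta$-coordinate of primitive generator) over all rays is $0$. Hence the upward count equals the downward count. The count $\max(\eta,0)$ sees only the upward rays, and $\max(\eta,0)-\eta=\max(0,-\eta)$ sees only the downward ones. Independence of the choice of sign then lets us deform $d$ to $0$. We would then use the fact that the degree of a stable intersection is invariant under the choice of $d$, since the mixed volume is multilinear and continuous in the conewise linear functions. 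At $d=0$, the curve is $\set{0}\times\RR$ with multiplicity $\mv(r,A)$, and the upward count is exactly $\mv(r,A)$.
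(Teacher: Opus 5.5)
Your strategy differs from the paper's and can be completed, but the step you flag as the main obstacle is not established by the argument you sketch.

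\textbf{How the paper argues.} The paper never restricts conewise linear functions to a hyperplane. It forms the auxiliary MCI $\tilde A=A_d\cup A_q$, with a constant height $q<d$ and $\tilde r$ pulled back from $r$. The tropical curve of $(\tilde r,\tilde A)$ consists of three pieces: the upward rays of $\trop(r_d,A_d)$, some rays inside $H=\set{\eta=0}$, and the single downward ray $\set{0}\times\RR_{<0}$ of multiplicity $\mv(r,A)$ (the lower half-space only sees the flat lift $A_q$). The stable intersection $\trop(\tilde r,\tilde A)\cdot H$ does not depend on the displacement direction, and this equates the two counts.

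\textbf{The gap in your key step.} You instead need $\trop(r_d,A_d)\cdot H=\trop(r,A)\times\set{0}$. Balancing only shows that the upward and downward counts of $\trop(r_d,A_d)$ agree. The deformation $d\to 0$ relies on mixed volume being ``multilinear and continuous'' in the conewise linear functions. That does not give independence of $d$: the map $d\mapsto m_i^d$ is not additive, and local constancy of this integer in $d$ is exactly what has to be proved.

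\textbf{How to close it.} The missing argument is short and uses only what you already wrote down. By the remark $\corn(m\cdot\tvar)=\corn(m)\cdot\tvar$, the product $\corn(m_n^d)\cdot H$ is the corner locus of $m_n^d|_H=m_n$ computed inside $H\cong W^*$. The multiplicities agree because the lattice $L^*\times\set{0}$ is saturated in $L^*\times\ZZ$. Inductively, for a tropical variety $\tvar\subset H$, the product $\corn(m_i^d)\cdot\tvar=\corn(m_i^d\cdot\tvar)$ depends only on $m_i^d|_H=m_i$. The same remark applied inside $H$ turns it back into $\corn(m_i)\cdot\tvar$ computed in $W^*$. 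This gives $\trop(r_d,A_d)\cdot H=\corn(m_1)\cdots\corn(m_n)=\trop(r,A)$, the origin with multiplicity $\mv(r,A)$. With this, your proof is complete and slightly more economical than the paper's, since no auxiliary tropicalization has to be described.

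\textbf{Two further corrections.}

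First, it is false that no ray of $\trop(r_d,A_d)$ lies in $\set{\eta=0}$. The lift of $\canc_\omega(r,A)$ lives in the $n$-dimensional space $\omega^\perp\times\RR$, so total rank $n$ no longer forces its mixed volume to vanish.
For example, take the multiset $A=\set{0,0,1}\subset\ZZ$ with $r$ of rank one and heights $d_0\neq d_0'$ on the two copies of $0$. Then $\trop(r_d,A_d)$ is the set of edge normals of a triangle with a vertical edge, and it contains $\RR_{\geq 0}(-1,0)$ with multiplicity $|d_0-d_0'|$.
The same happens for honest sets. In \Cref{ex:hexagon}, at the outer normal $\omega$ of the red edge, the lifted cancellation $(\set{a_1,a_6},\set{a_2,a_5})$ spans $\omega^\perp\times\RR$ for generic $d$.
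Such rays are harmless for you because they miss $\set{\eta=\epsilon}$, and the paper's proof explicitly allows them. Your claim about the push-forward to $W^*$ should simply be replaced by the stable-intersection identity above.

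Second, the lattice index of an upward ray with primitive generator $(\omega',k)$ is $k$, not $1$. For $A=\set{0,2}\subset\ZZ$ and $d=(0,1)$, the upward ray is spanned by $(-1,2)$, has multiplicity one and index two, while $\mv(r,A)=2$.
So the argument actually proves $\mv(r,A)=\sum_\sigma k_\sigma\mult(\sigma)$. This is the stated formula only if the mixed volume of a cancellation is measured after projecting to $W$, as $\vol(M)$ is in \Cref{thm:mv}. The paper's proof passes over this point in the same way.
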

\begin{proof}
  Choose a constant function $q<d$ on $A$ and define
  $\tilde A := A_d\cup A_q\subset L\times \ZZ$. Further, define
  $\tilde r$ on $2^{\tilde A}$ as the composition of the projection to
  $L$ with $r$.

  The tropicalization of the MCI
  $(\tilde r, \tilde A)$ consists of the following rays:
  \begin{itemize}
  \item Those rays in $\trop (r_d,A_d)$ which meet the upper halfspace
    $W\times\RR_{>0}$, with the same multiplicities as in $\trop(r_d,A_d)$.
  \item Certain rays (of no interest for us) in the boundary plane $H:=W\times\set{0}$.
  \item The ray $\set{0}\times\RR_{<0}$ with multiplicity $\mv(r,A)$, this is by choice
    of $q$.
  \end{itemize}

  By \cite[Proposition
  3.6.12]{maclaganIntroductionTropicalGeometry2021} the stable
  intersection of $\trop(\tilde r, \tilde A)$ with the boundary plane
  $H$, which consists of the origin with a certain multiplicity, can
  be computed in the following two ways: Either we shift $H$ to the
  lower halfspace, obtaining the multiplicity $\mv(r,A)$, or we shift
  $H$ to the upper halfspace in which case we obtain the sum of the
  mixed volumes in the statement of the proposition. Hence these two
  quantities must coincide.
\end{proof}

Next we have

\begin{proposition}
  \label{prop:mvsimplex}
  Suppose that $(r_i,S_i)_i$, $i=1,\dots,k$, is a sequence of MCI's in $W$ such that
  the $S_i$ are mutually affinely independent of cardinality $s_i$ and such that
  $r_i(S_i) = s_i - 1$. Then
  \begin{enumerate}
  \item $\mv((r_i,A_i)_i) = 0$ if one of the $S_i$ is not a circuit of $r_i$.
  \item
    $\mv(r,A) = (s_1-1)!\hdots (s_k-1)!\vol_{\mu}(S_1+\hdots + S_k)$
    otherwise.
  \end{enumerate}
\end{proposition}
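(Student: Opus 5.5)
We reduce to a single MCI and compute its conewise linear functions explicitly. By \Cref{rem:mvremarks}, shift each $S_i$ to $\tilde S_i\ni 0$; the mixed volume of the sequence equals $\mv(\sum_i\tilde r_i,\bigsqcup_i\tilde S_i)$. Mutual affine independence means the linear spans $U_i$ of the $\tilde S_i$ form a direct sum $U_1\oplus\dots\oplus U_k$ with $\dim U_i=s_i-1$. So the sum $W'=\bigoplus U_i$ has dimension $\sum_i(s_i-1)=\sum_i r_i(S_i)$. We may therefore work in $W'$, where the tropicalization is a point with multiplicity, and pass to the quotient $W^*\to W'^*$.

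The key structural step is a product formula for stable intersections over a direct sum. For the direct-sum matroid $\sum_i\tilde r_i$ on the disjoint union, we claim that $\trop(\sum_i \tilde r_i,\bigsqcup_i\tilde S_i)$ is the stable intersection of the pullbacks of the $\trop(r_i,S_i)$ along the projections $W'^*\to U_i^*$. This should follow from \Cref{def:tropmci}, item \ref{defitem:tropmciseq}, together with the cancellation description in \Cref{prop:cancel}: the rank condition $r(\{\omega\ge m\})\ge j$ splits additively over the summands. The mixed volume is then $\prod_i \mv(r_i,S_i)$, computed in $U_i$ with the induced lattice, times a lattice index. Since each $S_i$ has its own simplex, this index is absorbed into the standard fact $\vol_\mu(\sum\Delta_i)=[\text{index}]\cdot\prod\vol(\Delta_i)$ for simplices spanning complementary sublattices. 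The same fact gives $\vol_\mu(S_1+\dots+S_k)$ up to the factor $\prod (s_i-1)!$ relating normalized and Euclidean simplex volumes. So it suffices to treat $k=1$: $S$ is an affinely independent set of $s$ points spanning a space of dimension $s-1$, and $r(S)=s-1$.

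For $k=1$, compute the functions $m_1,\dots,m_{s-1}$ directly. Order the values $\omega(a_1)\ge\dots\ge\omega(a_s)$ for a generic $\omega$. Then $m_j(\omega)$ is the $\omega$-value at which the top elements first reach rank $j$.

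If $S$ is a circuit, every $j$-subset with $j\le s-1$ is independent. Hence $m_j(\omega)=\omega(a_j)$, the $j$-th largest value, and $m_1,\dots,m_{s-1}$ are exactly the functions that tropicalize $s-1$ generic polynomials with support $S$. By the BKK/Kouchnirenko case (the remark after \Cref{def:mv} with \Cref{thm:tropvps} applied to a generic $V$ with uniform matroid), the mixed volume is the normalized volume $(s-1)!\vol_\mu(\Conv S)$.

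If $S$ is not a circuit, there is a dependent proper subset $D$; take it minimal. For $\omega$ maximizing on $D$ (an open cone, by affine independence), the cancellation's first block $D$ has rank $|D|-1$, and $m$ fails to reach the needed rank on a full-dimensional region. In that region one of the $m_j$ coincides with a larger-index one, forcing a loop $a\notin$ closure, i.e.\ a coloop-type degeneracy. We then show $\trop$ is empty. The cleanest route is to use \Cref{thm:tropvps}: realize $r$ by some $V$; a dependent proper subset means a row combination of $V$ is supported on fewer than $s$ points. That combination gives a generic polynomial with $|D|\le s-1$ affinely independent monomials, which has no torus roots jointly with the rest. Alternatively, when a coloop is present, some equation is a monomial. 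Hence the mixed volume is $0$.

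The main obstacle is making the product/direct-sum step rigorous at the level of stable intersections and lattice indices. If this proves delicate, we would instead realize all $r_i$ by generic matrices $V_i$ over $\CC$ (possible at least for representable matroids) and count roots directly. That route needs care for non-representable $r_i$, where we would fall back on the explicit $m_j$ computation above.
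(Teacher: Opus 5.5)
Your circuit case is fine: for the uniform matroid of rank $s-1$ on $s$ elements the $m_j$ are the order statistics of $\omega$ on $S$, and Kouchnirenko in the lattice $L\cap U$ gives $(s-1)!\vol(\Conv S)$. The product step you flag as the main obstacle is also standard, if you apply item~\ref{defitem:tropmciseq} of \Cref{def:tropmci} directly instead of detouring through the single MCI of \Cref{rem:mvremarks}. Each $\trop(r_i,S_i)$ is $U_i^\perp$ with multiplicity $\mv_{U_i}(r_i,S_i)$. These transversal linear spaces meet in the origin with multiplicity $[L:\bigoplus_i(L\cap U_i)]\cdot\prod_i\mv_{U_i}(r_i,S_i)$, and the same index relates $\vol_\mu(\Conv S_1+\dots+\Conv S_k)$ to the volumes of the individual simplices.

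The genuine gap is item 1, the non-circuit case. Your \emph{cleanest route} fails. As soon as $s\geq 3$ you can eliminate a column of $V$, so row combinations supported on at most $s-1$ points exist also when $S$ is a circuit. Such a polynomial can certainly have torus roots jointly with the others. For $S=\set{0,e_1,e_2}$ with the uniform matroid, the row-reduced system is $x_1-\alpha=x_2-\beta=0$. Each equation has $s-1=2$ affinely independent monomials, yet there is exactly one torus root, so your argument would also \emph{prove} $\mv=0$ here. Two further errors: the set of $\omega$ maximized exactly on $D$ is a cone of codimension $|D|-1$, not an open cone; and the sentence about some $m_j$ coinciding with a later one and \emph{forcing a loop} is not an argument.

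The missing observation is that a matroid of rank $s-1$ on $s$ elements has exactly one circuit $C$. For a basis $B$ one has $S=B\cup\set{e}$, and every circuit lies in $B\cup\set{e}$, which contains a unique one. So if $S$ is not a circuit, every $a\in S\setminus C$ is a coloop. Your \emph{alternatively, when a coloop is present} is therefore not an alternative but the only case, and it closes both points you left open.

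\emph{Representability.} The matroid is the uniform matroid of rank $|C|-1$ on $C$ together with $s-|C|$ coloops. It is representable over $\CC$ by $|C|$ generic vectors in a $(|C|-1)$-dimensional subspace plus generic further vectors. Hence non-representable $r_i$ never occur and \Cref{thm:tropvps} always applies. Without this, item 1 is unproven, because your fallback \emph{explicit $m_j$ computation} is never carried out in the non-circuit case.

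\emph{Vanishing.} For any representing $V$ one has $\ker V=\CC u$ with $\supp(u)=C$. So $\mathbf{1}_a$ annihilates $\ker V$ and is a row combination of $V$. The corresponding combination of the $f_i$ is the monomial $c_a\xx^a$, so a generic specialization has no torus roots and $\mv=0$.

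This is the paper's argument for item 1, and the paper then bypasses your product formula and Kouchnirenko altogether. Having representability, it realizes the whole sequence as one block-diagonal ECI. The substitution $y=\xx^{a-a_{i0}}$ turns it into a linear system with exactly one torus solution iff every block's kernel vector has full support, i.e.\ iff every $S_i$ is a circuit. The factor $\prod_i(s_i-1)!\,\vol_\mu(\Conv S_1+\dots+\Conv S_k)$ then enters as the degree of this monomial change of coordinates.
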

\begin{proof}
  Any matroid $r$ on a groundset of $n+1$ elements and of rank $n$ is
  representable over $\CC$ as follows: Such a matroid has a unique
  circuit (say of rank $r$). Choose any linear subspace
  $U\subset \CC^{n}$ of dimension $r$ and $r + 1$ general vectors
  $u_1,\dots,u_{r+1}\in U$. Then choose $n-r$ general vectors
  $u_{r+2},\dots,u_{n+1}\in \CC^n$. The matroid $r$ is then isomorphic
  to the linear matroid on $u_1,\dots,u_{n+1}$.

  Hence, after choosing a suitable basis of $L$ and applying a linear
  isomorphism whose determinant has absolute value
  $(s_1-1)!\hdots (s_k-1)!\vol_{\mu}(S_1+\hdots + S_k)$, our sequence of
  MCI's is represented by an ECI consisting of a linear system of
  equations whose coefficient matrix has a block-diagonal structure,
  with each block corresponding to one of the $S_i$. This linear
  system has exactly one solution in the torus if each $S_i$ is a
  circuit of $r_i$ and no solution otherwise, proving the statement by
  \Cref{thm:tropvps}.
\end{proof}

Finally we are ready to show:

\begin{theorem}
  \label{thm:mv}
  If $d\in \RR^A$ is sufficiently generic, i.e. avoids finitely many
  linear subspaces of $\RR^A$, then we have
  \[\mv(r,A) = \sum_{\substack{M\text{ dual tropical root }\\\text{of }(r,A)\text{ at }d}}\vol(M).\]
\end{theorem}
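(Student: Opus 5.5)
The plan is to combine \Cref{prop:liftroots} with \Cref{prop:mvsimplex}, with genericity of $d$ used to force every contributing cancellation to be of the simplicial form required by \Cref{prop:mvsimplex}.

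\textbf{Step 1: decompose by rays.} By \Cref{prop:liftroots}, we have
\[\mv(r,A)=\sum_{\omega}\mv\bigl(\canc_{(\omega,1)}(r_d,A_d)\bigr),\]
where $\omega$ ranges over the finitely many $\omega\in W^*$ such that $(\omega,1)$ spans a ray of $\trop(r_d,A_d)$. It therefore suffices to show the following for generic $d$. For each such $\omega$, write $\canc_{(\omega,1)}(r_d,A_d)=(r^{(\omega,1)}_{d,i},A^{(\omega,1)}_{d,i})_{i=1,\dots,k}$ and let $S_i$ be the projection of $A^{(\omega,1)}_{d,i}$ to $W$. Then either the mixed volume of the cancellation vanishes and $(S_1,\dots,S_k)$ is not a mixed cell, or $(S_1,\dots,S_k)$ is a mixed cell whose mixed volume equals $\vol(M)$. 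Conversely, every dual tropical root $M$ at $d$ arises from exactly one ray. The converse is essentially \Cref{def:tropicalroot}: the tropical root $\omega$ of $M$ is determined by $M$, because the $S_i$ together affinely span $W$, so the lifted points fix $(\omega,1)$ as their common maximizer. A ray with nonzero multiplicity is then exactly a ray of $\trop(r_d,A_d)$.

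\textbf{Step 2: genericity gives affinely independent lifted supports.} Each $A^{(\omega,1)}_{d,i}$ lies in an affine subspace orthogonal to $(\omega,1)$. I will show that the $S_i$ are mutually affinely independent whenever $d$ avoids finitely many linear subspaces. A failure of mutual affine independence means there is an affine dependency among points of the various $S_i$ modulo the spans of the others. Combined with the fact that the lifted points all lie on faces where the single linear form $(\omega,1)$ is maximal, this yields a nontrivial linear relation on the values of $d$. There are only finitely many combinatorial choices of subsets of $A$, so this excludes finitely many proper linear subspaces. This is the standard argument from \cite{huber1995} for regular mixed subdivisions, adapted to the recursive cancellation structure.

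By \Cref{ex:matroidcurve1}, a nonzero multiplicity forces $r_i(S_i)<|S_i|$ and $\sum_i r_i(S_i)=n$. Genericity should moreover force $|S_i|=r_i(S_i)+1$: an excess point would impose one extra linear condition on $d$. One has to check that the rank functions $r^{(\omega,1)}_{d,i}$ are exactly the iterated matroid quotients appearing in \Cref{def:pmixedcell}, but this holds by construction of both recursions.

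\textbf{Step 3: evaluate the mixed volumes.} With $(r_i,S_i)$ mutually affinely independent and $r_i(S_i)=s_i-1$, \Cref{prop:mvsimplex} gives the following. The mixed volume is $0$ unless each $S_i$ is a circuit of $r_i$. In that case $(S_1,\dots,S_k)$ is a mixed cell in the sense of \Cref{def:pmixedcell}, and the mixed volume equals $(s_1-1)!\cdots(s_k-1)!\,\vol_\mu(\sum\Conv S_i)=\vol(M)$. Here the projection from the hyperplane orthogonal to $(\omega,1)$ down to $W$ is lattice-preserving, so no volume factor is introduced. Summing over $\omega$ then yields the theorem.

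The main obstacle is Step 2: verifying that generic $d$ gives $|S_i|=r_i(S_i)+1$ and mutual affine independence for every level of the recursive cancellation simultaneously. The recursion changes the ground set at each step through the quotient and the set $B$, so the finitely many bad hyperplanes must be enumerated over all possible cancellation sequences, not just over faces of a single polytope.
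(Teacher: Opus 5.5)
Your proposal is correct and follows essentially the same route as the paper. You decompose $\mv(r,A)$ over the rays spanned by $(\omega,1)$ via \Cref{prop:liftroots}, choose $d$ generic so that every $\canc_{(\omega,1)}(r_d,A_d)$ has mutually affinely independent supports, and then use \Cref{ex:matroidcurve1} and \Cref{prop:mvsimplex} to see that each ray contributes either $\vol(M)$ for a dual tropical root $M$ or nothing. The one inessential difference is your separate genericity argument for $|S_i|=r_i(S_i)+1$, which you do not need: mutual affine independence already gives $\sum_i(|S_i|-1)\le n$, and together with $r_i(S_i)<|S_i|$ and $\sum_i r_i(S_i)=n$ this forces equality, which is how the paper implicitly gets it.
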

\begin{proof}
  Suppose that $d$ is chosen so that for each
  ray of $\trop(r_d,A_d)$ with generator
  $(\omega,1)\in W^{*}\times \RR$, the support multisets
  $S_1,\dots,S_k$ of $\canc_{(\omega,1)}(r_d,A_d)$ are mutually affinely
  independent. This is ensured if $d$ lies outside finitely many linear
  subspaces of $\RR^A$. By \Cref{prop:cancel}, each such ray has multiplicity
  $\mv(\canc_{(\omega,1)}(r_d,A_d))$. By \Cref{ex:matroidcurve1},
  \Cref{prop:mvsimplex} and \Cref{rem:mvremarks} this mixed volume is
  zero unless $(S_1,\dots,S_k)$ is a mixed cell of $(r,A)$. Finally,
  from \Cref{prop:liftroots}, we obtain the desired equality.
\end{proof}

\subsection{Mixed Subdivisions of ECI's and Real Patchworking}
\label{sec:realpatch}

In \cite[section 5]{esterovEngineeredCompleteIntersections2025},
so-called {\em patchworking} techniques for ECI's were given,
extending the classical tropical patchworking techniques given in
\cite{viro1982gluing, sturmfelsVirosTheoremComplete1994}. These
patchworking techniques allow, for a given ECI $(V,A)$, to construct
ECI's derived from $(V,A)$ whose \textit{real} topology depends only
the set of dual tropical roots of the MCI $(r,A)$ corresponding to
$(V,A)$ at some height vector. Using the algorithmic techniques in
\Cref{sec:tropcont}, we will be able to use
these techniques to algorithmically construct ECI's with a prescribed
real root count.

We first recall the theoretical results of
\cite{esterovEngineeredCompleteIntersections2025} in a form suited to
our needs and then discuss how to obtain effective methods based on
them.

\subsubsection{Real Patchworking of ECI's}
\label{sec:realpatchone}

Let $\xx$ be a set of $n$ variables. Throughout this section, we
denote for a set of polynomials $g_1,\dots,g_r\in \RR[\xx^{\pm}]$
\[\Var_{\RR}(g_1,\dots,g_r) = \setdes{p \in (\RR^{*})^n}{g_1(p)=\ldots = g_r(p) = 0}.\]
Let $(V, A)$ be an ECI with $A\subset \ZZ^n$ and
$V\subset \RR^{n\times A}$, i.e. we assume that our ECI consists of
$n$ equations in $n$ variables. Let $d\in \RR^A$ be a generic height
vector (in the sense of \Cref{thm:mv}).
\begin{definition}[Engineering]
  The \textit{engineering} of $(V,A)$ at $d$ is the set of polynomials
  \[f_{i,d} := \sum_{a\in A}v_{i,a}\cdot t^{d_a} \cdot \xx^a\in \RR[\xx^{\pm}],\quad i = 1,\ldots, n.\]
\end{definition}
Let $(r,A)$ be the MCI associated to $(V,A)$ and let $\mathcal{M}$ be the set of
dual tropical roots of $(r,A)$ at height vector $d$. For
$M\in \mathcal{M}$, let $\omega\in (\RR^n)^{*}$ be the corresponding tropical root. As
explained after \Cref{def:canc}, the cancellation
$\canc_{\omega}(r,A)$ induces a set of polynomials
$f_1^M(\xx),\ldots,f_n^M(\xx)$, which are given as the
$\omega$-leading terms of the polynomials corresponding to the row echelon
form of $V$ with the columns of $V$ sorted descendingly by
$\omega$-degree. We now have the following combination of Theorem 5.4 and
Proposition 5.10 in \cite{esterovEngineeredCompleteIntersections2025}:
\begin{theorem}
  \label{thm:realrootcount}
  Let $f_{i,d}$, $i = 1,\ldots, n$, be the engineering of $(V,A)$ at
  sufficiently general $d\in \RR^A$. Then for $t_0\gg 0$
  \[\# \Var_{\RR}(f_{1,d}(t_0,\xx),\ldots,f_{n,d}(t_0,\xx)) = \sum_{M\in \mathcal{M}}\#
    \Var_{\RR}(f_1^M,\ldots, f_n^M). \]
\end{theorem}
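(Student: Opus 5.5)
This is a Viro-type patchworking statement, and I would prove it by the classical route: localize each real root of the engineered system to a tropical root, rescale, and apply the implicit function theorem to the truncated system. The count on both sides then matches cell by cell.

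\emph{Step 1 (localization).} Fix $t_0$ large and write a real root as $p = (\epsilon_j t_0^{\omega_j} y_j)_j$ with signs $\epsilon_j$ and $y_j>0$. By \Cref{thm:tropvps} applied over the field of real Puiseux series in $t$, together with \Cref{thm:mv}, the valuations $\omega$ of roots of the engineered system are exactly the tropical roots at the generic height $d$. Each of them is a simple point of $\trop$ with multiplicity $\vol(M)$ for the corresponding dual tropical root $M\in\mathcal{M}$. I would then show that every root at large $t_0$ has $\log_{t_0}|p|$ within $o(1)$ of some tropical root $\omega$. This follows from a Puiseux/semialgebraic argument: the root set for large $t$ is given by finitely many Puiseux series branches, because the system is zero-dimensional for generic $t$ and the total count is bounded by $\mv(r,A)$.

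\emph{Step 2 (local model).} For each $M$ with tropical root $\omega$, I would perform the row reduction described after \Cref{def:canc}. That is, take the echelon form of $V$ with columns sorted by the $(\omega,1)$-weight $\omega(a)+d_a$. This replaces $(f_{i,d})$ by an equivalent system $(g_i)$ over $\RR(t)$ (with invertible transformation), which keeps the real root set unchanged. After the substitution $x = t^{\omega}y$ and division of $g_i$ by its leading power of $t$, I get $g_i = f_i^M(y) + O(t^{-\varepsilon})$ uniformly on compacts of $(\RR^*)^n$. Genericity of $d$ forces the supports $S_i$ of $f^M_i$ to be mutually affinely independent simplex-like circuits. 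By \Cref{prop:mvsimplex}, $f^M$ is then a nondegenerate system whose complex torus roots are all simple and number $\vol(M)$; moreover they are finitely many and lie in a compact subset of $(\CC^*)^n$.

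\emph{Step 3 (count).} Choose small disjoint neighbourhoods of the roots of $f^M$. By the implicit function theorem, for $t_0\gg0$ each real root of $f^M$ yields exactly one real root of $g$ nearby, and each non-real root yields no real root. Conjugation symmetry is what rules out the non-real case: a real root near a non-real simple root would force two roots there. Rouché's theorem (multivariate) shows that the roots of $g$ near the $y$-compact region number exactly $\vol(M)$. Summing over $M$ and using \Cref{thm:mv} accounts for all $\mv(r,A)$ complex roots, so no roots are left elsewhere. The real ones therefore total $\sum_M \#\Var_\RR(f^M)$.

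The main obstacle is Step 2: justifying that the row-reduced, rescaled system converges to $f^M$ including the lower cancellation blocks. The later blocks have smaller $t$-exponents, so I would rescale each group of rows separately by its own degree $d_j$. I would also need to check that the echelon-form coefficients are rational functions of $t$ whose leading terms are the generic-$p$ echelon coefficients of the specialized $V$. This is exactly where the generic choice of $d$ and the matroidal description of $m_i$ enter.
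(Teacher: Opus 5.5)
The paper does not prove this statement itself. It imports it as a combination of Theorem~5.4 and Proposition~5.10 of \cite{esterovEngineeredCompleteIntersections2025}. Your Viro-type route is the natural self-contained one, and the local analysis in Steps~2--3 is essentially sound. The step that turns the count into an equality rather than a lower bound is, however, not established.

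Step~3's ``no roots are left elsewhere'' needs three facts for $t_0\gg 0$. The engineered system must be zero-dimensional, it must have at most $\mv(r,A)$ roots in $(\CC^*)^n$, and each root must have the form $t_0^{\omega}(y_0+o(1))$ with $\omega$ a tropical root and $y_0\in(\CC^*)^n$. You support this only by \Cref{thm:tropvps} ``over Puiseux series'' and an asserted bound. But \Cref{thm:tropvps} concerns a \emph{generic} specialization $p\in\CC^A$ and the trivial valuation. The engineering is one specific specialization, $c_a=t^{d_a}$: all its leading coefficients equal $1$, and it moves along a monomial curve in parameter space. Nothing you cite bounds the number of roots of such a special specialization by $\mv(r,A)$. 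Nor does it exclude roots (possibly real) at scalings away from every tropical root, or positive-dimensional components. Separately, $\log_{t_0}|x|=\omega+o(1)$ is too weak to put $t_0^{-\omega}x$ into the compact region where your Rouch\'e/implicit-function argument operates.

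The missing idea is that degenerate leading coefficients do no harm, because $V$ is constant. Work over the field $K$ of Puiseux series in $t^{-1}$ (for rational $d$, say). Then $x$ is a root iff $(t^{d_a}x^a)_a\in\ker V$. So if $x=t^{\omega}(y_0+\cdots)$ with $y_0\in(\CC^*)^n$, then for every $u$ in the row space of $V$ the maximum of $d_a+\omega(a)$ on $\supp u$ is attained at least twice. Otherwise a single term of $\sum_a u_a t^{d_a}x^a=0$ would dominate.

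Read along $\canc_{(\omega,1)}(r_d,A_d)$, this says every block matroid is coloop-free. Genericity of $d$ then forces the blocks to be circuits forming a mixed cell, i.e.\ $\omega$ is a tropical root. The leading coefficient $y_0$ is then a root of $f^M$. These roots are simple, so Hensel's lemma gives a bijection between roots over $K$ with exponent $\omega$ and roots of $f^M$. Hence there are exactly $\sum_M\vol(M)=\mv(r,A)$ roots over $K$, and the family is zero-dimensional for $t_0\gg 0$. Its torus roots are then precisely the values of these convergent branches, and your conjugation argument finishes, with no appeal to \Cref{thm:tropvps}.

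Two smaller points. First, \Cref{prop:mvsimplex} is a generic count and says nothing about the specific coefficients of $f^M$. Simplicity and the count $\vol(M)$ come instead from each $S_j$ being a circuit. That makes every maximal minor of the block coefficient matrix nonzero (see the remark after \Cref{ltriv}), so $f^M$ is a binomial system with nonzero coefficients.

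Second, Step~2 is not the real obstacle. You row-reduce the constant matrix $V$ with columns sorted by $\omega(a)+d_a$, so no rational functions of $t$ and no generic-$p$ coefficients appear. Dividing each group of rows by $t_0$ raised to its $(\omega,1)$-degree gives $f^M+O(t_0^{-\delta})$ uniformly on compacts.
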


\subsubsection{Counting Real Roots of Polynomial Systems with Simplicial Supports}
\label{sec:realpatchtwo}

A computational application of \Cref{thm:realrootcount} requires two
things: First computing the set $\mathcal{M}$ of dual tropical roots of an ECI
at a given height vector and second counting the number of real
solutions of the cancellations $f_1^M,\ldots, f_n^M$ for each $M\in \mathcal{M}$. The first
step will be treated in \Cref{sec:tropcont}. The second step can be performed
by using existing methods and software for computing real solutions of polynomial
systems, see e.g. \cite{berthomieu2021}. This is a challenging task in general.

It turns out, however, that a computationally much simpler procedure
exists for systems of the shape $f_1^M,\ldots, f_n^M$, making use of the
fact that the support of such systems consists of mutually transversal
simplices. We now describe this procedure.

To this end, we adopt the following notation for the rest of the section:
Let $S_i\subset\ZZ^n$, $i=1,\ldots,k$, be finite sets of cardinality
$$
|S_i|=d_i+1.
$$
Their disjoint union is understood as a union of labeled supports, so
the same exponent vector may occur in different sets $S_i$. We choose
an ordering on each $S_i$ by writing
$$
S_i=\{a_{i0},a_{i1},\ldots,a_{id_i}\}.
$$
We further assume that all the $S_i$ are mutually affinely independent, see
\Cref{def:pmixedcell}. This condition says that the sets $S_i$ are the
vertex sets of mutually transversal simplices.
Put
$$
D:=d_1+\cdots+d_k.
$$
In particular, $D\leq n$.

For every $i$, let
$$
\sigma_i:S_i\longrightarrow\{-1,+1\}
$$
be a sign function, and let
$$
U_i=(u_{ij}(a))_{\substack{1\leq j\leq d_i\\a\in S_i}}
$$
be a real $d_i\times(d_i+1)$ matrix of rank $d_i$. We study real roots in
$(\RR^{*})^n$ of the system of Laurent polynomials
\begin{equation}
  \label{eq:polsys}
  f_{ij}(x)=0, \qquad i=1,\ldots,k,\quad j=1,\ldots,d_i,
\end{equation}
where
\begin{equation*}
  f_{ij}(x)=\sum_{a\in S_i}\sigma_i(a)u_{ij}(a)x^a,\qquad x^a=x_1^{a_1}\cdots x_n^{a_n}.
\end{equation*}
We will characterize in terms of the support sets $S_i$ and the signs
of the maximal minors of $U_i$, whether the system \eqref{eq:polsys}
has solutions in a given orthant of $\RR^n$. For square systems of
equations of the shape \eqref{eq:polsys}, i.e. with $D=n$, this will
count the total number of real solutions in $(\RR^*)^n$, because each
orthant contains at most one solution.

We introduce the following further piece of notation: For every $i$,
define the \textit{cofactor vector} $c_i\in\RR^{S_i}$ via
$$
c_i(a_{ik})
:=
(-1)^k\det U_i[A_i\setminus\{a_{ik}\}],
\qquad
k=0,\ldots,d_i.
$$
Here, $U_i[A_i\setminus\{a_{ik}\}]$ denotes the matrix $U_i$ with the column corresponding
by $a_{ik}$ removed.
The cofactor identity gives
$$
U_ic_i=0.
$$
Since $U_i$ has rank $d_i$, its kernel is one-dimensional, and therefore
$$
\ker U_i=\RR c_i.
$$

We start our characterization of the number of real roots of
\eqref{eq:polsys} with the following simple case:
\begin{lemma}\label{ltriv}
  Once any maximal minor of any of the matrices $U_i$ is zero, the
  system of equations \eqref{eq:polsys} has no solutions in $(\RR^{*})^n$.
\end{lemma}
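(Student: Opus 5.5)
Fix an index $i$ and a point $a_{ik}\in S_i$ with $\det U_i[A_i\setminus\{a_{ik}\}]=0$. The plan is to show that the block $f_{i1},\dots,f_{id_i}$ of \eqref{eq:polsys} already has no solutions in $(\RR^{*})^n$. The remaining equations play no role.

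First we rewrite the block as a linear system. Write $y_a:=\sigma_i(a)x^a$ for $a\in S_i$. Then the equations $f_{ij}(x)=0$, $j=1,\dots,d_i$, say exactly that the vector $y=(y_a)_{a\in S_i}$ lies in $\ker U_i$. Since $U_i$ has rank $d_i$, this kernel is the line $\RR c_i$, as noted just before the lemma. So any solution $x\in(\RR^{*})^n$ gives $y=\lambda c_i$ for some $\lambda\in\RR$.

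Next we use that $x$ lies in the torus. Every coordinate $y_a=\pm x^a$ is then nonzero, so $\lambda\neq 0$, and every coordinate of $c_i$ must also be nonzero. But $c_i(a_{ik})=\pm\det U_i[A_i\setminus\{a_{ik}\}]=0$ by assumption. Hence $y_{a_{ik}}=0$, which contradicts $x^{a_{ik}}\neq 0$. Therefore the system has no solutions in $(\RR^{*})^n$.

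There is no real obstacle here. The only point to verify is that the kernel is spanned by the cofactor vector, and this is given by the rank assumption on $U_i$ together with the cofactor identity stated above. Mutual affine independence of the $S_i$ is not needed for this lemma.
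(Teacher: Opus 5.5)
Your proof is correct and follows essentially the same route as the paper: the $i$-th block says $(\sigma_i(a)x^a)_{a\in S_i}\in\ker U_i=\RR c_i$, and a vanishing maximal minor forces a zero coordinate of $c_i$, which contradicts $x^a\neq 0$ on the torus. Your remark that mutual affine independence of the $S_i$ plays no role here is also accurate.
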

\begin{proof}
  If a maximal minor of $U_i$ vanishes, then one coordinate $c_i(a)$
  is zero. Every vector in $\ker U_i$ then has $a$-coordinate equal to
  zero. On the other hand, for every $x\in(\RR^*)^n$, the coefficient
  $\sigma_i(a)x^a$ is nonzero. Hence the coefficient vector
$$
(\sigma_i(a)x^a)_{a\in A_i}
$$
cannot belong to $\ker U_i$. Thus the equations belonging to the
$i$-th block of \eqref{eq:polsys} cannot be satisfied.
\end{proof}
We note that systems of the form \eqref{eq:polsys} arising as
leading terms of cancellations of ECI's never satisfy the condition of
\Cref{ltriv}. This is because of the third condition in
\Cref{def:pmixedcell}. We therefore now study the case where the
condition of \Cref{ltriv} is not satisfied.

To this end, let
$$
b_i(a_{ik})\in\FF_2 \quad \text{ so that }
\quad (-1)^{b_i(a_{ik})}
=
\sgn\left(\sigma_i(a_{ik}) c_i(a_{ik})\right),
\qquad
k=0,\ldots,d_i.
$$
Note that reordering $S_{i}$ either preserves all values of $b_i$ or
alters all of them. The subsequent usage of $b_i$ is indifferent to
this change. More precisely, let $\FF_2^{S_i}$ be the vector
space of functions $S_i\to\FF_2$, and let
${\rm const}_i\simeq\FF_2$ be the subspace of constant
functions on $S_i$. Then the class $[b_i]$ of the function
$b_i:S_i\to\FF_2$ in the quotient
$$
H_i:=\FF_2^{A_i}/{\rm const}_i
$$
does not depend on the ordering of $S_i$.

The set $E$ of orthants in $\RR^n$ can be identified with
the vector space
$\mathbb F_2^n$ by associating to
$e=(e_1,\ldots,e_n)\in \mathbb F_2^n$ the orthant
$$
\mathcal O_e
=
\setdes{x\in (\RR^{*})^n}{(-1)^{e_\ell}x_\ell>0\text{ for all }\ell}.
$$

We will define an $\FF_2$-linear map $\Lambda$ from $E$ to
$H:=\bigoplus_{i=1}^r H_i$, depending only on the sets $S_i$, whose
fiber over $$b:=([b_1],\ldots,[b_r])\in H$$ coincides, under this
identification, with the sought set $\{$orthants with a solution of
the system \eqref{eq:polsys}$\}$. Reducing the elements contained in
the $S_i$ modulo $2$, we define $\Lambda:E\longrightarrow H$ as the sum of
$$
\Lambda_i:E\longrightarrow H_i,
\qquad
\Lambda_i(e)
:=\left[
a\longmapsto\sum_{\ell=1}^n a_\ell e_\ell
\right].
$$

We finally have the following result:

\begin{theorem}\label{thm:rootcounting}
  Assume that the system of equations \eqref{eq:polsys} does not
  satisfy the conditions of \Cref{ltriv}. Under the identification of
  $\FF_2^n$ with the group of orthants $E$ given above, we have
  \[\setdes{e\in E}{\mathcal{O}_e\text{ contains a solution of \eqref{eq:polsys}}} \cong \Lambda^{-1}(b).\]

  Consequently, the total number of orthants with a solution of
  \eqref{eq:polsys} (and thus the total number of real toric solutions
  for square systems of the shape \eqref{eq:polsys}) not satisfying the
  condition of \Cref{ltriv}) equals
  \begin{itemize}
  \item $0$ if $b$ is not contained in the image of $\Lambda$;
  \item $2^{n-\operatorname{rank}_{\mathbb F_2}\Lambda}$ otherwise.
  \end{itemize}
\end{theorem}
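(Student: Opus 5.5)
Reduce the $i$-th block of \eqref{eq:polsys} to a sign condition on the monomials. Since $U_i$ has rank $d_i$ and $\ker U_i=\RR c_i$, a point $x\in(\RR^*)^n$ satisfies the $d_i$ equations of block $i$ if and only if the vector $(\sigma_i(a)x^a)_{a\in S_i}$ lies in $\RR c_i$. That is, there is $\lambda_i\in\RR^*$ with $\sigma_i(a)x^a=\lambda_i c_i(a)$ for all $a\in S_i$. By the hypothesis (negation of \Cref{ltriv}) all $c_i(a)\neq 0$, so $\lambda_i\neq0$ automatically. Dividing by $\sigma_i(a)$, the condition becomes
\[
x^a=\lambda_i\,\sigma_i(a)c_i(a)\quad\text{for all } a\in S_i .
\]
This splits into a sign part and an absolute-value part:
\[
\sgn(x^a)=\sgn(\lambda_i)(-1)^{b_i(a)}
\quad\text{and}\quad
|x^a|=|\lambda_i|\,|c_i(a)|.
\]

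The proof then has two steps.

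\textbf{Sign part.} For $x\in\mathcal O_e$ we have $\sgn(x^a)=(-1)^{\sum_\ell a_\ell e_\ell}$. So the sign condition says that the function $a\mapsto \sum_\ell a_\ell e_\ell + b_i(a)$ is constant on $S_i$, where the constant is the bit of $\sgn\lambda_i$ and $\lambda_i$ is free. This is exactly $\Lambda_i(e)=[b_i]$ in $H_i$. Hence the orthant $\mathcal O_e$ can contain a solution only if $\Lambda(e)=b$.

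\textbf{Absolute-value part.} Conversely, suppose $\Lambda(e)=b$. Fix $\epsilon_i=\pm1$ accordingly. I then need to show that the system $|x^a|=|\lambda_i||c_i(a)|$, $a\in S_i$, $i=1,\dots,k$, has a solution with $|x|\in\RR_{>0}^n$ and some $|\lambda_i|>0$. Passing to logarithms with $y=\log|x|\in\RR^n$ and $\mu_i=\log|\lambda_i|$, this becomes the linear system
\[
\langle a,y\rangle-\mu_i=\log|c_i(a)|.
\]
Subtracting the equation for $a_{i0}$ eliminates $\mu_i$ and leaves
\[
\langle a_{ij}-a_{i0},y\rangle=\log|c_i(a_{ij})|-\log|c_i(a_{i0})|,
\qquad j=1,\dots,d_i .
\]
Mutual affine independence of the $S_i$ means the $D$ vectors $a_{ij}-a_{i0}$ are linearly independent. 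Therefore this system in $y$ is solvable, with an affine solution space of dimension $n-D$; $\mu_i$ is then determined. Taking $x_\ell=(-1)^{e_\ell}e^{y_\ell}$ gives a point of $\mathcal O_e$. It solves every block by the sign computation, with $\sgn\lambda_i$ chosen as dictated. This proves the bijection with $\Lambda^{-1}(b)$.

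\textbf{Counting.} $\Lambda^{-1}(b)$ is empty if $b\notin\operatorname{im}\Lambda$, and otherwise a coset of $\ker\Lambda$, of size $2^{n-\operatorname{rank}\Lambda}$. For the square case $D=n$, the solution set in $y$ is a single point, so each orthant contains at most one solution. The number of real toric solutions therefore equals the number of orthants.

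The main point to get right is the linear-algebra step: the logarithmic system is consistent and uniquely solvable in the square case. This rests exactly on mutual affine independence, and it also requires checking that $\Lambda$ is well defined on the quotient by constants.
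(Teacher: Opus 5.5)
Your proof is correct and follows the paper's argument essentially step for step. You split the block condition $(\sigma_i(a)x^a)_{a\in S_i}\in\RR c_i$ into a sign condition, which is exactly $\Lambda(e)=b$, and an absolute-value condition, which you solve through the logarithmic system $\langle a_{ij}-a_{i0},y\rangle=\log|c_i(a_{ij})/c_i(a_{i0})|$; its coefficient vectors are linearly independent by mutual affine independence, and it is uniquely solvable when $D=n$. One small nit: $\lambda_i\neq 0$ follows from $x^a\neq 0$, not from $c_i(a)\neq 0$; the nonvanishing of the $c_i(a)$ is what lets you take $\log|c_i(a)|$ and define the sign bits $b_i$.
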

\begin{proof}
  By assumption, every coordinate of every cofactor vector $c_i$ is
  nonzero. Fix $e\in E$ and define
  \begin{align*}
    q_\ell&:=(-1)^{e_\ell}\\
    q&:=(q_1,\ldots,q_n),
  \end{align*}
  and consider the corresponding orthant $\mathcal O_e$. Every
  $x\in\mathcal O_e$ can be written uniquely as $x = q\cdot y$ for some
  $y\in\RR_{>0}^n$ under coordinatewise multiplication. Then, for every
  $a\in \ZZ^n$, we have
  \[x^a = q^ay^a\quad \text{ where }\quad q^a=(-1)^{\sum_{\ell=1}^n a_\ell e_\ell}.\]

  For every block of \eqref{eq:polsys} indexed by $S_i$, the equations
  $f_{ij}(x)=0$, $j=1,\ldots,d_i$, are equivalent to
  $$
  \left(\sigma_i(a)x^a\right)_{a\in S_i}\in\ker U_i=\RR c_i.
  $$
  Hence if $x$ is a solution of \eqref{eq:polsys} then for every $i$
  there is a sign $\eta_i\in\{\pm1\}$ such that
  $$
  \sigma_i(a)q^a
  =
  \eta_i\sgn c_i(a)
  $$
  for every $a\in S_i$. Equivalently,
  $$
  q^a\sgn\bigl(\sigma_i(a)c_i(a)\bigr)
  $$
  is independent of $a\in A_i$.
  By the definition of $b_i$, this says that the function
  $$
  a\longmapsto \left[\sum_{\ell=1}^n a_\ell e_\ell+b_i(a)\right]
  $$
  is constant on $A_i$. Since addition and subtraction coincide in
  $\mathbb F_2$, this is equivalent to
  $$
  \left[
    a\longmapsto\sum_{\ell=1}^n a_\ell e_\ell
  \right]
  =
  [b_i].
  $$
  Requiring this for every $i$ gives
  $$
  \Lambda(e)=b.
  $$
  Thus every solution-containing orthant of $\RR^n$ yields an
  element of $\Lambda^{-1}(b)$.

  Conversely, suppose that $\Lambda(e)=b$. Then, for every $i$, the function
  $$
  a\longmapsto \left[\sum_{\ell=1}^n a_\ell e_\ell+b_i(a)\right]
  $$
  is constant on $S_i$. Consequently, there is a sign
  $\eta_i\in\{\pm1\}$ such that
  $$
  s_i(a)q^a
  =
  \eta_i\sgn c_i(a)
  $$
  for every $a\in S_i$. It remains to construct the absolute value of
  the root corresponding to $e$. Consider the linear map
  \begin{align*}
    &L:\RR^n\longrightarrow\RR^D,\\
    &L(v) := \left(
    \langle a_{ik}-a_{i0},v\rangle\right)_{\substack{1\leq i\leq r\\1\leq k\leq d_i}}.  
  \end{align*}
  The vectors $a_{ik}-a_{i0}$ are pairwise linearly independent by the
  transversality assumption on the $S_i$. Therefore $L$ has rank $D$ and
  is surjective. We can thus choose $v\in\RR^n$ satisfying
  $$
  \langle a_{ik}-a_{i0},v\rangle
  =
  \log\frac{|c_i(a_{ik})|}{|c_i(a_{i0})|}
  $$
  for every $i$ and every $k=1,\ldots,d_i$. Put
  $$
  y_\ell=e^{v_\ell},
  \qquad
  \ell=1,\ldots,n.
  $$
  Then, for every $i$, there is a positive number $\mu_i$ such that
  $$
  y^a=\mu_i|c_i(a)|
  $$
  for all $a\in S_i$. Finally setting $x:=q\cdot y\in\mathcal O_e$, we obtain
  $$
  \sigma_i(a)x^a
  =
  \sigma_i(a)q^ay^a
  =
  \eta_i\mu_i c_i(a).
  $$
  Hence
  $$
  \left(\sigma_i(a)x^a\right)_{a\in S_i}
  =
  \eta_i\mu_i c_i\in\ker U_i,
  $$
  and so $x$ is a solution of \eqref{eq:polsys} contained in $\mathcal O_e$. We have
  thus finally proven
  $$
  e\in\Lambda^{-1}(b)
  \quad\Longleftrightarrow\quad
  \mathcal O_e\text{ contains a solution of \eqref{eq:polsys}}.
  $$
  Because $e\mapsto\mathcal O_e$ is a bijection from $\FF_2^n$ to the group
  of orthants of $\RR^n$, it restricts to a bijection from the fiber
  $\Lambda^{-1}(b)$ to the set of orthants containing a solution of
  \eqref{eq:polsys}.

  If $b\notin\operatorname{im}\Lambda$, this fiber is empty. Otherwise it is an
  affine space in $\FF_2^n$ of dimension
  $\dim\ker \Lambda = n-\operatorname{rank} \Lambda$. Such an affine
  space has cardinality $2^{n-\operatorname{rank}_{\mathbb F_2}\Lambda}$. 

  Finally, if $D=n$, the map $L$ is an isomorphism. Thus the vector
  $v$, and hence the solution $x$ constructed above in a prescribed orthant, is
  unique. Therefore, for square systems, the number of real toric solutions equals
  the number of solution-containing orthants.
\end{proof}

\Cref{thm:rootcounting} allows us to easily evaluate the right hand
side of the equality in \Cref{thm:realrootcount} given a mixed
subdivision of an ECI by solving a linear system. This will be applied
to an example in \Cref{sec:examples}.

\subsection{Tropical Homotopy Continuation for MCI's}
\label{sec:tropcont}

Next, we want to design an algorithm that computes a mixed subdivision
of the MCI $(r,A)$ at some given sufficiently generic height vector
$d\in \RR^A$. For this we design a {\em tropical homotopy continuation}
algorithm similar to \cite{jensenTropicalHomotopyContinuation2016,
  malajovichComputingMixedVolume2017a,
  daiseyFrameworkGeneralizedTropical2024}. The core idea is the
following: Given a mixed subdivision of $(r,A)$ at some other height
vector $d'\in \RR^A$ we follow the continuous path
$d(t):=(1-t)d' + td$ from $t = 0$ to $t = 1$, detecting at which
$t\in [0,1]$ the given mixed subdivision at height $d(t)$ changes. As in
the other tropical homotopy continuation algorithms mentioned above,
this relies crucially on the fact that, given a mixed cell $M$ of
$(r,A)$, the set of all $d\in \RR^A$ such that $M$ is a dual tropical
root of $(r,A)$ at $d$ forms the relative interior of a convex
polyhedral cone defined by linear inequalities which can be explicitly
stated. We now give the description of this convex polyhedral cone which
we call a {\em mixed cell cone}.

We assume for the remainder of this section that $W = \RR^n$ and
$\ZZ^n =: L\subset W$ is the lattice spanned by the standard unit vectors.

The linear inequalities defining mixed cell cones arise from vectors
of the following kind:

\begin{definition}[Affine Relation, Affine Circuit]
  Let $A_1,\hdots,A_k\subset \ZZ^n$ be finite multisets. An {\em affine relation}
  of $(A_1,\hdots,A_k)$ is a vector $c\in \RR^{A_1}\times \hdots \times \RR^{A_k}$ such that
  \begin{align*}
    &\sum_{j=1}^k\sum_{a\in A_j}c_aa = 0\text{ and}\\
    &\sum_{a\in A_j}c_a=0\; \forall j\in \irng{1,k}.
  \end{align*}
  The vector $c$ is called an {\em affine circuit} if, for each
  $j\in \irng{1,k}$, $c$ induces a minimal affine dependence relation of
  $A_j$ modulo the linear spans of
  $A_1,\ldots,A_{j-1},A_{j+1},\ldots, A_k$.  If the $A_i$ are all contained in a
  finite multiset $A\subset \ZZ^n$ and $c$ is an affine relation of
  $(A_1,\hdots,A_k)$ then we write $\pi_A(c)$ for the image of $c$ under
  the canonical projection $\RR^{A_1}\times \hdots \times \RR^{A_k}\rightarrow \RR^A$.
\end{definition}

\begin{remark}
  Note that if $a\in A$ appears more than once than we consider the
  vector $c\in \RR^A$ with entry $1$ at one copy of $a$, entry $-1$ at
  another copy of $a$ and all other entries equal to $0$ as an affine
  circuit of $A$.
\end{remark}

Let $M = (S_1,\dots,S_k)$ be a mixed cell of $(r,A)$. We write
\[A_i:=\setdes{a\in A}{r(S_1\cup \hdots \cup S_{i-1}\cup \set{a}) = r(S_1\cup \hdots \cup S_{i-1}) + 1}.\]
 Due to the affine independence condition on $M$, for every
$a\in A_i\setminus S_i$, there is a unique affine circuit $c^{(i,a)}$ of
$(S_1,\hdots,S_i\cup \set{a},S_{i+1},\hdots,S_k)$ such that
$c^{(i,a)}_a = -1$. Moreover, there is a unique affine circuit
$c^{(i,i+1)}$ of $(S_1,\hdots,S_i\cup S_{i+1},\hdots,S_k)$ such that for
any $s\in S_{i+1}$ we have $\pi_A(c^{(i,i+1)}) = \pi_A(c^{(i,s)})$. Having
introduced this notation, we can now describe the set of all
$d\in \RR^A$ such that $M$ is a dual tropical root of $(r,A)$ at $d$, i.e.
the mixed cell cones:

\begin{lemma}
  \label{thm:mixedcone}
  For a mixed cell $M$ of $(r,A)$, the euclidean closure $C_M(r,A)$ of the set
  \[C_M^{\circ}(r,A)=\setdes{d\in \RR^A}{M\text{ is a dual tropical root of }(r,A)\text{ at }d}\]
  is a polyhedral cone, equal to the intersection of the two cones
  \[\setdes{d\in \RR^A}{\pi_A(c^{(i,a)})\cdot d \geq 0 \quad \forall a \in A_i\setminus (A_{i+1}\cup S_i)\forall i\in \irng{1,k}}\]
  and
  \[\setdes{d\in \RR^A}{\pi_A(c^{(i,i+1)})\cdot d \geq 0 \quad \forall i\in \irng{1,k-1}}.\]
\end{lemma}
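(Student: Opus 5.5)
The plan is to unwind \Cref{def:tropicalroot} by following the recursive cancellation $\canc_{(\omega,1)}(r,A_d)$ step by step, and to record at each step the linear conditions on $d$. Fix the mixed cell $M=(S_1,\dots,S_k)$. By mutual affine independence, the affine span of $(S_1)_d+\dots+(S_k)_d$ determines a unique $\omega=\omega(d)\in W^{*}$. More precisely, there is a unique $\omega$ together with constants $h_1,\dots,h_k$ such that
\[\omega(s)-d_s=-h_i \quad\text{for all } s\in S_i ,\]
because the linear map $L$ used in the proof of \Cref{thm:rootcounting} is an isomorphism when $D=n$. Both $\omega$ and the $h_i$ depend linearly on $d$. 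The first task is to show that $M$ is a dual tropical root at $d$ if and only if this particular $\omega$ works, so everything reduces to inequalities in $d$.

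\textbf{Step 1 (first level of the cancellation).} $A^{(\omega,1)}_{d,1}$ is the set of maximizers of $a\mapsto \omega(a)-d_a$ over $A_1=A$. Requiring this set to equal $S_1$ means: for each $a\in A\setminus S_1$ we need $\omega(a)-d_a<-h_1$. Substituting the expression of $\omega$ in terms of $d$, this quantity is $\pm\pi_A(c^{(1,a)})\cdot d$, since $c^{(1,a)}$ is exactly the affine relation that eliminates $\omega$. The normalization $c^{(1,a)}_a=-1$ should give the sign convention $\pi_A(c^{(1,a)})\cdot d>0$.

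\textbf{Step 2 (recursion).} Pass to the quotient $q$ by $S_1$ and the set $B=\{a : q(a)>0\}=A_2$. Repeat the argument at level $i$ with ground set $A_i$. We need $S_i$ to be the exact set of maximizers of $\omega(a)-d_a$ on $A_i$, and we also need the value $-h_i$ to be strictly less than $-h_{i-1}$. The latter is required so that the cancellation genuinely splits into separate levels, i.e. so that the rows of the echelon form occur in distinct degree groups.

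\textbf{Step 3 (matching the inequalities).} The condition $h_i>h_{i-1}$ corresponds to $\pi_A(c^{(i-1,i)})\cdot d>0$. This follows from $\pi_A(c^{(i-1,i)})=\pi_A(c^{(i-1,s)})$ for $s\in S_i$: it says that $s$ lies strictly below the level-$(i-1)$ maximum.

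\textbf{Step 4 (removing redundant inequalities).} For $a\in A_i\cap A_{i+1}$ with $a\notin S_i$, the level-$i$ inequality is implied by the level-$(i+1)$ inequality for $a$ combined with $h_{i+1}>h_i$. That is why these indices are omitted and only $a\in A_i\setminus(A_{i+1}\cup S_i)$ appear. If $a\notin A_{i+1}$, then $a$ is absorbed at level $i$, so its inequality must be imposed there.

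\textbf{Step 5 (closure).} So far $C^\circ_M$ is cut out by the strict versions of these finitely many linear inequalities. It is then a relatively open polyhedral cone, and it is nonempty since mixed cells occur for generic $d$, or one can exhibit an explicit $d$. Its closure is therefore the cone given by the non-strict inequalities.

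The main obstacle is the bookkeeping in Steps 1–3. I must verify that the sign of $\pi_A(c^{(i,a)})\cdot d$ matches the strict inequality ``$a$ lies strictly below level $i$''. I must also check that the matroid quotient sets $B$ produced by the cancellation coincide with the $A_i$ defined in the paper. For the latter I would use $r(S_1\cup\dots\cup S_{i-1}\cup\{a\})$ together with the fact that each $S_j$ is a circuit of the successive quotient. To handle the signs, I would write $\omega(a)-d_a+h_i$ explicitly as a linear combination using the relation $\sum c_a a=0$ with the block sums equal to zero, and check that the coefficient of $d_a$ equals $-c_a=1$.
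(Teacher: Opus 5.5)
Your plan follows the paper's proof. Both unwind $\canc_{(\omega,1)}(r,A_d)$ into two conditions: $S_i$ is exactly the set of maximizers on $A_i$, and the levels strictly decrease. Both then discard the inequalities for $a\in A_{i+1}$ as implied by later levels, and convert each remaining strict inequality into a circuit functional as in Lemma~4.4 of \cite{jensenTropicalHomotopyContinuation2016}.

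However, you have a sign error. The cancellation uses the covector $(\omega,1)$ on the lifted points $(a,d_a)$, so the quantity to maximize is $\omega(a)+d_a$, not $\omega(a)-d_a$. With your convention, the computation you propose gives
\[\pi_A(c^{(i,a)})\cdot d=h_i+\omega(a)-d_a,\qquad \pi_A(c^{(i-1,i)})\cdot d=h_{i-1}-h_i.\]
This uses $\sum_x c_x x=0$, $c_a=-1$, $\sum_{s\in S_i}c_s=1$, and all other block sums zero. Your conditions $\omega(a)-d_a<-h_i$ and $h_i>h_{i-1}$ therefore become $\pi_A(c^{(i,a)})\cdot d<0$ and $\pi_A(c^{(i-1,i)})\cdot d<0$. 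You would be describing $-C_M(r,A)$, the lower-hull version, so the positive signs asserted in your Steps~1 and~3 do not follow. If you instead write $\omega(s)+d_s=H_i$ for $s\in S_i$, the same manipulation gives $\pi_A(c^{(i,a)})\cdot d=H_i-(\omega(a)+d_a)$ and $\pi_A(c^{(i,i+1)})\cdot d=H_i-H_{i+1}$. These are exactly the signs in the statement; compare the numbers in \Cref{ex:hexagonineqs}.

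In Step~5 you correctly note that passing from strict to non-strict inequalities needs $C_M^{\circ}(r,A)\neq\emptyset$, which the paper's proof does not address. But ``mixed cells occur for generic $d$'' does not show that this particular $M$ occurs. An explicit witness is easy. Set $d_s=H_i$ for $s\in S_i$ with $H_1>\dots>H_k$, and take $d_a$ very negative for $a\notin\supp(M)$. Then $\omega=0$, and the maximizers of $d$ on each $A_i$ are exactly $S_i$, because $S_j\subset A_i$ for $j\geq i$ and $S_j\cap A_i=\emptyset$ for $j<i$. So every strict inequality holds. With these two repairs your argument is complete, and slightly more careful than the paper's.
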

\begin{proof}
  By the definition of dual tropical roots, the mixed cell $M$ is a
  dual tropical root of $(r,A)$ at $d$ if and only if, for the outer
  normal vector $(\omega,1)\in W^{*}\times \RR^{*}$ of the hyperplane going
  through $S_{1,d}+\hdots +S_{k,d}$, $S_{1,d},\hdots,S_{k,d}$
  coincides with the sequence of the underlying support sets of the
  cancellation $\canc_{(\omega,1)}(r,A)$. This is the case if and only if
  \begin{itemize}
  \item $(\omega,1)$ attains its maximum in $A_{i,d}$ exactly $S_{i,d}$ for every $i\in \irng{1,k}$ and
  \item we have $(\omega,1)(S_{1,d}) > (\omega,1)(S_{2,d}) > \hdots > (\omega,1)(S_{k,d})$.
  \end{itemize}
  The first condition is met if the second condition is met and
  $(\omega,1)$ attains its maximum in $A_{i,d}\setminus (A_{i+1,d}\cup S_{i+1,d})$
  exactly at $S_{i,d}$.

  Just as in the proof of Lemma 4.4 in
  \cite{jensenTropicalHomotopyContinuation2016}, these two conditions are hence met
  if and only if $d$ satisfies the linear inequalities
  \[\pi_A(c^{(i,a)})\cdot d > 0\quad \forall a \in A_i\setminus (A_{i+1}\cup S_i)\forall i\in \irng{1,k}\]
  and
  \[\pi_A(c^{(i,i+1)})\cdot d \geq 0 \quad \forall i\in \irng{1,k-1}.\]
\end{proof}

\begin{remark}
  Note that, in the notation of the proof of \Cref{thm:mixedcone}, if
  $A_i=A\setminus (S_1\cup \hdots \cup S_{i-1})$ then we only need to consider the inequalities
  \[\pi_A(c^{(i,i+1)})\cdot d \geq 0 \quad \forall i\in \irng{1,k-1}\]
  and
  \[\pi_A(c^{(k,a)})\cdot d \geq 0 \quad \forall a \in A_k\setminus S_k.\]
\end{remark}

\begin{example}
  \label{ex:hexagonineqs}
  We continue with \Cref{ex:hexagon}. A mixed cell for the associated
  MCI $(r,A)$ is given by
  \[M = (S_1,S_2) := (\set{a_1,a_6},\set{a_2,a_3}).\] Let us work out
  the defining inequalities of the mixed cell cone associated to $M$,
  starting with the inequalities coming from $S_1$. In the notation of
  the proof of \Cref{thm:mixedcone} we have $A_1 = A$.  Note that
  $A\setminus (A_2\cup S_2) = \emptyset$. Hence, following the notation of
  \Cref{thm:mixedcone}, we only have to construct the inequality
  $\pi_A(c^{(1,2)})\cdot d \geq 0$ to construct the defining inequalities of
  $C_M(r,A)$ coming from $S_1$.

  This can be done as follows: let $p$ be the
  projection along the linear span of $S_2$. The projection $p(A)$
  looks as follows:
  \begin{center}
    \begin{tikzpicture}
      \draw[line width = 1pt][-, Black] (0,0) -- (3,0) -- (6,0);

      \node[circle, draw=Black, fill=Black, inner sep=0pt, minimum size=0.4em] at (0,0) { };
      \node[above] at (0,0) {$p(a_5) = p(a_6)$};
      \node[circle, draw=Black, fill=Black, inner sep=0pt, minimum size=0.4em] at (3,0) { };
      \node[above] at (3,0) {$p(a_1) = p(a_4)$};
      \node[circle, draw=Black, fill=Black, inner sep=0pt, minimum size=0.4em] at (6,0) { };
      \node[above] at (6,0) {$p(a_2) = p(a_3)$};
    \end{tikzpicture}
  \end{center}
  On $p(A)$ we affine the affine circuits
  \begin{align*}
    & 2p(a_1) - p(a_6) - p(a_2) = 0\\
    & 2p(a_1) - p(a_6) - p(a_3) = 0\\
  \end{align*}
  Lifting either of these circuits of $p(A)$ back to $A$ we obtain the desired
  inequality $\pi_A(c^{(1,2)})\cdot d \geq 0$:
  \begin{equation}
    \label{eq:ineqs1}
    \begin{aligned}
      &2d_1-d_6+d_3-2d_2 \geq 0\\
    \end{aligned}
  \end{equation}
  Finally we list the inequalities coming from $S_2$. The set $A_2$ is
  given by $A_2 = A\setminus S_1$.  Let now $p$ be the projection along the
  linear span of $S_1$. The projection $p(A_2)$ looks as follows:
  \begin{center}
    \begin{tikzpicture}
      \draw[line width = 1pt][-, Black] (0,0) -- (3,0);

      \node[circle, draw=Black, fill=Black, inner sep=0pt, minimum size=0.4em] at (0,0) { };
      \node[above] at (0,0) {$p(a_2) = p(a_5)$};
      \node[circle, draw=Black, fill=Black, inner sep=0pt, minimum size=0.4em] at (3,0) { };
      \node[above] at (3,0) {$p(a_3) = p(a_4)$};
    \end{tikzpicture}
  \end{center}
  For any element $a\in A\setminus S_1\cup S_2 = \set{a_4,a_5}$ we now get another
  defining inequality of $C_M(r,A)$. On $p(A)$ we have the affine
  relations
  \begin{align*}
    &p(a_2)-p(a_5) = 0\\
    &p(a_3)-p(a_4) = 0
  \end{align*}
  Lifting these back to $A$ we obtain the inequalities
  $\pi_A(c^{(2,a_5)})\cdot d \geq 0$ and $\pi_A(c^{(2,a_4)})\cdot d \geq 0$:
  \begin{equation}
    \label{eq:ineqs2}
    \begin{aligned}
      &d_2-d_5+2d_6-2d_1\geq 0\\
      &d_3-d_4+d_6-d_1\geq 0.
    \end{aligned}
  \end{equation}
  The inequalities given in \eqref{eq:ineqs1} and \eqref{eq:ineqs2}
  define the mixed cell cone $C_M(r,A)$.

  As a sanity check, we use the computer algebra system \texttt{Oscar}
  \cite{OSCAR-book} to generate a random vector in this cone. One such
  vector is given by
  \[d = (14, 27, 56, 63, 50, 27)\in \RR^A.\] The outer normal vector of
  the hyperplane going through $S_{1,d} + S_{2,d}$ is then given by
  $\omega = (-16, -13, 1)$. We then have
  \begin{align*}
    (\omega(a_{1,d}),\hdots, \omega(a_{6,d})) = (14, 11, 11, 5, 8, 14).
  \end{align*}
  This shows that indeed $M$ is a tropical root of $(r, A)$ at $d$.
\end{example}

For a mixed cell $M := (S_1,\hdots,S_k)$ denote
$\supp(M) = \bigcup_{i=1}^kS_i$. To give a tropical homotopy continuation
algorithm we next have to describe how the given mixed subdivision
$\mathcal{M}$ at height $d(t) = (1-t)d' + td$ changes when the path
$d(t)$ crosses a facet of a cone $C_M(r,A)$ where $M\in \mathcal{M}$. This is done by
the following corollary and its proof:

\begin{corollary}
  \label{cor:exchangerelation}
  Let $H_c:=\setdes{d\in \RR^A}{c(d) = 0}$ be one of the hyperplanes
  bounding a mixed cell cone as given in
  \Cref{thm:mixedcone}. For $d\in H_c$ and 
  $\varepsilon > 0$ let $\mathcal{M}_-$ be the set of dual tropical roots of
  $(r,A)$ at $d-\varepsilon c$. Similarly let $\mathcal{M}_+$ be the set of dual tropical
  roots of $(r,A)$ at $d + \varepsilon c$. Suppose that
  \begin{itemize}
  \item The number $\varepsilon$ is sufficiently small, in the sense that
    the continuos path between $d - \varepsilon c$ and $d + \varepsilon c$ crosses
    no facet of any mixed cell cone of any mixed cell in $\mathcal{M}_- \cup \mathcal{M}_+$
    except the one defined by $H_c$.
  \item The vector $d$ is sufficiently generic in the sense that it
    lies in the relative interior of the facet defined by $H_c$ for
    any mixed cell cone of any mixed cell in $\mathcal{M}_- \cup \mathcal{M}_+$.
  \end{itemize}

  Then, for every $M\in \mathcal{M}_+$ there is $M'\in \mathcal{M}_-$ such that
  \[\supp(M) \subseteq \supp(M') \cup \supp(c).\]
\end{corollary}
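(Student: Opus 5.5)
We work at the crossing height $d\in H_c$ itself: every mixed cell in $\mathcal{M}_+$ degenerates there to a tropical root whose cancellation pieces are larger, and this degenerate configuration is shared by a cell of $\mathcal{M}_-$. Fix $M=(S_1,\dots,S_k)\in\mathcal{M}_+$ with tropical root $\omega_+$ at $d+\varepsilon c$. By the first genericity assumption, $M$ is a dual tropical root along the whole open segment $(d,d+\varepsilon c]$, and the corresponding root $\omega(s)$ depends linearly on $s$. It is determined by the linear conditions that $(\omega,1)$ be constant on each $S_{i,d(s)}$.

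\emph{Step 1.} Let $s\to 0$ and set $\omega_0$ to be the limit root at $d$. By \Cref{thm:mixedcone} and the second genericity assumption, $d$ lies in the relative interior of a single facet of $C_M(r,A)$. So exactly one defining inequality becomes an equality, namely the one given by $c$ (or by $-c$, depending on orientation). The cancellation $\canc_{(\omega_0,1)}(r,A_d)$ then differs from $M$ only through this single equality. In the case of an inequality $\pi_A(c^{(i,a)})\cdot d\ge 0$, the point $a$ joins the $i$-th block. In the case $\pi_A(c^{(i,i+1)})\cdot d\ge0$, the blocks $S_i$ and $S_{i+1}$ merge into one level. In both cases the support sets of $\canc_{(\omega_0,1)}(r,A_d)$ are contained in $\supp(M)\cup\supp(c)$, and every point in $\supp(c)$ lies in these sets. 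The reason is that $c$ is the affine circuit spanned by precisely the points attaining the new equality.

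\emph{Step 2.} Near $\omega_0$ we apply \Cref{prop:cancel} together with the balancing/continuity argument of \Cref{prop:liftroots}. The multiplicity is carried into $d-\varepsilon c$ by rays of $\trop(r,A_{d-\varepsilon c})$ whose generators are close to $(\omega_0,1)$. Their cancellations are refinements of $\canc_{(\omega_0,1)}(r,A_d)$. Concretely, for $(\omega,1)$ near $(\omega_0,1)$ and heights near $d$, the maximizing sets in each cancellation step are subsets of those for $(\omega_0,1)$ at $d$. By \Cref{thm:mv}, the total volume must be conserved: the local mixed volume of the cancellation at $\omega_0$ equals the sum of the contributions from either side. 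Since $M$ contributes $\vol(M)>0$ on the $+$ side, this local mixed volume is positive. It follows that some dual tropical root $M'\in\mathcal{M}_-$ has its tropical root near $\omega_0$. Indeed, if there were none, the limit from the $-$ side would be zero, contradicting the fact that the local stable intersection of \Cref{prop:cancel}(2) at $\omega_0$ is independent of the side of perturbation.

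\emph{Step 3.} This $M'$ has $\supp(M')\subseteq\supp(M)\cup\supp(c)$, because its blocks are subsets of the degenerate blocks. To obtain the reverse containment in the statement, $\supp(M)\subseteq\supp(M')\cup\supp(c)$, note that $\supp(M)\setminus\supp(c)$ consists of points where the degenerate configuration coincides with $M$ away from the circuit. Any refinement of that configuration must keep these points, since they are forced by rank considerations. The cell $M'$ has codimension $n$, and dropping one of them would lower the rank contributed by the untouched blocks.

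The main obstacle is Step 2, i.e. rigorously producing $M'$ on the $-$ side. The first approach to try is the local version of the stable intersection argument of \Cref{prop:liftroots}, applied in a neighbourhood of $(\omega_0,1)$ to the one-parameter family $A_{d+sc}$. This would mirror Lemma 4.4 and the pivoting argument in \cite{jensenTropicalHomotopyContinuation2016}.
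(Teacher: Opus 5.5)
\textbf{There is a genuine gap in Step 2, which carries the whole existence claim.} What you need there is a \emph{local} conservation law: for small $s>0$, the total multiplicity of the rays of $\trop(r,A_{d+sc})$ near $(\omega_0,1)$ equals the total multiplicity of the rays of $\trop(r,A_{d-sc})$ near $(\omega_0,1)$.

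Neither result you invoke gives this. \Cref{thm:mv} is a global identity at generic heights, and $d$ itself is not generic. \Cref{prop:cancel}(2) describes $\trop(r,A_d)$ near a covector for one fixed height; it says nothing about how rays move when the height is perturbed.

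Global conservation does not localize on its own. One wall $H_c$ can bound the mixed cell cones of several cells whose degenerate roots are different covectors, as in a flip of a triangulation, which affects every simplex containing the circuit. So volume lost near one $\omega_0$ could a priori reappear elsewhere.

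To close this you would need an extra lemma. For instance: for small $s$, the cells of $\mathcal{M}_\pm$ with roots near $\omega_0$ are exactly the dual tropical roots at heights $\pm c$ of the degenerate cancellation $\canc_{(\omega_0,1)}(r,A_d)$, turned into a single MCI via \Cref{rem:mvremarks}; then apply \Cref{thm:mv} to that local MCI. That lemma is essentially the content of the corollary.

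The paper avoids it by constructing $M'$ explicitly. With $C=S_i\cup\{a\}$ or $C=S_i\cup S_{i+1}$, it takes a circuit $S_i'\subseteq C$ of the quotient of $r$ by $S_1\cup\dots\cup S_{i-1}$ with $\sum_{s\in S_i'}c_s<0$. It then replaces the affected block(s) by $S_i'$ and $C\setminus S_i'$, dropping the latter if it is a singleton. This explicit exchange is also what \emph{WallWalk} needs, and a pure existence argument would not supply it.

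\textbf{Step 3 also needs repair.} Your description of $\supp(c)$ is inaccurate: $c$ usually has support in other blocks and need not contain all of $C$. In \Cref{ex:hexagonineqs}, $\pi_A(c^{(2,a_4)})\cdot d=d_3-d_4+d_6-d_1$ involves $a_1,a_6\in S_1$ but not $a_2\in S_2$. So the points you must show survive in $M'$ include points of $C$ itself, and your appeal to the rank of the untouched blocks does not reach them.

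A correct argument runs as follows.
\begin{enumerate}
\item The blocks of $M'$ refining a degenerate block must continue until they span its closure, so the later quotients agree with those of $M$.
\item Hence each untouched $S_j$ reappears unchanged in $M'$, since the only circuit of that quotient contained in the circuit $S_j$ is $S_j$ itself.
\item Therefore the blocks of $M'$ inside $C$ contribute rank $|C|-2$. By disjointness they are either two blocks covering $C$, or a single block $C\setminus\{b\}$.
\item In the latter case, mutual affine independence forces $b\in\supp(c)$. This is because the affine dependences of $C$ modulo the spans of the other blocks form the line spanned by the restriction of $c$.
\end{enumerate}
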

\begin{proof}
  The statement of this corollary is a tautology if
  $M\in \mathcal{M}_-$ as well.  Suppose now that this is not the case and let
  $M := (S_1,\hdots,S_k)$. Since we have $M\notin \mathcal{M}_+$, but
  $M\in \mathcal{M}_-$, there is a minimal $i\in \irng{1,k}$ such that, potentially
  after multiplying $c$ by a positive scalar,
  $\sum_{s\in S_i}c_s = 1$. Now one of the following two situations can
  arise: Using the notation of the proof of \Cref{thm:mixedcone},
  either $c = \pi_A(c^{(i,i+1)})$ or there is
  $a\notin A_i\setminus A_{i+1}$ such that $c = \pi_A(c^{(i,a)})$. Let
  $C := S_i\cup S_{i+1}$ in the first situation and
  $C := S_i\cup \set{a}$ in the second situation. We define
  $i_{\text{next}} = i+1$ in the first situation and
  $i_{\text{next}} = i + 2$ in the second situation.

  We now construct the desired mixed cell $M'\in \mathcal{M}_-$: We choose the
  first $i-1$ components of $M'$ to be $S_1,\hdots,S_{i-1}$.  Now
  choose $S_j'\subset C$ as any affinely independent circuit of the matroid
  quotient of $r$ by $S_1\cup \hdots \cup S_{i-1}$ with
  $\sum_{s\in S_i'} c_s < 0$. Such a circuit $S_i'$ exists in either
  situation and the complement $C\setminus S_i'$ is then an affinely
  independent circuit of the matroid quotient of $r$ by
  $S_1\cup \hdots \cup S_{i-1}\cup S_i'$. Now either $C\setminus S_i'$ is of
  cardinality at least two in which case we set
  \[M':= (S_1,\hdots,S_{i-1},S_i',C\setminus S_i',S_{i_{\text{next}}},\hdots,S_k)\]
  or this is not the case and then we define
  \[M':= (S_1,\hdots,S_{i-1},S_i',S_{i_{\text{next}}},\hdots,S_k).\]
  By the conditions on $\varepsilon$ and $d$ stated in the corollary,
  $M'$ is a mixed cell of $(r,A)$ with $M'\in \mathcal{M}_+$.
\end{proof}

The results of this section so far are summarized in \Cref{alg:thc}, giving
a tropical homotopy continuation algorithm for MCI's. The algorithmic core
of \Cref{alg:thc} is the procedure \textit{WallWalk}, which essentially follows
the proof of \Cref{cor:exchangerelation} and which is responsible for modifying
the current mixed subdivision when the considered homotopy path crosses a facet
of a mixed cell cone.

\begin{example}
  We continue with \Cref{ex:hexagonineqs}. For the mixed cell
  \[M = (S_1,S_2) = (\set{a_1,a_6}, \set{a_2,a_3}),\]
  one the inequalities of the associated mixed cell cone $C_M(r,A)$
  is given by
  \[2d_1 - d_6 + d_3 - 2d_2 \geq 0.\] In the notation of
  \Cref{thm:mixedcone}, this is the inequality
  $\pi_A(c^{(1,2)}) \cdot d \geq 0$. When the corresponding hyperplane
  $\pi_A(c^{(1,2)}) \cdot d = 0$ is crossed, in the sense of
  \Cref{cor:exchangerelation}, $M$ is replaced as follows, using the
  notation of the proof of \Cref{cor:exchangerelation}: The set
  $C$ is given by $C = \set{a_1,a_2,a_3,a_6}$. We have to identify
  the circuits of $r$ contained in $C$ such that the corresponding
  sum of coefficients of $\pi_A(c^{(1,2)})$ is negative. The only
  such circuit is given by $\set{a_2,a_3}$. Consequently, the mixed
  cell $M$ is replaced by
  \[M' := (\set{a_2,a_3}, \set{a_1,a_6}).\]
\end{example}

\begin{algorithm}
  \caption{Tropical Homotopy Continuation for MCI's}
  \label{alg:thc}
  \raggedright

  \begin{description}
  \item[Input] An MCI $(r,A)$, a vector $c\in \RR^A$ bounding a mixed cell cone, the set $\mathcal{M}_{-}$ of dual tropical roots of
    $(r,A)$ at $d-\varepsilon c$ for generic $d$ with $d\cdot c = 0$ and $\varepsilon>0$ sufficiently small, in
    the sense of \Cref{cor:exchangerelation}.
  \item[Output] The set of dual tropical roots $\mathcal{M}_+$ of $(r,A)$ at $d + \varepsilon c$.
  \end{description}

  \begin{pseudo}
    \kw{function} \fn{WallWalk}((r,A),c,\mathcal{M}_-)\\+
    $\mathcal{M}_+\gets \emptyset$\\
    for $M:=(S_1,\ldots,S_k)\in \mathcal{M}_-$\\+
    if $c$ \tn{defines a facet of} $C_M(r,A)$\\+
    if $c = \pi_A(c^{(i,a)})$ \tn{for some} $a\in A$\\+
    $C \gets S_i\cup \set{a}$\\
    $i_{\text{next}}\gets i + 1$\\-
    else\\+
    $C\gets S_i\cup S_{i+1}$\\
    $i_{\text{next}}\gets i + 2$\\-
    for \tn{any circuit} $S_i'$ \tn{of the matroid quotient of $r$ by $S_1\cup \ldots \cup S_{i-1}$} \tn{with} $\sum_{s\in S_i'}c_s< 0$\\+
    if $|C\setminus S_i'| \geq 2$\\+
    $\mathcal{M}_+\gets \mathcal{M}_+\cup \set{(S_1,\hdots,S_{i-1},S_i',C\setminus S_i',S_{i_{\text{next}}},\hdots,S_k)}$\\-
    else\\+
    $\mathcal{M}_+\gets  \mathcal{M}_+\cup \set{(S_1,\hdots,S_{i-1},S_i',S_{i_{\text{next}}},\hdots,S_k)}$\\---
    else\\+
    $\mathcal{M}_+\gets \mathcal{M}_+\cup \set{M}$\\--
    return $\mathcal{M}_+$
  \end{pseudo}

  \begin{description}
  \item[Input] An MCI $(r,A)$, a sufficiently generic $d'\in \RR^A$,
    the set $\mathcal{M}'$ of dual tropical roots of $(r,A)$ at
    $d'$, a sufficiently generic vector $d\in \RR^A$.
  \item[Output] The set of dual tropical roots $\mathcal{M}$ of $(r,A)$ at $d$.
  \end{description}
  \begin{pseudo}
    \kw{function} \fn{Homotopy}((r,A),d',\mathcal{M}',d)\\+
    $\mathcal{M}\gets \mathcal{M}'$\\
    $t_{\text{curr}}\gets 0$\\
    while \tn{there is} $t\in [t_{\text{curr}},1]$ \tn{with} $(1-t)d'+td$ \tn{on facet of a cone} $C_M(r,A)$, $M\in \mathcal{M}'$\\+
    $t_{\text{curr}}\gets $ \tn{the minimal such} $t$\\
    $c\gets$ \tn{defining equation of the crossed facet}\\
    $\mathcal{M}\gets \fn{WallWalk}((r,A),c,\mathcal{M})$\\-
    return $\mathcal{M}$
  \end{pseudo}
\end{algorithm}

\begin{remark}
  \label{rem:sympert}
  The correctness of the function \textit{Homotopy} defined in
  \Cref{alg:thc} follows from \Cref{cor:exchangerelation} and its
  proof as long as we can guarantee that $d'$ and $d$ are generic enough
  in the sense of \Cref{thm:mv} and so that the path $(1-t)d' + td$,
  $t\in [0,1]$, crosses only the relative interior of facets of mixed
  cell cones. This is, probabilistically, ensured if we take a general
  vector $v\in \RR^A$ and consider the path
  $(1-t)(d'+\varepsilon v) + td$ instead where $\varepsilon$ is sufficiently
  small. Instead of choosing a concrete $\varepsilon$ we work in practice with
  vectors $d'$ over the {\em dual numbers}
  $\DN := \RR[\varepsilon]/\langle \varepsilon^2\rangle$ instead which are totally ordered by
  $a + b\varepsilon < c + d\varepsilon$ iff $a < c$ or $a = b$ and $b < d$.  Another
  method for ensuring the required of $d$ and $d'$ is
  outlined in Section 6.1 of
  \cite{jensenTropicalHomotopyContinuation2016}.
\end{remark}

A natural question regarding \Cref{alg:thc} is the following: How can
we compute a mixed subdivision of an MCI $(r,A)$ at a height vector
$d\in \RR^A$ if no mixed subdivision at another height vector $d'$ is
available? To solve this problem we now transport and extend the
method described in Section 7.1 of
\cite{jensenTropicalHomotopyContinuation2016} to our setting. It
reflects a typical usage of homotopy methods in numerical polynomial
system solving, in the sense that we try to identify another MCI $(r',A')$
so that a mixed subdivision of $(r',A')$ is ``easy to compute'' and can
be deformed to obtain a mixed subdivision of $(r, A)$ at height $d$.

\begin{notation}
  \label{not:startsystem}
  Let $(r',B')$ be another MCI with $B\subset \ZZ^n$ and $r'(B) = n$. Suppose
  that the convex hull of $A$ is contained in the convex hull of $B$.
  Define the matroid $r''$ on $A\sqcup B$ as $r''(S) := \max\set{r(S\cap A) + r'(S\cap B),n}$.
  For a subset $S\subset A\sqcup B$ let $\mathbf{1}_S\in \RR^{A\sqcup B}$ be the vector which
  has entry $1$ at all $s\in S$ and entry $0$ everywhere else.
\end{notation}

With this notation we have
\begin{lemma}
  \label{lem:startsystem}
  Let $\mathcal{M}'$ be the dual tropical roots of $(r',B)$ at some height
  $v\in \RR^{B}$. Then for $\varepsilon > 0$ sufficiently small we have:
  \begin{enumerate}
  \item \label{stone} The set of dual tropical roots of $(r'',A\sqcup B)$ at height
    $\mathbf{1}_B + \epsilon \cdot (0,v)\in \RR^{A\sqcup B}$ coincides with $\mathcal{M}'$.
  \item \label{sttwo} Choose generic $w\in \RR^{A\sqcup B}$, in the sense of
    \Cref{thm:mv}, and let $w_A$ be the projection of $w$ to
    $\RR^A$. Let $\mathcal{M}''$ be the set of dual tropical roots of
    $(r'',A\sqcup B)$ at height
    $\mathbf{1}_A + \epsilon \cdot (w,0)\in \RR^{A\sqcup B}$. Then
    \[\mathcal{M}:=\setdes{(S_1,\dots,S_k)\in \mathcal{M}''}{S_j\subset A\;\forall j\in \irng{1,k}}\]
    is the set of dual tropical roots of $(r,A)$ at height $w_A$.
  \end{enumerate}
\end{lemma}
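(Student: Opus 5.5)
The plan is to compare, for each candidate tropical root $\omega$, the $(\omega,1)$-cancellation of the lifted MCI $(r''_h,(A\sqcup B)_h)$ at the relevant height $h$ with the cancellation of $(r',B)$, resp.\ $(r,A)$, at the corresponding small height. Throughout I read $r''$ as the rank-$n$ truncation of the direct sum, $r''(S)=\min\set{r(S\cap A)+r'(S\cap B),n}$. With the $\max$ as printed, $r''$ is not a matroid, and only the truncation makes sense here. The key observations about $r''$ are these. Restricted to $2^B$ it agrees with $r'$, and restricted to $2^A$ it agrees with $r$, because $r'(B)=r(A)=n$ and the truncation is invisible on subsets of a single factor. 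Moreover, once the elements of one factor that have already been chosen have rank $n$, the matroid quotient of $r''$ by them kills every remaining element. So the set $B$ of \Cref{def:canc} becomes empty and the cancellation stops. Hence a cancellation whose successive maximizer sets all lie in one factor never sees the other factor.

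For \cref{stone}, I first note a scaling fact. Let $\omega_0$ be a tropical root of $(r',B)$ at $v$. Then $\varepsilon\omega_0$ is a tropical root of $(r',B)$ at $\varepsilon v$, and also at $\mathbf{1}_B+\varepsilon v$, since adding a constant height does not change cancellations. Now evaluate $(\varepsilon\omega_0,1)$ on the lifted points. On the lifted $B$ points it takes the values $1+O(\varepsilon)$, and on the lifted $A$ points it takes the values $\varepsilon\omega_0(a)$, which are $O(\varepsilon)$. So for $\varepsilon$ small every $B$ point strictly beats every $A$ point. By the observation above, the whole cancellation of $(r'',\cdot)$ coincides with that of $(r',B_{\mathbf{1}+\varepsilon v})$, and it terminates after rank $n$ is reached inside $B$. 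This gives $\mathcal{M}'\subseteq$ the set of dual tropical roots at height $\mathbf{1}_B+\varepsilon(0,v)$.

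For the reverse inclusion I would argue as follows. Let $\omega$ be a tropical root with a dual root $M=(S_1,\dots,S_k)$, and suppose some lifted $A$ point beats some remaining $B$ point at some step. Since $\Conv(A)\subseteq\Conv(B)$, we have $\max_A\omega\le\max_B\omega$. Hence the first maximizer set $S_1$ lies in $B$. The difficulty is the later steps, where the remaining $B$ points may have low $\omega$-values. Here I would use a counting argument instead of a pointwise one. The volumes of the cells in $\mathcal{M}'$ sum to $\mv(r',B)$ by \Cref{thm:mv}. By \Cref{thm:mv} applied to $(r'',A\sqcup B)$, the total volume of all its dual roots equals $\mv(r'',A\sqcup B)$, and this is independent of the height. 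So it suffices to show $\mv(r'',A\sqcup B)=\mv(r',B)$. I would prove this equality through \Cref{prop:liftroots}: deform the height to $\mathbf{1}_B$, where the only tropical root direction meeting the relevant halfspace is governed entirely by $(r',B)$. Since every mixed cell has positive volume, there is then no room for additional cells.

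I expect this step, excluding dual roots that mix $A$ and $B$ points at later cancellation stages, to be the main obstacle. If the counting route is too indirect, my fallback is a direct convexity argument. This would show that the hyperplane through $S_{1,h}+\dots+S_{k,h}$ has a normal $\omega$ with $|\omega|=O(\varepsilon)$, using the fact that all lifted $B$ heights differ by $O(\varepsilon)$. The $1$ versus $O(\varepsilon)$ gap then settles every step at once.

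For \cref{sttwo} the roles are swapped. The lifted $A$ points have heights $1+\varepsilon w_A$ and the $B$ points have height $0$. For a tropical root $\omega_0$ of $(r,A)$ at $w_A$, I take $\omega=\varepsilon\omega_0$. The $A$ values are then $1+O(\varepsilon)$ and the $B$ values are $O(\varepsilon)$. Hence every maximizer set lies in $A$, $r''|_A=r$, and the cancellation stops at rank $n$, so $\omega$ yields a dual root of $(r'',\cdot)$ all of whose components lie in $A$. Conversely, let $M\in\mathcal{M}''$ have all $S_j\subseteq A$. At each step the maximizer set lies in $A$, and by genericity of $w$ there are no ties with $B$ points. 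So deleting $B$ changes neither the maximizer sets nor, since $r''|_A=r$, the successive quotients. Therefore $M$ is a dual root of $(r,A)$ at $\mathbf{1}_A+\varepsilon w_A$. Finally, I would note that subtracting the constant $\mathbf{1}_A$ and rescaling by $1/\varepsilon$ preserve dual roots, so $M$ is a dual root of $(r,A)$ at $w_A$.
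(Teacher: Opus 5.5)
Your reading of $r''$ as the rank-$n$ truncation is right, and so are your argument for item~\ref{sttwo} and the forward inclusion in item~\ref{stone} (including the rescaling $\omega=\varepsilon\omega_0$, which the paper's proof glosses over). The gap is the reverse inclusion in item~\ref{stone}. Neither of your routes can close it, because for general $r'$ both the equality $\mv(r'',A\sqcup B)=\mv(r',B)$ and item~\ref{stone} itself are false.

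Take $n=2$ and let $(r',B)$ be the hexagon MCI of \Cref{ex:hexagon}, writing $b_i$ for the vertex called $a_i$ there, so $\mv(r',B)=3$. Let $A=\set{a_1,\dots,a_6}$ be a second copy of the same six points carrying the uniform matroid $u_{A,2}$. Realize $r''$ by two generic combinations of the rows of a block-diagonal realization of $u_{A,2}\oplus r'$, with a generic block for $u_{A,2}$. The resulting ECI is then a pair of generic polynomials supported on the hexagon, so $\mv(r'',A\sqcup B)=6$.

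Concretely, take $v=(v_1,\dots,v_6)\in\RR^B$ with $v_1<v_6$ and $v_5-v_2<2(v_6-v_1)$, and put $\omega=(-1-\varepsilon v_2,\ \varepsilon(v_1-v_6))$. At height $\mathbf{1}_B+\varepsilon(0,v)$ the cancellation runs as follows.
\begin{enumerate}
\item The first step selects exactly $\set{b_1,b_6}$, at value $1+\varepsilon v_1$; every other point has value $O(\varepsilon)$ or below.
\item All other points are non-loops of the rank-$1$ quotient.
\item In the second step $a_1$ and $b_2$ tie at value $0$. The points $a_6$ and $b_5$ have negative values of order $\varepsilon$, and all remaining points have values near $-1$ or below.
\end{enumerate}
So $(\set{b_1,b_6},\set{a_1,b_2})$ is a dual tropical root at that height, of volume $1$, and it is not in $\mathcal{M}'$.

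Your fallback fails on the same example. The bound $|\omega|=O(\varepsilon)$ follows only once you know all $S_i\subset B$, and here $\omega\approx(-1,0)$. Convexity controls only the first step, exactly as you suspected.

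The paper's proof has the same hole. It simply asserts that $\Conv(A)\subseteq\Conv(B)$ forces every component of a dual root into $B$, which this example refutes.

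The statement does hold in the case $r'=u_{B,n}$ used in \Cref{alg:msd}. There every circuit of $r''$ contained in $B$ has $n+1$ elements, so $S_1$ already has rank $n$ and every dual root is a single cell $(S_1)$. Your first-step argument ($\max_A\omega\le\max_B\omega$, plus the height gap of $1$ versus $O(\varepsilon)$) then completes the proof. In that case your count is also correct: the $B$-parts of the equations are generic, and both mixed volumes equal $n!\vol(\Conv(B))$.

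For non-uniform $r'$, which \Cref{alg:suppfunc} also feeds into this lemma, item~\ref{stone} needs an additional hypothesis.
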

\begin{proof}
  If $M'$ is a dual tropical root of
  $(r', B)$ at height $v$, with corresponding tropical root
  $\omega'\in (\RR^n)^{*}$, then $M'$ is the sequence of
  support sets of the cancellation $\canc_{(\omega',1)}(r'_v,B_v)$. This
  remains true for the cancellation
  $\canc_{(\omega',1)}(r''_{d_0},(A\sqcup B)_{d_0})$ where
  $d_0 := \mathbf{1}_B + \epsilon \cdot v$. As $r''(S) = r'(S)$ for any subset
  $S\subset B$, $M'$ is a dual tropical root of $(r'',A\sqcup B)$ at height
  $d_0$. The condition $\operatorname{conv}(A)\subseteq \operatorname{conv}(B)$ forces that any dual
  tropical root $M''$ of $(r'',A\sqcup B)$ at height $d_0$ consists of a sequence
  of subsets of $B$. Again by the definition of $r''$, $M''$ then also is
  a dual tropical root of $(r',B)$. This proves item \ref{stone}.

  Item \ref{sttwo} is proven similarly.
\end{proof}

This lemma can be applied in particular by choosing $(r',B)$ as
$(u_{A,n},A)$ where $u_{A,n}$ is the uniform matroid of rank $n$ on
$A$, i.e. for every subset $S\subset A$ we have
$u_{A,n}(S) = \min\set{|S|,n}$. In this case the set of dual tropical
roots of $(u_{A,n},A)$ at some generic height $v\in \RR^A$ is simply
given by the maximal cells of the regular triangulation of $A$ at
height $v$ (see e.g. \cite{gelfand1994} for a definition), as the
circuits of $u_{A,n}$ are exactly the subsets of $A$ of cardinality
$n + 1$. We summarize this observation in \Cref{alg:msd}.

\begin{algorithm}
  \caption{Computing Mixed Subdivisions of MCI's}
  \label{alg:msd}
  \raggedright

  \begin{description}
  \item[Input] An MCI $(r,A)$, a height vector $d\in \RR^A$.
  \item[Output] The set of dual tropical roots of $(r,A)$ at height $d$.
  \end{description}

  \begin{pseudo}
    \kw{function} \fn{msd}((r,A),d)\\+
    $d' \gets $ \tn{any generic height vector in} $\RR^A$\\
    $\mathcal{M}' \gets $ \tn{the regular subdivision of} $A$ \tn{at height} $d'$\\
    $B \gets A$\\
    $r' \gets$ \tn{the uniform matroid of rank $n$ on $B$}\\
    $r''\gets$ \tn{the matroid defined in \Cref{not:startsystem}}\\
    $d_0 \gets \mathbf{1}_B + \varepsilon (0, d') \in \RR^{A\sqcup B}$\\
    $d_1 \gets \mathbf{1}_A + \varepsilon (d, 0) \in \RR^{A\sqcup B}$\\
    $\mathcal{M}'' \gets \fn{Homotopy}((r'',A\sqcup B), d_0, \mathcal{M}', d_1)$\\
    return $\setdes{(S_1,\dots,S_k)\in \mathcal{M}''}{S_j\subset A\; \forall j\in \irng{1,k}}$
  \end{pseudo}
\end{algorithm}

\section{Effective Tropical Elimination for ECI's}
\label{sec:dte}

Let $(V,A)$ be an ECI with $A\subset \ZZ^n\times \ZZ^{k}$ for integers
$k$ and $n$ and let $V\in \CC^{(n+1) \times A}$ be of rank $n + 1$. Let
$I_p$ be a specialization of $(V,A)$ and consider the
corresponding algebraic set
$\Var(I_p)\subset (\CC^{*})^n\times (\CC^{*})^{k}$. Then, if $p$ is
sufficiently generic, the closure of the projection of $\Var(I_p)$ to
$(\CC^{*})^{k}$ is a hypersurface $\Var(g)$ for a certain
$k$-variate Laurent polynomial $g$.

The tropicalization $\trop(g)$ of the ideal $\langle g \rangle$ coincides with the
corner locus of the support function of the Newton polytope of $g$,
which lies in $\RR^{k}$. Elimination commutes with tropicalization,
i.e. $\trop(g)$ is the projection of
$\trop(I_p)\subset \RR^n\times \RR^{k}$ to $\RR^{k}$ \cite[see
e.g.][]{sturmfels2008}. In particular, by \Cref{thm:tropvps}, the
Newton polytope of $g$ only depends on $(V, A)$ if $p$ is sufficiently
generic.

The goal of this section is to give an algorithm which computes this
Newton polytope, given $(V, A)$. We first give an effective formula
for the support function of the Newton polytope of $g$. This support
function formula is then used to extract a {\em vertex oracle} of the
Newton polytope of $g$, i.e. a function that, given some
$\omega\in (\RR^k)^{*}$, returns a vertex of the Newton polytope of
$g$ at which $\omega$ is maximized. This can then be combined with the
algorithm given in \cite{hugginsIB4eSoftwareFramework2006} to compute
the entire Newton polytope of $g$.

\subsection{Mixed Volumes of MCI's upon Symbolic Deformations}
\label{sec:deform}

We start with investigating how the mixed volume of an MCI changes
when this MCI is deformed linearly. This will be necessary for
obtaining the desired vertex oracle from the support function of the
sought Newton polytope.

\begin{definition}[Moving set]
  A \textit{$k$-variate moving set} in $\RR^{n+1}$ is a set of the form
  \[A(\ww) := \setdes{(a, \ell_a)}{a\in A}\] where
  $A\subset \ZZ^{n+1}$ is a finite multiset, and each
  $\ell_a := \ell_a(\mathbf{w})$ is a degree one form in symbolic variables
  $\mathbf{w} := \set{w_1,\dots,w_k}$.
\end{definition}

\begin{notation}
  For a moving set $A(\mathbf{w})$ we denote
  $A(\omega) := \setdes{(a, \ell_a(\omega))}{a\in A}$ for $\omega\in (\RR^k)^{*}$.
\end{notation}

\begin{proposition}
  \label{prop:movset}
  \begin{enumerate}
  \item \label{it:movsimplex} Let $A(\mathbf{w})$ be a $k$-variate moving simplex in
    $\RR^{n+1}$, i.e. suppose that $A(\mathbf{w})$ consists of $n + 2$
    elements. Then the function $\omega \mapsto \vol(\Conv(A(\omega)))$ is linear in a neighborhood
    of some $\omega_0\in (\RR^k)^{*}$ if the affine span of $A(\omega_0)$ has dimension $n + 1$.
  \item \label{it:movset} Let $A(\mathbf{w})$ be a $k$-variate moving
    set and choose $\omega_0\in (\RR^k)^{*}$.  Suppose that there is a
    collection of subsets $\setdes{S_i(\mathbf{w})}{i\in \irng{1,r}}$ of
    $A(\mathbf{w})$ defining a triangulation of $A(\omega)$ for all
    $\omega$ in a neighborhood of $\omega_0$, in the sense that each
    $S_i(\mathbf{w})$ is a moving simplex. Then
    $\omega \rightarrow \vol(\Conv(A(\omega)))$ is linear in a neighborhood of
    $\omega_0$.  This statement holds for every $\omega_0$ inside a dense open
    subset of $(\RR^k)^{*}$.
  \item \label{it:movsetmv} Let $A_i(\mathbf{w})$,
    $i \in \irng{1,n+1}$, be a collection of $k$-variate moving sets. If
    the
    function~$\omega \mapsto \vol(\Conv(\sum_{i\in I}A_{i}(\omega)))$ is linear for every
    $\omega$ in a neighborhood of some $\omega_0\in (\RR^k)^{*}$ for every
    $I\subseteq \irng{1,n+1}$ (see item \ref{it:movset}), then so is the
    function
    $\omega \mapsto \mv(A_1(\mathbf{w}),\ldots, A_{n+1}(\mathbf{w}))$. This statement
    holds for every $\omega_0$ inside a dense open subset of $(\RR^k)^{*}$.
  \end{enumerate}
\end{proposition}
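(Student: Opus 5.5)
Item \ref{it:movsimplex} reduces to a determinant formula. Write $A(\mathbf{w}) = \set{(a_0,\ell_{a_0}),\dots,(a_{n+1},\ell_{a_{n+1}})}$ with $a_j\in\ZZ^{n+1}$; note the points $(a,\ell_a(\omega))$ live in $\RR^{n+2}$, so we read ``volume'' as the $(n+1)$-dimensional volume of the simplex inside its affine span. The cleanest reading takes the moving coordinate as the last coordinate of $a$ itself, the first $n$ coordinates being fixed. Then the points lie in $\RR^{n+1}$ and
\[\vol(\Conv(A(\omega))) = \tfrac{1}{(n+1)!}\left|\det\bigl(p_1(\omega)-p_0(\omega),\dots,p_{n+1}(\omega)-p_0(\omega)\bigr)\right|.\]
Expanding along the moving coordinate row, each entry of that row is a linear form in $\omega$ and the other rows are constant. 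Hence the signed determinant $D(\omega)$ is a linear form in $\omega$. The hypothesis that the affine span at $\omega_0$ has full dimension $n+1$ means $D(\omega_0)\neq 0$. By continuity $D$ keeps its sign near $\omega_0$, so the absolute value equals $\pm D(\omega)$, which is linear. If instead one keeps the $\RR^{n+2}$ reading, I would fall back on a Gram-determinant argument, which is not linear. I therefore commit to the first interpretation, which matches how moving sets are used for support functions of fiber-type polytopes.

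For item \ref{it:movset}, the volume of $\Conv(A(\omega))$ is additive over a triangulation. So near $\omega_0$ we have $\vol(\Conv(A(\omega)))=\sum_i \vol(\Conv(S_i(\omega)))$, and each summand is linear by item \ref{it:movsimplex}, since each $S_i(\mathbf{w})$ is a nondegenerate moving simplex on the neighborhood.

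For the density claim, I would use the combinatorics of triangulations. For each $(n+2)$-subset $T$ of $A$, the set where $T(\omega)$ is degenerate is the zero set of a linear form, so it is either a hyperplane or everything. Whether a given collection of simplices triangulates $A(\omega)$ (say the placing or lexicographic triangulation, or the combinatorial type of $\Conv(A(\omega))$) is governed by signs of finitely many such determinants, the chirotope. The chirotope is locally constant off the union $Z$ of the proper hyperplanes where some non-identically-zero determinant vanishes. On each connected component of the complement of $Z$, a fixed triangulation, for instance the placing triangulation in a fixed order of $A$, has constant combinatorics, because it depends only on the oriented matroid. Determinants that vanish identically correspond to simplices that never appear in such a triangulation. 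So the hypothesis of the first part holds on the dense open set $(\RR^k)^*\setminus Z$.

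For item \ref{it:movsetmv}, I would use the polarization (inclusion–exclusion) formula
\[\mv(A_1(\omega),\dots,A_{n+1}(\omega))=\sum_{\emptyset\neq I\subseteq\irng{1,n+1}}(-1)^{n+1-|I|}\vol\Bigl(\Conv\bigl(\textstyle\sum_{i\in I}A_i(\omega)\bigr)\Bigr).\]
This is a finite signed sum of functions that are linear near $\omega_0$ by hypothesis, so it is linear there. Note that $\sum_{i\in I}A_i(\mathbf{w})$ is again a moving set, since sums of points with linear-form coordinates have linear-form coordinates. Hence the density part of item \ref{it:movset} applies to each of the finitely many $I$. Intersecting finitely many dense open sets gives a dense open set, which proves the final claim.

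The main obstacle is the genericity argument in item \ref{it:movset}: showing that a single triangulation persists on each chamber of the hyperplane-arrangement complement. The approach is to note that the triangulation is determined by the oriented matroid of $A(\omega)$, which is locally constant there.
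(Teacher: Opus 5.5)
Your proof is correct and follows the paper's route: a determinant with a single $\omega$-dependent row and constant sign near $\omega_0$ for the moving simplex, additivity of volume over a triangulation that persists near $\omega_0$, and the polarization formula for the mixed volume. Your chirotope argument justifies the genericity claim in item 2, which the paper only asserts. Off the finite union of hyperplanes where some non-identically-vanishing $(n+2)$-point determinant vanishes, the oriented matroid is constant on each chamber, and hence so is a fixed triangulation.
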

\begin{proof}
  (\textit{Proof of item \ref{it:movsimplex}}) Choose
  $a(\mathbf{w})\in A(\mathbf{w})$ and let $M(\mathbf{w})$ be the matrix
  with columns $a'(\mathbf{w}) - a(\mathbf{w})$,
  $a'\in A(\mathbf{w})\setminus \set{a}$. Then, for any
  $\omega \in (\RR^k)^{*}$, the quantity $\vol(\Conv(A(\omega)))$ coincides with
  the absolute value of the determinant of $M(\omega)$.  This determinant
  depends linearly on $\omega$, as only the last row of $M$ depends on the
  variables $\mathbf{w}$. If $\omega_0$ is as in item \ref{it:movsimplex},
  then in a small neighborhood of $\omega_0$, the sign of this determinant
  is constant, proving the statement.

  (\textit{Proof of item \ref{it:movset}}) This statement follows from item
  \ref{it:movsimplex} as
  \[\vol(\Conv(A(\omega))) = \sum_{i=1}^r \vol(\Conv(S_i(\omega)))\]
  for every $\omega$ in a neighborhood of $\omega_0$. If
  $\omega_0\in (\RR^k)^*$ is sufficiently generic, i.e. lying inside a dense
  open subset of $(\RR^k)^{*}$, then we can find a collection of
  moving simplices $S_i(\mathbf{w})$, $i\in \irng{1,r}$, such that
  $S_i(\omega)$, $i\in \irng{1,r}$, defines a triangulation of
  $A(\omega)$ for every $\omega$ in a small neighborhood of $\omega_0$.

  (\textit{Proof of item \ref{it:movsetmv}}) This statement follows from item \ref{it:movset} and the fact that
  for every $\omega\in (\RR^k)^{*}$ we have
  \[\mv(A_1(\omega),\ldots, A_{n+1}(\omega)) = \frac{1}{(n+1)!}\sum_{I\subseteq
      \irng{1,n+1}}(-1)^{n+1-|I|}\vol(\Conv(\sum_{i\in I}A_{i}(\omega))).\]
\end{proof}

Let now $A(\mathbf{w})$ be a $k$-variate moving set in $\RR^{n+1}$ and
let $r : 2^{A(\ww)} \rightarrow \NN$ be a matroid of rank $n + 1$ on
$A(\ww)$. Slightly abusing notation, we consider $r$ also as a matroid
on $A(\omega)$ for every $\omega\in (\RR^k)^{*}$. From \Cref{prop:movset} we now
obtain
\begin{corollary}
  \label{prop:mvpiecewise}
  Let $\omega_0\in (\RR^k)^{*}$. If there is a mixed subdivision
  $\mathcal{M}$ of $(r, A)$ so that the collection of tuples of subsets
  $\mathcal{M}(\mathbf{w})$ of $A(\ww)$ corresponding to $\mathcal{M}$ remains a mixed
  subdivision of $(r, A(\omega))$ for every $\omega$ in a small neighborhood of
  $\omega_0$, then the function $\omega \mapsto \mv(r, A(\omega))$ is linear in the same
  neighborhood of $\omega_0$. This condition holds for every
  $\omega_0$ in a dense open subset $U$ of $(\RR^k)^{*}$.
\end{corollary}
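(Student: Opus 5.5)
The plan is to reduce the statement to \Cref{thm:mv} combined with \Cref{prop:movset}. The reading is that $\mathcal{M}$ is a mixed subdivision of $(r,A(\omega_0))$ at some fixed generic height $d$, generic in the sense of \Cref{thm:mv}. By hypothesis, the same combinatorial tuples $\mathcal{M}(\mathbf{w})$ form the set of dual tropical roots of $(r,A(\omega))$ at $d$ for every $\omega$ in a neighbourhood $N$ of $\omega_0$. The matroid $r$ is defined on the index set of $A(\mathbf{w})$ and does not move. Hence \Cref{thm:mv} gives
\[\mv(r,A(\omega)) = \sum_{M(\mathbf{w})\in \mathcal{M}(\mathbf{w})}\vol(M(\omega)),\qquad \omega\in N.\]
Linearity of $\omega\mapsto \mv(r,A(\omega))$ therefore reduces to linearity of each summand $\omega\mapsto\vol(M(\omega))$ on $N$, after possibly shrinking $N$.

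Next I would handle a single cell $M(\mathbf{w})=(S_1(\mathbf{w}),\dots,S_k(\mathbf{w}))$. By \Cref{not:volume} and the remark after it, $\vol(M(\omega))$ is the mixed volume of $s_i-1$ copies of $\Conv(S_i(\omega))$ for $i=1,\dots,k$. These are moving simplices whose total dimension is $n+1$. One route is \Cref{prop:movset}, \cref{it:movsetmv}. The more direct route uses mutual affine independence: $\vol(M(\omega))$ equals $\prod_i (s_i-1)!$ times the volume of the Minkowski sum of transversal simplices. This volume is, up to the constant $\prod_i 1/(s_i-1)!$, the absolute value of the determinant of the $(n+1)\times(n+1)$ matrix whose columns are the differences $a'(\omega)-a_i(\omega)$, with $a'\in S_i\setminus\{a_i\}$.

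Only the last coordinate of each column depends on $\omega$, and it does so linearly. So, exactly as in the proof of \cref{it:movsimplex}, the determinant is an affine-linear function of $\omega$ (linear when the $\ell_a$ are forms). Because $M(\omega_0)$ is a mixed cell, the $S_i(\omega_0)$ are mutually affinely independent and span the full dimension, so the determinant is nonzero at $\omega_0$. Its sign is therefore constant near $\omega_0$, and the absolute value is linear there. Summing over the finitely many cells gives linearity of $\mv(r,A(\omega))$ on $N$.

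For the genericity statement, I would argue that for a fixed generic $d$ the set of dual tropical roots of $(r,A(\omega))$ at $d$ is locally constant in $\omega$ off a closed nowhere dense set. By \Cref{thm:mixedcone}, a tuple $M$ is a dual tropical root iff $d$ lies in the open cone defined by the strict inequalities $\pi_A(c^{(i,a)})\cdot d>0$ and the like. Here the affine circuits $c$ depend on $A(\omega)$, and their coefficients are rational functions of $\omega$ obtained by Cramer's rule. The genericity conditions on $d$ from \Cref{thm:mv} (mutual affine independence) are likewise the nonvanishing of finitely many rational functions in $(\omega,d)$.

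Define $U$ as the set of $\omega$ at which none of the finitely many expressions $\pi_A(c)\cdot d$, the circuit denominators, or the relevant determinants vanish. I would also choose $d$ so that these expressions are not identically zero in $\omega$; this is possible since $d$ is generic. The strict inequalities are then open conditions, so the set of roots is constant on each connected component of $U$, and $U$ is dense and open. The main obstacle I expect is this last point: making sure that one height $d$ works uniformly near $\omega_0$, in the sense that no wall of a mixed cell cone passes through $d$ for $\omega$ near $\omega_0$. I would handle it by the rational-function nonvanishing argument above, and I would note that a generic $d$ avoids each of the finitely many hypersurfaces over all but a nowhere dense set of $\omega$.
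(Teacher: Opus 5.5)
Your proposal is correct and follows essentially the same route as the paper. You decompose $\mv(r,A(\omega))$ via \Cref{thm:mv} into cell volumes at a fixed generic height $d$, use local constancy in $\omega$ of the set of dual tropical roots, and get linearity of each cell volume from the determinant argument behind \Cref{prop:movset}. Your direct determinant formula for a Minkowski sum of transversal simplices and your mixed-cell-cone argument for the dense open set $U$ (fixing $d$ before choosing $U$) make explicit what the paper only sketches with its appeal to ``the definition of mixed cells as cancellations of the lifted MCI''.
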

\begin{proof}
  Choose $\omega_0 \in (\RR^k)^{*}$ and let $\mathcal{M}$ be the set of dual tropical
  roots of $(r, A(\omega_0))$ at some arbitrary height vector
  $d\in \RR^{A(\omega_0)}$. By \Cref{thm:mv} we have
  \[\mv(r, A(\omega_0)) = \sum_{(S_1,\dots,S_r)\in \mathcal{M}} \mv(S_1,\dots,S_r).\]
  If $\omega_0$ lies inside a dense open subset of $(\RR^k)^{*}$ then the
  collection of tuples of subsets $\mathcal{M}(\mathbf{w})$ of
  $A(\ww)$ corresponding to $\mathcal{M}$ remains the set of dual tropical roots
  at height $d$ for every $\omega$ in a small neighborhood of
  $\omega_0$. This follows from the definition of mixed cells as
  cancellations of the lifted MCI $(r_d,A(\omega_0)_d)$. The statement
  of the corollary then follows from \Cref{prop:movset}.
\end{proof}

\begin{remark}
  Here, and in the following, we consider MCI's where the underlying
  support sets do not necessarily have integer entries. Of course,
  associated mixed volumes and tropicalizations are still well defined
  by choosing a suitable lattice other than $\ZZ^n$.
\end{remark}

\begin{notation}
  If $M$ is a mixed cell of the MCI $(r, A(\omega_0))$ for some
  $\omega_0\in (\RR^k)^{*}$, then the linear function
  $\omega \mapsto \vol(M(\omega))$, where $M(\ww)$ is the collection of subsets of
  $A(\ww)$ such that $M(\omega_0) = M$, will be denoted by $\svol(M)(\ww)$.
\end{notation}

\subsection{Computing Eliminant Polytopes of ECI's}
\label{sec:elimpol}

With notation as at the start of this section, we now outline a method
to compute the Newton polytope of $g$, starting with its support function: 
\begin{notation}
  Let $\Delta$ be the Newton polytope of $g$ and let $\melim : (\RR^k)^{*} \rightarrow \RR$ be its
  support function.
\end{notation}
Note again that $\Delta$, and hence also $\melim$, only depends on the
input data $(V, A)$.

Out starting point is the formula for $\melim$ given in Theorem 2.14
of \cite{esterovEngineeredCompleteIntersections2025}. We give
here an effective reformulation of this result.

To this end, let $(r,A)$ be the MCI associated to the ECI
$(V,A)$.
\begin{notation}
  Denote by $\pi:\RR^n\times \RR^k\rightarrow \RR^n$ and by
  $p : \RR^n\times \RR^k \rightarrow \RR^k$ the canonical projections.
  For symbolic variables $\ww = (w_1,\dots,w_k)$, define the moving set
  \[A(\ww) := \setdes{(\pi(a),\ww \cdot p(a))}{a\in A},\] and for any
  constant $C\in \ZZ$ define
  \[A_{C} := \setdes{(\pi(a),C)}{a\in A} \subset \ZZ^{n+1}.\]

  For arbitrary $\omega \in (\RR^k)^{*}$ and $C$, let
  $r_{\text{elim}}$ be the matroid defined on $A(\omega) \sqcup A_C$ via
  $r_{\text{elim}}(B) = r(B')$ where $B'$ is the canonical preimage of
  $B$ in $A$.
\end{notation}

Both $A(\ww)$ and $A_C$ are to be understood as multisets, i.e.
$A(\ww)$ and $A_C$ are both in canonical bijection with $A$.

With this notation we have
\begin{theorem}
  \label{thm:msv}
  For any $\omega\in (\RR^{k})^{*}$, let $C\in \ZZ$ be so that 
  \[C\leq \min\left(\set{0}\cup \setdes{\omega(a)-1}{a\in A}\right).\]
  Then
  \[\melim(\omega) = \mv(r_{\text{elim}},A(\omega) \sqcup A_C) - \mv(r_{\text{elim}},A(0) \sqcup A_C).\]
\end{theorem}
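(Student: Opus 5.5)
We plan to reduce the statement to an intersection-number computation on the tropical curve $\trop(r,A)\subset\RR^n\times\RR^k$, followed by an application of \Cref{prop:liftroots} to a suitable lifted MCI. The MCI $(r,A)$ has dimension $n+k-(n+1)=k-1$. The eliminant $g$ satisfies $\trop(g)=p(\trop(r,A))$ with multiplicities given by push-forward, by \Cref{thm:tropvps} and the fact that elimination commutes with tropicalization.

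The first step is to express $\melim(\omega)$ as a stable intersection. For a generic $\omega\in(\RR^k)^*$, the value $\melim(\omega)-\melim(0)$ is not directly a tropical quantity, but $\melim(\omega)-\min_{\Delta}$ type expressions are. Instead we use the standard fact from tropical elimination theory \cite{sturmfelsEliminationTheoryTropical}: $\melim(\omega)$ equals the degree of the stable intersection of $\trop(g)$ with the tropical hypersurface of a generic binomial $1+\mathbf y^{\omega}$ (suitably shifted), up to the constant $\melim(0)=0$ normalization. Pulling back along $p$, this is the stable intersection of $\trop(r,A)$ with $\corn(\max(0,\omega\circ p))$, by the projection formula.

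Second, we realize this intersection number as a mixed volume of an $n$-dimensional (after reduction) MCI. We identify $\RR^n\times\RR^k\to\RR^{n+1}$ via $(x,y)\mapsto(x,\omega(y))$, so that $A$ maps to $A(\omega)$. Adjoining the points $A_C$ with the same matroid $r_{\text{elim}}$ means adding, roughly, the equations of $(V,A)$ once more with all $y$-exponents replaced by the constant $C$. Since $r_{\text{elim}}$ is pulled back from $r$ on $A$, the rank of $A(\omega)\sqcup A_C$ is $n+1$, and the new vertical coordinate plays the role of the extra variable $t$. We then run the argument of \Cref{prop:liftroots}: the tropical curve of $(r_{\text{elim}},A(\omega)\sqcup A_C)$ consists of
\begin{itemize}
\item rays in the upper half-space, coming from $\trop(r,A)$ pushed to $\RR^{n+1}$, whose intersection with the horizontal hyperplane counts $\melim(\omega)$ plus a term independent of $\omega$;
\item rays in the lower half-space, which are forced by the choice $C\le\min(\{0\}\cup\{\omega(a)-1\})$ to be governed by $A_C$ alone;
\item boundary rays.
\end{itemize}
Computing $\mv$ by shifting the hyperplane downwards yields $\mv(r_{\text{elim}},A(\omega)\sqcup A_C)$ expressed through this decomposition. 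Subtracting the same identity at $\omega=0$, where $\melim(0)=0$ and $C$ is still admissible, cancels the $\omega$-independent contribution of $A_C$.

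The main obstacle is the bookkeeping in the second step: we must show that the contribution of the cells mixing $A_C$ with $A(\omega)$ is exactly independent of $\omega$, and that the strict inequality $C<\omega(a)$ prevents $A_C$ from interfering with the upper faces. We will try to handle this via the cancellation description of \Cref{prop:cancel}. In direction $(\eta,1)$ with $\eta$ outside $\trop$, the cancellation first uses $A(\omega)$ fully up to rank $n+1$, so $A_C$ never appears in the upper part. In direction $(\eta,-1)$, $A_C$ dominates and the resulting multiplicity is the $\omega$-independent number $\mv$ of the projected MCI $(r,\pi(A))$ with a rank shift. If this direct approach stalls, the fallback is to invoke Theorem 2.14 of \cite{esterovEngineeredCompleteIntersections2025} and verify that its right-hand side coincides with our difference of mixed volumes via \Cref{rem:mvremarks}.
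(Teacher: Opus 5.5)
\textbf{There is a genuine gap: your main route breaks at the first step.}

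A support function value is not a tropical invariant. The tropical hypersurface $\trop(g)=\corn(\melim)$, and hence anything you can extract from $\trop(r,A)$ via the projection formula, determines $\Delta$ only up to translation. Translating $\Delta$ by $v$ changes $\melim$ by the linear function $\langle\cdot,v\rangle$, and the identity $\melim(0)=0$ holds for every translate, so it normalizes nothing. Consequently $\melim(\omega)$ cannot be the degree of a stable intersection of $\trop(g)$ with any fixed tropical cycle. The natural tropical quantity in direction $\omega$ is the width $\melim(\omega)+\melim(-\omega)$, which is essentially $\mv(r,A(\omega))$, i.e.\ your construction without the $A_C$ part.

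As written, the step also does not produce a number. $\corn(\max(0,\cdot))$ is a hyperplane, and its stable intersection with the $(k-1)$-dimensional fan $\trop(r,A)$ is $(k-2)$-dimensional. Moreover, writing $\mathbf y^{\omega}$ and $\omega\circ p$ treats the weight $\omega\in(\RR^k)^*$ as an exponent.

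The second step is misapplied too. $(r_{\text{elim}},A(\omega)\sqcup A_C)$ has rank $n+1$ in $\RR^{n+1}$, so it is zero-dimensional. Its tropicalization is the origin, not a curve with upper, lower and boundary rays to be cut by a hyperplane. \Cref{prop:liftroots} lifts a base of rank $n$ in $\RR^n$, whereas here the base $(r,\pi(A))$ has rank $n+1$ in $\RR^n$.

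\textbf{What is missing, and how the paper supplies it.} You need a formula that pins down the translate of $\Delta$. The paper goes straight to what you list as a fallback: Theorem 2.14 of \cite{esterovEngineeredCompleteIntersections2025} gives $\melim(\omega)=\frac{1}{(n+1)!}\delta^{n+1}(m_\omega)$. Here $m_\omega(\gamma,t)=\prod_i m_i(\gamma,t\omega)-\prod_i m_i(\gamma,0)$ for $t\ge 0$, and $m_\omega=0$ for $t<0$.

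The verification this needs is not \Cref{rem:mvremarks}. It is a computation of the conewise linear functions of $(r_{\text{elim}},A(\omega)\sqcup A_C)$:
\begin{enumerate}
\item Adjoining $A_C$ adds no equations; each column of $V$ is doubled, so every point of $A_C$ is parallel in $r_{\text{elim}}$ to its partner in $A(\omega)$.
\item For $t\ge 0$ the partner in $A(\omega)$ is at least as high, since $C$ lies below every height $\omega(p(a))$. So the $i$-th function is $m_i(\gamma,t\omega)$.
\item For $t<0$ the copy in $A_C$ dominates, and the $i$-th function is $m_i(\gamma,0)+tC$.
\item Using $C\le 0$, the functions of $(r_{\text{elim}},A(0)\sqcup A_C)$ are $m_i(\gamma,0)$ for $t\ge 0$ and $m_i(\gamma,0)+tC$ for $t<0$.
\end{enumerate}
Hence the difference of the two products of conewise linear functions is exactly $m_\omega$. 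The theorem then follows from $\mv=\frac{1}{(n+1)!}\delta^{n+1}(\prod_i m_i)$ for zero-dimensional MCI's, together with linearity of $\delta^{n+1}$.

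Your remark on cancellations in the directions $(\eta,\pm 1)$ is precisely this computation. Combining it with Theorem 2.14, instead of with your first step, gives the proof.
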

\begin{proof}
  Let $m_1,\dots,m_{n+1}$ be the conewise linear functions associated
  to $(r,A)$. According to Theorem 2.14 in
  \cite{esterovEngineeredCompleteIntersections2025} we have the
  equality
  \[\melim(\omega) = \frac{1}{(n+1)!}\delta^{n+1}(m_{\omega}),\]
  where $m_{\omega}(\gamma,t)$ is defined as
  $\prod_im_i(\gamma,t\omega) - \prod_im_i(\gamma,0)$ for $t\geq 0$ and as
  $0$ for $t < 0$. Here, the notation $\delta^{n+1}(m_{\omega})$ refers to the
  iterated corner locus of the conewise \textit{polynomial} function
  $m_{\omega}$, we refer to \cite{esterovTropicalVarietiesPolynomial2012}
  for details and remark here only that if a conewise polynomial
  function $\tilde{m}$ is the product of piecewise linear functions
  $\tilde{m}_1,\ldots,\tilde{m}_{n+1}$, then
  \[\frac{1}{(n+1)!}\delta^{n+1}(\tilde{m}) = \delta(\tilde{m}_1,\ldots,\tilde{m}_{n+1}).\]

  This now implies the desired statement: Indeed, for $C$ as chosen in
  the theorem, the function $m_{\omega}$ is the difference of the product
  of the conewise linear functions associated to
  $(r_{\text{elim}},A(\omega) \sqcup A_C)$ and the product of the conewise linear
  functions associated to $(r_{\text{elim}},A(0)\sqcup A_C)$. We conclude by noting
  that $\delta^{n+1}$ is a linear operator, with the sum of
  tropicalizations understood as in
  \cite{esterovTropicalVarietiesPolynomial2012}.
\end{proof}

While \Cref{thm:msv}, in combination with \Cref{alg:msd}, immediately
yields an algorithm to evaluate $\melim$ at arbitrary
$\omega\in (\RR^{k})^{*}$, our goal is to construct a {\em vertex oracle} for
$\Delta$, i.e. a way to compute, given an arbitrary generic
$\omega\in (\RR^{k})^{*}$, the vertex of $\Delta$ where $\omega$ is maximized. As
mentioned before, given such an oracle, one can then compute $\Delta$ using
the method given in \cite{hugginsIB4eSoftwareFramework2006}.

Such an oracle can now be provided using \Cref{prop:mvpiecewise}: 

\begin{corollary}
  \label{cor:vorac}
  Let $\omega\in (\RR^{k})^*$ and let $\mathcal{M}$ be a mixed subdivision
  of the MCI $(r_{\text{elim}},A(\omega)\sqcup A_C)$, with $C$ chosen as in
  \Cref{thm:msv}. Then the vertex of $\Delta$ at which $\omega$ is maximized is
  given by the coefficients the linear form
  \[\sum_{M\in \mathcal{M}}\svol(M)-\svol(M)(0).\]
\end{corollary}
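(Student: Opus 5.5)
The plan is to combine \Cref{thm:msv} with the local linearity statement \Cref{prop:mvpiecewise}, and then use the standard fact that a support function is linear near a generic covector, with gradient equal to the maximizing vertex.

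First, since $\melim$ is the support function of $\Delta$, for generic $\omega$ (off the normal fan's walls) the face of $\Delta$ maximized by $\omega$ is a single vertex $v_\omega$. In a neighborhood $N$ of $\omega$ we have $\melim(\omega') = \langle \omega', v_\omega\rangle$. So it suffices to exhibit a linear form $\ell(\ww)$ with $\ell(\omega')=\melim(\omega')$ for all $\omega'$ in some open set accumulating at $\omega$. Its coefficient vector is then $v_\omega$.

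Second, fix $C$ as in \Cref{thm:msv}, valid for $\omega$. The inequality $C\le\min(\{0\}\cup\{\omega(a)-1\})$ is stable under small perturbation if we choose $C$ with a little slack, which is possible since $C$ is an integer we may decrease. This gives, for $\omega'\in N$,
\[\melim(\omega') = \mv(r_{\text{elim}},A(\omega')\sqcup A_C) - \mv(r_{\text{elim}},A(0)\sqcup A_C).\]
Here $A(\ww)\sqcup A_C$ is a $k$-variate moving set, with constant last coordinate on $A_C$. Let $\mathcal{M}$ be the mixed subdivision at a generic height $d$ computed at $\omega$. By \Cref{prop:mvpiecewise} and its proof, after possibly shrinking to a dense open set of $\omega$, the moving cells $\mathcal{M}(\ww)$ remain the dual tropical roots at height $d$ for $\omega'\in N$. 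Then \Cref{thm:mv} gives
\[\mv(r_{\text{elim}},A(\omega')\sqcup A_C)=\sum_{M\in\mathcal{M}}\svol(M)(\omega').\]
Each $\svol(M)$ is linear by \Cref{prop:movset}, via a fixed sign of the relevant determinants on $N$. Substituting, $\melim(\omega')=\sum_M\svol(M)(\omega') - \mv(r_{\text{elim}},A(0)\sqcup A_C)$ on $N$.

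Third, the constant term. Each $\svol(M)(\ww)$ is a linear expression, in general affine, in $\ww$. The sum $\sum_M \svol(M)(0)$ is the value at $0$ of the linear extension, not $\mv(r_{\text{elim}},A(0)\sqcup A_C)$ itself. However, $\melim$ is positively homogeneous, so the affine function $\omega'\mapsto\melim(\omega')$ on the cone spanned by $N$ must be linear. Hence its constant term vanishes, and it equals $\sum_M(\svol(M)(\ww)-\svol(M)(0))$. The coefficients of this form are therefore $v_\omega$, as claimed.

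The main obstacle is the stability in the second step. We must show that for generic $\omega$ one fixed mixed subdivision $\mathcal{M}$ (at a fixed generic height $d$) stays valid under perturbing $\omega$, and that the determinant signs in each $\svol(M)$ do not flip. I would handle this exactly as in the proof of \Cref{prop:mvpiecewise}. Validity of $M$ is an open condition, namely the strict inequalities of \Cref{thm:mixedcone}, whose circuit vectors depend continuously on $\omega$. Nonvanishing of the volumes holds on a dense open set. For nongeneric $\omega$ where $\Delta$ has a higher-dimensional face, the statement should be read as giving \emph{a} maximizing vertex; this follows by continuity from nearby generic covectors, which I would mention only briefly.
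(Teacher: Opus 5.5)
Your proposal is correct and follows the paper's own argument: combine \Cref{thm:msv} with \Cref{prop:mvpiecewise} to write $\melim$ near $\omega$ as the affine function $\sum_{M}\svol(M)-\mv(r_{\text{elim}},A(0)\sqcup A_C)$, then show that its constant term vanishes. The paper kills the constant by evaluating at a basis of covectors containing $\omega$; your use of positive homogeneity of $\melim$, together with your choice of $C$ with some slack so that \Cref{thm:msv} holds for all nearby covectors, spells out why the constant is zero, which the paper's argument leaves implicit.
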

\begin{proof}
  Let $\mathcal{M}'$ be a mixed subdivision of the MCI
  $(r_{\text{elim}},A_C\sqcup A_0)$.  We define the affine function
  \[m(\mathbf{w}) := \sum_{M\in \mathcal{M}}\svol(M) - \sum_{M'\in \mathcal{M}'}\vol(M').\]
  We can now find a basis $B$ of $(\RR^k)^{*}$ such that $\omega\in B$ and such that, according to \Cref{thm:msv} and \Cref{prop:mvpiecewise}, we have
  $\melim(b) = m(b)$ for every $b\in B$. But this must mean
  that the constant coefficient of $m$ is zero, i.e.
  \[\sum_{M\in \mathcal{M}}\svol(M)(0) = \sum_{M'\in \mathcal{M}'}\vol(M'),\]
  which implies the desired expression.
\end{proof}

\Cref{cor:vorac} can now be coupled with \Cref{alg:thc} in order to
obtain an algorithm for computing vertices of $\Delta$. Assume that we have
computed the set $\mathcal{M}_1$ of dual tropical roots of the MCI
$(r_{\text{elim}},A(\omega_1)\sqcup A_C)$ with
$\omega_1\in (\RR^{k})^{*}$ at some height
$d_1\in \RR^{A(\omega_1)\sqcup A_C}$. We now want to obtain the set of dual
tropical roots $\mathcal{M}_2$ of the MCI
$(r_{\text{elim}},A(\omega_2)\sqcup A_C)$ for another
$\omega_2\in (\RR^{k})^{*}$ at an arbitrary height vector
$d_2\in \RR^{A(\omega_1)\sqcup A_C}$.

We may assume that we can choose $C = 0$ in both cases and that the
$\omega_i$ have only positive entries, otherwise we shift
$A(\omega_i)\sqcup A_C$ by the vector $(0,-C)\in \ZZ^{n+1}$. For a suitably chosen
$\lambda\in \NN$, the convex hull of $A(\omega_2)\sqcup A_0$ is now contained in the
convex hull of $A(\lambda \omega_1)\sqcup A_0$ and $\mathcal{M}$ gives a mixed subdivision of
$A(\lambda \omega_1)\sqcup A_0$ by identifying $A(\omega_1)$ with
$A(\lambda \omega_1)$. This means now that the desired subdivision
$\mathcal{M}_2$ can be computed using \Cref{lem:startsystem}. The results of
this section together with this observation are summarized in
\Cref{alg:suppfunc}.

\begin{algorithm}
  \caption{Computing a vertex of $\Delta$}
  \label{alg:suppfunc}
  \begin{description}
  \item[Input] An ECI $(V,A)$ as above, two covectors
    $\omega_1,\omega_2\in (\RR_{>0}^{k})^{*}$, the dual tropical roots
    $\mathcal{M}_1$ of $(r_{\text{elim}},A(\omega_1)\sqcup A_0)$ at height
    $d_1\in \RR^{A_{\omega_1}\sqcup A_0}$.
  \item[Output] The vertex of $\Delta$ at which
    $\omega_2$ attains its maximum.
  \end{description}
  \begin{pseudo}
    \kw{function} \fn{VertOrac}((V,A),\omega_1, \omega_2, \mathcal{M}_1, d_1)\\+
    $d_2 \gets $ \tn{any generic height vector in} $\RR^{A(\omega_2)\sqcup A_0}$\\
    $\lambda\gets $ \tn{any number in} $\NN$ \tn{so that} $\Conv(A(\omega_2)\sqcup A_0)\subseteq \Conv(A(\lambda \omega_1)\sqcup A_0)$\\
    $r''\gets$ \tn{the matroid on} $A(\omega_2)\sqcup A_0\sqcup A(\lambda \omega_1)\sqcup A_0$ \tn{constructed from} $r_{\text{elim}}$ \tn{as in \Cref{lem:startsystem}}\\
    $\mathcal{M}'\gets \fn{Homotopy}((r'',A(\omega_2)\sqcup A_0\sqcup A(\lambda \omega_1)\sqcup A_0),\mathbf{1}_{A(\lambda \omega_1)\sqcup A_0}+\epsilon d_1,\mathcal{M}_1,\mathbf{1}_{A(\omega_2)} + \epsilon d_2)$\\
    $\mathcal{M}_2 \gets \setdes{(S_1,\dots,S_k)\in \mathcal{M}'}{S_i\subset A(\omega_2)\sqcup A_0 \; \forall i \in \irng{1,k}}$\\
    return $\sum_{M\in \mathcal{M}_2}\svol(M)-\svol(M)(0)$\\
  \end{pseudo}
\end{algorithm}

\begin{remark}
  \label{rem:allverts}
  We observed in practice that using \Cref{alg:suppfunc} to compute
  vertices of $\Delta$ is in general faster than using \Cref{alg:msd} to
  compute the required mixed subdivisions. Furthermore, as mentioned
  before, \Cref{alg:suppfunc} can be combined with the technique given
  in \cite{hugginsIB4eSoftwareFramework2006} to compute all vertices
  of the polytope $\Delta$. The same technique was used e.g. in the
  computation of eliminant polytopes in
  \cite{mohrComputationNewtonPolytopes2025} and
  \cite{roseTropicalImplicitizationRevisited2025}.
\end{remark}

\section{Examples}
\label{sec:examples}

A software implementation of \Cref{alg:thc}, \Cref{alg:msd} and for
computing Newton polytopes of eliminants of ECI's based on
\Cref{alg:suppfunc} and \cite{hugginsIB4eSoftwareFramework2006} is
available at
\begin{center}
  \url{https://github.com/RafaelDavidMohr/MCISubdivisions.jl}.
\end{center}
This implementation is written in the programming language
\texttt{Julia} \cite{bezanson2017} and uses functionality of the
computer algebra system \texttt{Oscar} \cite{OSCAR-book}. Below we
gather several examples to which we have applied this implementation.
All computations detailed below were performed on a single core of an
Intel i7-8665U @ 4.8Ghz CPU, using version 1.12.6 of \texttt{Julia}
and version 1.7.2 of \texttt{Oscar}.

\paragraph*{Chemical Reaction Networks.}

Steady state equations of dynamical systems associated to certain {\em
  chemical reaction networks} are systems of polynomial
equations of the shape
\[F =
  \begin{cases}
    V' \cdot (c_a \cdot \xx^a\;|\; a\in A')^T &= 0\\
    L \cdot \xx - \mathbf{d} &=0,
  \end{cases}
\]
where $A'\subset \ZZ^n$ is a finite multiset, $V'\in \RR^{s \times A}$ is of full row
rank, $c_a$, $a\in A'$, are symbolic parameters, $\xx$ consists of $n$
variables, $L\in \RR^{(n - s) \times n}$ and $\mathbf{d}$ consists of
$n-s$ symbolic parameters. We refer e.g. to
\cite{conradiIdentifyingParameterRegions2017} for a short introduction
on how to construct the equations $F$ from a given chemical reaction
network.

Note that $F$ is almost an ECI, only symbolic parameters in the linear
part of $F$, specified by $L$, are missing.  Such a system $F$ was
referred to as an {\em augmented vertically parametrized system} in
\cite{feliuRootBoundsVertical2026}. In order to obtain an ECI from $F$
on which we can test \Cref{alg:msd}, we simply introduced the missing
parameters in the linear part of $F$ and let $(V, A)$ be the resulting
ECI.

\Cref{tab:reaction} records the results of running our implementation
of \Cref{alg:msd} on a few examples, a selection of which were sourced
from the database \url{odebase.org}
\cite{ludersODEbaseRepositoryODE2022}. The examples labeled
``$k$-site'' in \Cref{tab:reaction}, for some $k$, are described in
\cite{feliuRootBoundsVertical2026}. We record the mixed volume
$\mv(V,A)$ computed by \Cref{alg:thc} as well as the normalized volume
of the convex hull of $A$, which is the root count of a system with
generic coefficients and support $A$, by Kouchnirenko's theorem
\cite{kouchnirenko1976}.

\begin{table}[H]
  \small
\centering
\caption{Results of running \Cref{alg:msd} on ECI's coming from chemical reaction networks}
\begin{tabular}{ |c|c|c|c|c|c| } 
  \hline
  Source of example & $|\xx|$ & $|A|$ & $\vol(A)$ & mixed volume & runtime\\
  \hline\hline
  \href{https://odebase.org/detail/1603}{odebase: example 1} & 13 & 23 & 48 & 25 & 0.2s\\
  \hline
  \href{https://odebase.org/detail/1532}{odebase: example 2} & 19 & 35 & 164 & 45 & 2.7s\\
  \hline
  \href{https://odebase.org/detail/1591}{odebase: example 3} & 27 & 44 & 307 & 166 & 14.1s\\
  \hline
  \href{https://odebase.org/detail/1641}{odebase: example 4} & 35 & 51 & 16 & 16 & 0.8s\\
  \hline
  \cite{grossAlgebraicSystemsBiology2016} & 19 & 26 & 60 & 9 & 0.7s\\ 
  \hline
  $9$-site & 30 & 49 & 64 & 19 & 3.1s\\
  \hline
  $11$-site & 36 & 59 & 89 & 23 & 7.0s\\
  \hline
  $13$-site & 42 & 69 & 118 & 27 & 16.9s\\
  \hline
\end{tabular}
\label{tab:reaction}
\end{table}

The database \url{odebase.org} contains five examples with more than
35 variables. Two of these yield an ECI consisting of linear
equations. The three remaining examples involve respectively 86, 90
and 194 variables. We attempted to run our implementation of
\Cref{alg:msd} on the \href{https://odebase.org/detail/1334}{example}
involving 86 variables: Here already the computation of the regular
triangulation of the convex hull of the underlying support set $A$
needed in \Cref{alg:msd} did not terminate within 2 hours of
computation.

Nonetheless, we note the following improvement compared to the state
of the art: The technique described in
\cite{helminckTropicalMethodSolving2024} for computing mixed volumes
of ECI's was therein reported to take roughly 60 seconds of
computation time on the example coming from
\cite{grossAlgebraicSystemsBiology2016}. The technique described in
\cite{feliuRootBoundsVertical2026} to tropicalize augmented vertically
parametrized systems was therein reported to take roughy 262 seconds
to to compute the mixed volume of the $9$-site reaction network. Both
of these examples are solved by our implementation in under 5 seconds.

A script to repeat the computations necessary to produce
\Cref{tab:reaction} can be found in the \texttt{examples} folder of
the code repository linked above.

\paragraph*{Real Patchworking.}

We next give an example illustrating the real patchworking techniques
discussed in \Cref{sec:realpatch}. Combining those techniques with
\Cref{alg:thc} and \Cref{alg:msd} allowed us to prove the following
result computationally, see also \Cref{fig:cusps}:

\begin{theorem}
  \label{thm:disc}
  There exists a polynomial $f\in \RR[x,y,z]$ of degree 4 such that all
  24 cusp singularities of the discriminant curve of $f$ lie in $\RR^3$, i.e.
  \[\Var(f,\partial_xf,\partial_x^2f) \subset \RR^3.\]
\end{theorem}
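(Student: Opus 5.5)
The plan is to realize the cusp system $f=\partial_xf=\partial_x^2f=0$ as a square ECI and then use \Cref{thm:realrootcount} together with \Cref{thm:rootcounting}.

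\textbf{Setting up the ECI.} Let $A$ be the set of exponents of monomials of degree at most $4$ in $x,y,z$. For $f=\sum_{a\in A}c_a\xx^a$ we have $x\partial_xf=\sum_a a_1c_a\xx^a$ and $x^2\partial_x^2 f=\sum_a a_1(a_1-1)c_a\xx^a$. On the torus, the system $f=x\partial_xf=x^2\partial_x^2f=0$ therefore has the same solutions as the cusp system. It is the ECI $(V,A)$ whose rows are $v_{1,a}=1$, $v_{2,a}=a_1$ and $v_{3,a}=a_1(a_1-1)$. This matrix has rank $3$, since $a_1$ takes the values $0,\dots,4$, so the ECI is square in $n=3$ variables.

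\textbf{Counting roots.} Next compute $\mv(r,A)$ for the associated MCI. Running \Cref{alg:msd} at a generic height should return $24$. We also need the count of $24$ for generic $f$, including possible solutions off the torus. I would handle those either by a separate check or by a generic change of coordinates, which moves all solutions into $(\CC^*)^3$.

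\textbf{Searching for a good height vector.} Search over real height vectors $d\in\RR^A$, for example random integer vectors. For each $d$:
\begin{enumerate}
\item compute the mixed subdivision $\mathcal{M}$ via \Cref{alg:thc} and \Cref{alg:msd};
\item for each dual tropical root $M$, form the cancelled system $f_1^M,f_2^M,f_3^M$ from the row echelon form of $V$, with columns sorted by the $\omega$-degree;
\item read off the cofactor sign vectors $b_i$ and the map $\Lambda$, and count the real solutions of the cell as $0$ or $2^{3-\operatorname{rank}\Lambda}$ using \Cref{thm:rootcounting}.
\end{enumerate}
The signs $\sigma_i(a)$ enter through the sign choices of the coefficients $c_a=\pm t^{d_a}$. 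So the search is really over pairs (height $d$, sign vector $\sigma\in\{\pm1\}^A$), and the target is a total of $24$ real roots. Once the total equals $\mv(r,A)=24$, \Cref{thm:realrootcount} gives some $t_0\gg0$ for which the engineered polynomial $f=\sum_a\sigma_a t_0^{d_a}\xx^a$ has all $24$ cusp points real. Since $24$ is the maximal possible count, it is also the complex count. So $\Var(f,\partial_xf,\partial_x^2f)\subset\RR^3$, provided no solutions lie on coordinate hyperplanes, which holds for generic choices. Finally, check that $\deg f=4$.

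\textbf{Main obstacle.} The hard step is finding a pair $(d,\sigma)$ for which every mixed cell of volume one contributes its full volume in real roots. Cells of larger volume can never do this, because \Cref{thm:rootcounting} gives at most $2^{n-\operatorname{rank}\Lambda}$ real roots, and that is a power of two that need not match the volume. I would therefore first try heights that make the mixed subdivision consist only of unimodular cells, for example heights induced from a unimodular regular triangulation, similar to Viro's patchworking choices of convex heights. Then for each unimodular cell, \Cref{thm:rootcounting} gives exactly one real root provided the sign condition $b\in\operatorname{im}\Lambda$ holds, and $\Lambda$ is then invertible. This turns the problem into finding a $\sigma$ satisfying a system of affine $\FF_2$-conditions, one per cell, which I would solve by linear algebra over $\FF_2$ or by random search, changing $d$ if the system is inconsistent.
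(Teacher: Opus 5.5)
Your framework is the paper's. You take the ECI with rows $\sigma_a$, $a_1\sigma_a$, $a_1(a_1-1)\sigma_a$; putting the signs into $V$ by a column scaling does not change the matroid, and this is exactly what the paper does with the $\pm1$ polynomial $\tilde f$. You then compute a mixed subdivision at a generic height, get per-cell real counts from \Cref{thm:rootcounting}, and apply \Cref{thm:realrootcount} once the total reaches $24=\mv(r,A)$.

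The problem is your ``main obstacle'' paragraph. In an existence proof of this kind all the content sits there, and it rests on two incorrect claims.

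\textbf{Cells of volume larger than one can contribute their full volume.} For a cell $M$, let $L$ be the $3\times 3$ integer matrix of edge vectors $a_{ik}-a_{i0}$. Then $\vol(M)=|\det L|$, and $\Lambda$ is $L$ reduced mod $2$. So $M$ contributes fully iff $|\det L|=2^{3-\operatorname{rank}_{\FF_2}L}$ and $b\in\operatorname{im}\Lambda$. For example, take a cell with three pairs such that the first two pairs lie in slices $\{a_1=j_1\}$, $\{a_1=j_2\}$ with $(y,z)$-edge vectors spanning $\ZZ^2$, and the third pair has $x$-gap $2$. This cell has volume $2$ and $\operatorname{rank}_{\FF_2}L=2$. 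Its last equation reduces to $x^2=c$, which has two real roots when $c>0$. The paper's witness has $12$ cells carrying $24$ real roots. Since no cell carries more real roots than its volume, every cell there contributes fully, and some have volume at least $2$. Your criterion would have discarded it.

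\textbf{Unimodular cells impose no sign conditions.} For a unimodular cell $\det L$ is odd, so $\Lambda$ is bijective and $b\in\operatorname{im}\Lambda$ always holds. Such a cell has exactly one real root for every sign pattern. There is therefore no system of affine $\FF_2$-conditions on $\sigma$ to solve. Your plan really reduces to the claim that some height yields a mixed subdivision of $(r,A)$ made of $24$ unimodular cells, in which case any signs would work. You give no argument that such a height exists.

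A unimodular regular triangulation of $\operatorname{conv}(A)$ does not supply one. The parallel classes of $r$ are the five slices $\{a_1=j\}$. Its mixed cells are edges inside slices combined with segments, triangles or tetrahedra whose vertices lie in distinct slices, not simplices of a triangulation of $A$. Nothing links unimodularity of the one to that of the other.

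The paper avoids this question entirely. It fixes random $\pm1$ signs and samples heights, made generic via dual numbers (\Cref{rem:sympert}). It applies \Cref{thm:rootcounting} to every cell regardless of volume, and after roughly $13000$ samples it obtains the explicit data $(\tilde f, d_r+\varepsilon d_\varepsilon)$ of Appendix~\ref{sec:datadisc}. The proof is the verification of that witness.

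\textbf{A smaller point.} To pass from $24$ real torus solutions to $\Var(f,\partial_xf,\partial_x^2f)\subset\RR^3$, you should not appeal to ``generic choices'': the engineered $f$ at a fixed $t_0$ is a specific polynomial. Instead, $24$ distinct points already attain the B\'ezout number $4\cdot 3\cdot 2$. The refined B\'ezout inequality, $\sum_Z\deg Z\le\prod\deg V_i$ over all components $Z$, then rules out any further component of the intersection, whether on coordinate planes or at infinity. The paper leaves this step implicit.
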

\begin{proof}
  Let $A\subset \ZZ^3$ be the set of exponent vectors of all monomials in the
  variables $x,y,z$ of degree at most $4$. Let $\tilde{f}$ be the
  polynomial defined in Appendix \ref{sec:datadisc}. Let $(V,A)$ be the ECI
  defined by letting the rows of $V$ be the coefficients of the polynomials
  $\tilde{f}$, $x\partial_x\tilde{f}$ and $x^2\partial_x^2\tilde{f}$ and let $(r,A)$ be the
  corresponding MCI.

  Additionally, with the quantities
  $d_{r}$, $d_{\varepsilon}$, as defined in
  Appendix \ref{sec:datadisc}, define
  \[d := d_{r} + \varepsilon d_{\varepsilon} \in \mathbb{D}^A.\] Here
  $\mathbb{D} := \RR[\varepsilon]/\langle \varepsilon^2\rangle$ denotes the ring of dual numbers.  We
  used our implementation of \Cref{alg:thc} to compute the set
  $\mathcal{M}$ of dual tropical roots of $(r,A)$ at height $d$. As explained in
  \Cref{rem:sympert}, \Cref{alg:thc} can compute with height vectors
  whose entries lies in $\mathbb{D}$. The set $\mathcal{M}$ consists of
  $12$ individual cells, each containing either three tuples of pairs
  of elements of $A$ or one tuple and one triple of elements of
  $A$. Next we used \Cref{thm:rootcounting} to compute for each
  $M\in \mathcal{M}$ the number of real solutions of the polynomial system given
  by the cancellation of $(r,A)$ corresponding to $M$ introduced
  above. We further confirmed these real root counts by solving these
  systems using the software \texttt{msolve}
  \cite{berthomieu2021}. The sum of these numbers equals $24$ which
  matches the mixed volume of $(V,A)$. Thus the claim follows from
  \Cref{thm:realrootcount}.
\end{proof}

A script to repeat the computations necessary for the proof of
\Cref{thm:disc} can be found in the \texttt{examples} folder of the
code repository linked above.

Let us explain how we arrived at the quantities defined in Appendix
\ref{sec:datadisc}. The polynomial $\tilde{f}$ was generated as
\[\tilde{f} := \sum_{a + b + c \leq 4}s_{a,b,c}x^ay^bz^d\]
with randomly chosen $s_{a,b,c}\in \set{\pm 1}$. We then computed the set
$\mathcal{M}$ of dual tropical roots of the MCI $(r,A)$ defined in the proof of
\Cref{thm:disc} at randomly generated height vectors
$d\in \mathbb{D}^A$ using \Cref{alg:msd} and \Cref{alg:thc}, checking
each time if the sum of the number of real of roots of the
cancellations of $(r, A)$ at $\mathcal{M}$ is $24$ and stopping the sampling of
random height vectors $d$ if this number is reached. Computing a
single such mixed subdivision $\mathcal{M}$ took a fraction of a second using
our implementation. However, height vectors $d$ producing with an
associated number of real roots equal to 24 seem to be ``rare'': We
needed to sample roughly 13000 height vectors before finding the
height vector $d$ used in the proof of \Cref{thm:disc}.

\paragraph*{$A$-discriminants.}

So-called {\em $A$-discriminants} \cite{gelfand1994} can be computed
as eliminants of ECI's in the setting introduced at the beginning of
\Cref{sec:dte}. Given a set $A\subset \ZZ^n$, the associated $A$-discriminant
is constructed as follows: Let $\xx$ be a set of $n$ variables and introduce
one variable $z_a$ for each $a\in A$. Further define
\[f_{A}(\xx) := \sum_{a\in A} z_a\xx^a.\] The {\em $A$-discriminant
  variety} $\nabla_A$ is then defined as the Zariski closure of
\[\nabla_A^{\circ} := \setdes{z\in \PP^{n-1}}{\Var(f_{A})\text{ has a singular
    point}}.\] When $\nabla_A$ has codimension one, there is an irreducible
polynomial $\Delta_A\in \ZZ[z_a\;|\;a\in A]$ with $\nabla_A = \Var(\Delta_A)$, this polynomial
$\Delta_{A}$ is the {\em $A$-discriminant.}

To compute Newton polytopes of $A$-discriminants using
\Cref{alg:suppfunc} combined with the algorithm given in
\cite{hugginsIB4eSoftwareFramework2006}, as explained in
\Cref{rem:allverts}, we can proceed as follows: Choose one $a_0\in A$,
introduce a variable $z_a$ for each $a\in A\setminus \set{a_0}$ and define
\[\tilde{f_A}(\xx) := \xx^{a_0} + \sum_{a\in A\setminus \set{a_0}} z_a\xx^a.\]
We then let $V$ be the coefficient matrix of the polynomial system
\[\tilde{f_A} = x_1\partial_{x_1}\tilde{f_A} = \ldots = x_n\partial_{x_n}\tilde{f_A}\]
and let $\tilde{A}$ be the set of exponent vectors appearing in this
system.  Note that $V$ is the matrix whose columns correspond to $A$
with a row of ones added at the top, when this matrix has maximal
rank, we obtain an ECI $(V, \tilde{A})$ to which we can apply
\Cref{alg:suppfunc} in combination with
\cite{hugginsIB4eSoftwareFramework2006}. If $\nabla_A$ has codimension one,
then we obtain the Newton polytope of $\Delta_A$, dehomogenized along the
variable $z_{a_0}$.

In \Cref{tab:adisc} we record the results of running this computation
for several different choices of $A$. The sets $A_1$, $A_2$ and $A_3$
mentioned in this table are given in Appendix \ref{sec:adiscdata}, the
last two rows of \Cref{tab:adisc} are indexed by products of
simplices, in this case $A$ corresponds to the set of vertices of the
corresponding product of simplices and the corresponding $A$-discriminant
is a so-called {\em hyperdeterminant} associated to a tensor.

We further record in \Cref{tab:adisc} the timing of using the
techniques described in
\cite{roseTropicalImplicitizationRevisited2025} to compute
tropicalizations of $A$-discriminants. These techniques are
implemented in the \texttt{Julia}-package
\href{https://github.com/kemalrose/TropicalImplicitization.jl}{\texttt{TropicalImplicitization.jl}},
which we ran on the examples recorded in \Cref{tab:adisc}. Note that
\cite{roseTropicalImplicitizationRevisited2025} describes a technique
to obtain a vertex oracle for the Newton polytope of a polynomial
given its tropicalization which can be combined with
\cite{hugginsIB4eSoftwareFramework2006} to compute this Newton
polytope. We record in \Cref{tab:adisc} only the time it took to
compute the tropicalization of the corresponding $A$-discriminant
using \cite{roseTropicalImplicitizationRevisited2025}.

We also attempted to compute the $A$-discriminant where $A$ is the set
of vertices of $\Delta_1 \times \Delta_2 \times \Delta_2$. While running \Cref{alg:suppfunc} a single
time in this instance finishes in a fraction of a second, our implementation
of \cite{hugginsIB4eSoftwareFramework2006} did not finish within 3 hours of
computation, suggesting that the underlying Newton polytope of $\Delta_A$ has a
very large number of vertices.

\begin{table}[H]
  \small
\centering
\caption{Results of running \Cref{alg:suppfunc} to compute Newton polytopes $A$-discriminants}
\begin{tabular}{ |c|c|c|c|c|c| } 
  \hline
   & \Cref{alg:suppfunc} & $\trop(\Delta_A)$ with \cite{roseTropicalImplicitizationRevisited2025} & dimension & \# vertices & \# lattice points\\
  \hline\hline
  $A_1$ & 2.3s & 1.2s & 7 & 45 & 43400 \\
  \hline
  $A_2$ & 5.4s & 3.5s & 9 & 64 & 43329 \\
  \hline
  $A_3$ & 237.5s & 69.1s & 11 & 416 & 7223\\
  \hline
  $\Delta_1\times \Delta_1 \times \Delta_1$ & 0.3s & 1.6s & 7 & 6 & 12\\
  \hline
  $\Delta_1\times \Delta_1\times \Delta_2$ & 7.5s & 65.4s & 11 & 60 & 66\\
  \hline
\end{tabular}
\label{tab:adisc}
\end{table}

\paragraph*{Euclidean Distance Degree.}
\label{edd}

The {\em euclidean distance degree}
\cite{draismaEuclideanDistanceDegree2016} of an algebraic variety $X$
encodes the number of critical points of the squared euclidean
distance function to a generic point outside of $X$. Let
$\CC[\xx^{\pm}]$ be an $n$-variate Laurent polynomial ring and suppose
that $X\subset (\CC^{*})^{n}$ is smooth and cut out by a reduced regular
sequence $f_1,\dots,f_c\in \CC[\xx^{\pm}]$. Introducing a vector of
$n$ new variables $\mathbf{u}$ and $c$ new variables $z_1,\dots,z_c$,
the euclidean distance degree is the number of solutions of the system
\begin{equation}
  \label{eq:edd}
  f_1 = \ldots = f_c = (\mathbf{u} - \xx) - \sum_{i=1}^c z_i \nabla f_i = 0 
\end{equation}
for generic values of $\mathbf{u}$, where $\nabla f_i$ denotes the gradient
of $f_i$.

If the $f_i$ are generic polynomials with fixed Newton polytopes then,
after after replacing the derivatives $\partial_{x} f_i$ for
$x\in \xx$ by the corresponding toric derivatives $x\partial_xf_i$ by
multiplying by $x$, \eqref{eq:edd} is an ECI.

In order to further evaluate the performance of \Cref{alg:suppfunc}
(and thus also of \Cref{alg:msd}) we generated instances of
\eqref{eq:edd} by first choosing the number $n$ of the variables
$\xx$, by then letting $\tilde{A}\subset \NN^n$ be a set of $k$ random
distinct exponent vectors with entries between $0$ and $4$, together
with the vector containing only zeros, and by finally instantiating
equations $f_1,\dots,f_{n-1}$ with random coefficients and support
$\tilde{A}$. We denote such an instance by $[n,k]$ in \Cref{tab:edd}.

In \Cref{tab:edd} we record timings for the following computation: For
each considered instance of \eqref{eq:edd}, we used
\Cref{alg:suppfunc} to compute a single vertex of the Newton polytope
of the eliminant resulting from \eqref{eq:edd} after eliminating the
variables $z_1,\ldots, z_{n-1}$ and all but one of the variables in
$\xx$. This timing is recorded in the fourth column of
\Cref{tab:edd}. Such an elimination computation can be potentially
used to compute the so-called {\em ED discriminant}, see Example 7.2
in \cite{draismaEuclideanDistanceDegree2016}. Note that
\Cref{alg:suppfunc} already requires a mixed subdivision coming from
an application of \Cref{cor:vorac} as an input, which is then tracked
using a homotopy to a mixed subdivision of the target MCI. We computed
such an input mixed subdivision using \Cref{alg:msd}. The time for
this computation is recorded in the fifth column of \Cref{tab:edd}.

\begin{table}[H]
  \small
\centering
\caption{Results of running \Cref{alg:suppfunc} on instances of \eqref{eq:edd}}
\begin{tabular}{ |l|c|c|c|c| } 
  \hline
   & $|A|$ & \# elim. variables & Direct usage of \Cref{cor:vorac} & One run of \Cref{alg:suppfunc}\\
  \hline\hline
  $[3, 9]$ & 34 & 6 & 3.0s & 1.3s\\
  \hline
  $[3, 13]$ & 46 & 6 & 5.7s & 3.3s\\
  \hline
  $[3, 17]$ & 57 & 6 & 14.1s & 5.5s\\
  \hline
  $[3, 21]$ & 70 & 6 & 15s & 5.7s\\
  \hline
  $[4, 5]$ & 29 & 8 & 14.3s & rounding error\\ 
  \hline
  $[4, 7]$ & 37 & 8 & 53.6s & rounding error\\
  \hline
  $[4, 9]$ & 45 & 8 & 87.8s & rounding error\\
  \hline
\end{tabular}
\label{tab:edd}
\end{table}

As shown in \Cref{tab:edd}, the difficulty of computing a
single vertex of an eliminant Newton polytope of ECI's using a direct
application of \Cref{cor:vorac} or \Cref{alg:suppfunc} depends on the
support size of the underlying ECI as well as on the number of
eliminated variables, both of which are recorded in \Cref{tab:edd}. \Cref{tab:edd}
also shows that using \Cref{alg:suppfunc} is generally faster than a direct application
of \Cref{cor:vorac}.

Let us explain the rounding errors recorded in \Cref{tab:edd}:
\Cref{alg:suppfunc} typically manipulates support sets in which one
entry (the one depending on the covector maximized at the vertex to be
computed) in every element is very large compared to the remaining
entries. For efficiency, our implementation of \Cref{alg:msd} relies
on floating point computations and subsequent rounding in order to
compute defining facet inequalities of mixed cell cones. The large
difference between the size of the entries of the support sets
considered in \Cref{alg:suppfunc}, however, tends to produce very
poorly conditioned linear systems of which the facet inequalities are
solutions. This phenomenon produces the numerical rounding errors
recorded in \Cref{tab:edd}. Of course one could improve our software
by falling back to an exact computation of the facet inequalities in
the case that a rounding error is detected.

\newpage

\appendix

\section{Data used in the proof of \Cref{thm:disc}}
\label{sec:datadisc}

\begin{align*}
  \tilde{f} := &-x^4 - x^3y + x^3z - x^3 - x^2y^2 - x^2yz - x^2y - x^2z^2 - x^2z\\
            &- x^2 - xy^3 - xy^2z + xy^2 - xyz^2 - xyz + xy + xz^3 + xz^2 - xz - x + y^4 \\
  &+ y^3z + y^3 - y^2z^2 + y^2z - y^2 - yz^3 - yz^2 + yz - y + z^4 + z^3 - z^2 - z - 1
\end{align*}

\vspace{-0.6cm}
\begin{align*}
  d_{r} := (&-40, -45, -30, -60, 40, 20, -108, -12, -90, -20, -105, 105, -40, 60, -30,\\
            &15, -20, -12, 12, -80, -40, -140, 45, 20, 0, 20, -20, -105, -12, 100,\\
            &-15, 40, -40, -20, -45)\\
  d_{\varepsilon} := (&-46629, -9066, -18529, 28514, 31055, -22373, 569, 24243, -45323,\\
            &-25435, -3540, -4411, 6774, 44154, 10846, -37613, 17184, -45248,\\
            &-26873, -4428, -27715, 20531, 36543, 24816, 31217, 15799, -43881,\\
            &-39483, -23808, 22715, -29401, -31858, -46102, -5886, -42199)
\end{align*}

\section{Data used to produce \Cref{tab:adisc}}
\label{sec:adiscdata}

\[
  A_1 =
  \begin{bmatrix}
    0 & 0 & 0 & 0 & 1 & 1 & 1 & 1 \\
    2 & 3 & 5 & 7 & 11 & 13 & 17 & 19 \\
    19 & 17 & 13 & 11 & 7 & 5 & 3 & 2
  \end{bmatrix}
\]

\[
  A_2 =
  \begin{bmatrix}
    0 & 0 & 0 & 0 & 0 & 1 & 1 & 1 & 1 & 1 \\
    2 & 3 & 5 & 7 & 11 & 13 & 17 & 19 & 21 & 23 \\
    23 & 21 & 19 & 17 & 13 & 11 & 7 & 5 & 3 & 2
  \end{bmatrix}
\]

\setcounter{MaxMatrixCols}{12}
\[
A_3 =
  \begin{bmatrix}
    1 & 0 & 1 & 1 & 1 & 0 & 1 & 0 & 1 & 0 & 1 & 1 \\
    1 & 1 & 1 & 1 & 0 & 0 & 0 & 1 & 0 & 1 & 1 & 0 \\
    1 & 0 & 1 & 0 & 0 & 0 & 1 & 0 & 0 & 1 & 0 & 1 \\
    1 & 0 & 0 & 0 & 1 & 0 & 1 & 1 & 0 & 1 & 1 & 0
\end{bmatrix}
\]

\newpage
\printbibliography
\end{document}